\newcounter{mytcbcounter}
\newtcolorbox[use counter=mytcbcounter]{mybox}[2][]{
    float,
    title={\textbf{\textcolor{black}{Algorithm~\themytcbcounter: #2}}},
    #1, 
    colframe=white!70!gray
    }
\crefname{theorem}{Theorem}{Theorems}
\crefname{figure}{Figure}{Figures}
\crefname{equation}{}{}
\crefname{definition}{Definition}{Definitions}
\crefname{assumption}{Assumption}{Assumptions}
\crefname{corollary}{Corollary}{Corollaries}
\crefname{proposition}{Proposition}{Propositions}
\crefname{remark}{Remark}{Remarks}
\crefname{principle}{Principle}{Principles}
\crefname{lemma}{Lemma}{Lemmata}
\crefname{claim}{Claim}{Claims}
\crefname{table}{Table}{Tables}
\crefname{section}{Section}{Sections}
\crefname{subsection}{Section}{Sections}
\crefname{subsubsection}{Section}{sections}
\crefname{assumption}{Assumption}{Assumptions}
\crefname{appendix}{Appendix}{Appendices}
\numberwithin{equation}{section}
\crefname{condition}{Condition}{Conditions}
\title{Post-selection inference for penalized M-estimators \\  via score thinning}
\author{
Ronan Perry$^{1}$\thanks{Corresponding author: rflperry@uw.edu}, Snigdha Panigrahi$^2$, Daniela Witten$^{1,3}$
\\\\
\small
$^1$ University of Washington, Department of Statistics\\\small
$^2$ University of Michigan, Department of Statistics\\\small
$^3$ University of Washington, Department of Biostatistics
}
\date{\today}
\begin{document}

\maketitle

\doublespacing

\abstract{
We consider inference for M-estimators after model selection using a sparsity-inducing penalty. While existing methods for this task require bespoke inference procedures,  we propose a simpler approach, which relies on two insights:
(i) adding and subtracting  carefully-constructed  noise to a Gaussian random variable with unknown mean and known variance leads to  two \emph{independent} Gaussian random variables; and (ii) both the selection event resulting from penalized M-estimation, and the event that a standard (non-selective) confidence interval for an M-estimator covers its target, can be characterized in terms of an approximately normal ``score variable". We combine these insights to show that --- when the noise is chosen carefully --- there is asymptotic independence between the  model selected using a noisy penalized M-estimator, and the event that  a standard (non-selective) confidence interval on noisy data covers the selected parameter. Therefore, selecting a model via penalized M-estimation (e.g. \texttt{glmnet} in \texttt{R}) on noisy data, and then conducting \emph{standard} inference on the selected model (e.g. \texttt{glm} in \texttt{R}) using noisy data, yields valid inference: \emph{no bespoke methods are required}. Our results require independence of the observations, but only weak distributional requirements.
We apply the proposed approach to conduct inference on the association between sex and smoking in a social network.
}
\noindent%
{\it Keywords:} Selective inference, post-selection inference, lasso, data thinning

\maketitle

\section{Introduction}

In classical statistics, we  collect  data $Y$ in order to conduct inference on  a pre-determined target parameter $\theta_0$. However, in practice, a researcher may use the data to select a data-driven parameter $\theta_0(Y)$, and then conduct inference using the same data $Y$; this is  sometimes referred to as \textit{double dipping}~\citep{kriegeskorte_circular_2009}. In general, if we treat $\theta_0(Y)$ \textit{as if it were fixed} and use classical inference techniques, then standard inferential guarantees (such as Type 1 error control of p-values, and coverage of confidence intervals) will not apply \citep{hotelling1940selection, cox_note_1975, benjamini_selective_2009, berk_valid_2013, fithian_optimal_2017, kuchibhotla2022post, neufeld_inference_2026}. 

In recent years, particular interest has focused on inference on models selected via sparsity-penalized M-estimation, such as the  lasso~\citep{tibshirani1996regression}. The goal is to develop p-values that  control the  Type 1 error and confidence intervals that attain the nominal  coverage \emph{conditional on the selected model}  \citep{fithian_optimal_2017}. We briefly review this line of work in Section~\ref{sec:relatedwork}, and then summarize the contributions of our paper in Section~\ref{sec:contributions}.

\subsection{Existing approaches for inference after model selection}\label{sec:relatedwork}

First, we review methods that are specifically tailored for inference on a model selected via penalized M-estimation.  
The  conditional selective inference proposal of \citet{lee_exact_2016} provides valid inference on linear contrasts of the mean of a Gaussian response vector when these contrasts were selected by the lasso; this is accomplished by characterizing the lasso selection event as a set of linear inequalities, and then deriving the distribution of the estimator conditional on the lasso selection event. 
\citet{tibshirani_uniform_2018} and \citet{tian_asymptotics_2017} extend these results to provide asymptotic guarantees without distributional assumptions on the response vector, and \citet{tian_asymptotics_2017}  allow the number of features $p$ to grow with the sample size $n$. These methods provide valid inference conditional on the event that a specific model was selected via the lasso using all of the data. However, this framework requires analytical characterization of the selection event  for each model selection procedure; furthermore, it lacks power near the boundary of the selection event~\citep{kivaranovic_length_2021}.

To improve power, randomized conditional selective inference methods introduce noise into the selection process. In their pioneering work, \citet{tian_selective_2018} prove a distribution-free selective central limit theorem which they use to derive a selective density by conditioning on having selected a model using an $L_1$-penalty. However, this selective density must be approximated via sampling. In a linear Gaussian model, \citet{panigrahi_exact_2024} derive a numerically-tractable closed-form distribution for inference after the randomized lasso, while \citet{bakshi_selective_2024} extend this to $L_1$-penalized M-estimators. In the context of group-lasso-penalized M-estimators, \citet{huang_selective_2024} derive a selective density which can be numerically approximated. Once again, these approaches require the derivation of a bespoke method for inference for each model selection procedure.
 
We now turn to approaches that split the data into two independent sets, so that the first can be used to select a model and the second to conduct inference. A major advantage of such approaches over the aforementioned (randomized) conditional selective inference approaches is that no bespoke methods for inference are required, i.e., valid inference can be obtained using standard inference methods and software, e.g., \verb=glm= and \verb=glmnet= in \verb=R= \citep{friedman_regularization_2010}.
Provided that multiple independent and identically-distributed random variables are available, sample splitting~\citep{cox_note_1975} is a straightforward option. However, if the observations are not independent then sample splitting is not applicable. Even if they are independent, sample splitting can be unattractive if they are not identically distributed, or if the sample size is very small~\citep{rasines_splitting_2023, dharamshi_generalized_2025}. 
Recent proposals have moved beyond sample splitting: \citet{robins_adaptive_2006} show that a single Gaussian random variable $Y$ with unknown mean and known variance can be split into two independent Gaussians $Y^{(1)}$ and $Y^{(2)}$; this fact was subsequently applied to construct confidence intervals~\citep{tian_selective_2018, rasines_splitting_2023} and to estimate prediction error~\citep{tian_prediction_2020, oliveira_unbiased_2024} after model selection. \citet{leiner_data_2025} further considered a similar decomposition in the Poisson case, while \citet{neufeld_data_2024} and \citet{dharamshi_generalized_2025} generalize this idea to a much broader set of distributions and refer to this generalization as ``data thinning''. 
  
The distributional assumptions required for data thinning are restrictive, and recent work seeks to loosen them in exchange for asymptotic (rather than finite-sample) guarantees for i.i.d. data. \citet{rasines_splitting_2023} show that it is possible to thin the data $Y$ as though it were Gaussian and still obtain asymptotically valid inference on linear contrasts of $\EE[Y^{(2)}]$ when the selection event is that $Y^{(1)}$ lives in a convex set. \citet{lei_discussion_2025} provides asymptotically valid inference on linear contrasts of $\EE[Y^{(2)}]$ when the selection event is any function of the contrasts $X^\top Y^{(1)}$ under certain conditions, e.g., (i) when the selection event is that $X^\top Y^{(1)}$ lives in a convex set, or (ii) a total-variation bound requiring continuous errors $Y - \EE[Y]$; notably, these conditions do not hold in the case of logistic regression. Separately, \citet{panigrahi_exact_2024} and \citet{fry_assumption-lean_2025} demonstrate how to asymptotically thin an estimator, e.g., the OLS estimator, using any available bootstrap procedure with a CLT. However, this does not enable inference on a selected model; instead, it enables inference on a selected parameter in a \emph{fixed} model.

Finally, it bears mention that a line of work by \citet{meinshausen_p-values_2009, zhang_confidence_2014, javanmard_confidence_2014, sengupta_ell-test_2025} uses the lasso as a tool to conduct inference on an underlying sparse, high-dimensional linear model. Thus, their target parameter is \emph{not} a function of the data.

\subsection{Contributions of this paper}\label{sec:contributions}

In this paper, we consider inference on a parameter in a model selected via  penalized M-estimation, e.g., the lasso. We do not make distributional assumptions, nor do we assume identically distributed observations. 
Our developments rely upon a key insight: namely, that both selection and inference hinge upon the same approximately normal \emph{score variable}, $\hat Z_n$.

In greater detail, our insights  are as follows:
\begin{description}[leftmargin=0cm]
   \item[Insight 1: ] Since $\hat Z_n$ is approximately normal, it can be decomposed into two pieces, $\hat Z_n^{(1)}$ and $\hat Z_n^{(2)}$, which are approximately normal and independent, up to appropriately vanishing remainders.
    \item[Insight 2: ] The event that a particular  model is selected via ``noisy" penalized M-estimation is equivalent to an event involving $\hat{Z}_n^{(1)}$. Furthermore, the event that a standard (non-selective) confidence interval  centered around a ``noisy" M-estimator covers its target is equivalent to an event involving $\hat{Z}_n^{(2)}$. 
    \item[Insight 3: ]  Combining Insights 1 and 2, if we select a model using a ``noisy" penalized M-estimator, then  a standard confidence interval centered around a ``noisy" M-estimator attains the nominal coverage, conditional on the selection event. 

\end{description}

Thus, \emph{we can conduct valid post-selection inference for models selected via (noisy) penalized M-estimation using standard inference pipelines.} This is in marked contrast to the methods discussed in Section~\ref{sec:relatedwork}, which require bespoke inference techniques \citep{lee_exact_2016, fithian_optimal_2017, tian_asymptotics_2017, tibshirani_uniform_2018, tian_selective_2018, panigrahi_exact_2024, huang_selective_2024, leiner_data_2025}, stringent distributional assumptions \citep{neufeld_data_2024, dharamshi_generalized_2025}, or estimands that are limited to linear contrasts of i.i.d. outcomes \citep{rasines_splitting_2023, lei_discussion_2025}.

The rest of this paper is organized as follows.
In \cref{sec:selection_and_inference}, we formalize Insight 1.
In \cref{sec:valid_inf}, we formalize Insights 2 and 3. 
In \cref{sec:glms}, we show that our theory can be applied to conduct inference on parameters in generalized linear models selected via $L_1$ penalization. In \cref{sec:applications}, we examine the validity of our method in simulations and apply it to conduct inference after principal components network regression on the \emph{Teenage Friends and Lifestyle Study} dataset. \cref{sec:discussion} concludes with a discussion.

\section{Valid post-selection inference via a score variable}\label{sec:selection_and_inference}

We review the M-estimation setting in Section~\ref{subsec:m_est}, present the score variable $\hat Z_n$ in Section~\ref{subsec:core_stat}, and discuss the construction and properties  of  $\hat Z_n^{(1)}$ and $\hat Z_n^{(2)}$ in Section~\ref{subsec:thin_core}. 

\subsection{M-estimation}\label{subsec:m_est}

Suppose that $Y_1, \dots, Y_n \overset{\mathrm{ind.}}{\sim} F_n$ form a triangular array of independent random variables $Y_i \in \Ycal \subseteq \RR^d$ for some $d > 0$, keeping the dependence of the distribution of $Y_i$ on $n$ implicit. Let $\ell_\theta : \Ycal \mapsto \RR$ be a differentiable loss function parameterized by $\theta \in \RR^p$. We will implicitly allow this loss function to depend on the index $i$, e.g., in the case of fixed-X regression where fixed covariates $X_i$ may be available.

Let $E \subseteq [p]$ denote the indices of any non-empty subset of the parameters. Without loss of generality, we assume that $E = \cbr{1, 2, \dots, |E|}$. Our goal is to conduct inference on the population estimand corresponding to the  model with features indexed by $E$, 
\begin{equation}\label{eq:MstarE}
    \MstarE = \argmin_{\theta \in \RR^{\abr{E}}} \cbr{\frac{1}{n} \sum_{i=1}^n \EE \left[ \ell_{\rbr{\theta, 0_{p-\abr{E}}}} (Y_i) \right] }= \argmin_{\theta \in \RR^{\abr{E}}} \cbr{P_n \EE \ell_{\rbr{\theta, 0_{p-\abr{E}}}}},
\end{equation}
where  the last equality makes use of shorthand that will appear throughout this paper: namely,  $P_n f:=\tfrac1n \sum_{i=1}^n f(Y_i)$, for any function $f$ defined over $\Ycal$.

For fixed model $E$, level $\alpha \in (0, 1)$, and contrast $\xi \in \RR^{|E|}$, consider the function $\CI_n^{\xi,\alpha}: \rbr{\mathbb{R}^{|E|} \times \mathbb{S}_{|E|}^{++}} \mapsto \mathbb{R}^2$, where $\mathbb{S}_{|E|}^{++}$ denotes the space of $|E| \times |E|$ positive definite matrices. For any $\theta^E \in \mathbb{R}^{|E|}$ and $S^E \in  \mathbb{S}_{|E|}^{++}$, we define
\begin{equation}\label{eq:classical_ci}
    \CI^{\xi, \alpha}_n\rbr{\theta^{E}, S^E} := \sbr{\xi^\top \theta^E \pm n^{-1/2}z_{1-\alpha/2}\rbr{\xi^\top S^E \xi}^{1/2}},
\end{equation}
where $z_{1-\alpha/2}$ is  the $(1-\alpha/2)$-quantile of a standard normal distribution. Then, 
for the M-estimator $\tilde \theta^E_n \in \mathbb{R}^{|E|}$ based on the model $E$ and defined as
\begin{equation}\label{eq:MestEnaive}
    \MestEnaive :=
    \argmin_{\theta \in \RR^{\abr{E}}} \cbr{P_n \ell_{\rbr{\theta, 0_{p-\abr{E}}}}},
\end{equation}
and for the sandwich variance estimator $\tilde{S}_n^E := \sbr{P_n \ddot\ell_{\MestEnaive}}^{-1} P_n \sbr{\dot\ell_{\MestEnaive}\dot\ell_{\MestEnaive}^\top} \sbr{P_n \ddot\ell_{\MestEnaive}}^{-1}\in \mathbb{S}^{++}_{|E|}$, it holds that $\CI^{\xi, \alpha}_n\rbr{\MestEnaive, \Shatnaive}$ is a valid confidence interval under typical conditions \citep[Section 5.2.1]{vaart_asymptotic_1998}, in the sense that it attains the nominal (asymptotic) coverage, i.e., 
\begin{equation}\label{eq:naive_coverage}
    \lim_{n \to \infty}\Pr\rbr{\xi^\top \MstarE \in \CI^{\xi, \alpha}_n\rbr{\MestEnaive, \Shatnaive}} \geq 1-\alpha.
\end{equation}

Now, suppose instead that $E$ is random: for instance, it may be the support of  a penalized M-estimator, i.e., $E:= \supp\rbr{\tilde \theta_n^{(\lambda)}}$,  where 
\begin{equation}\label{eq:MestPnaive}
    \MestPnaive
   \in \argmin_{\theta \in \RR^p} \cbr{P_n \ell_\theta + n^{-1/2} \rho_\lambda(\theta)},
\end{equation}
and $\rho_\lambda : \RR^p \mapsto \RR$ is a non-differentiable penalty parameterized by a fixed tuning parameter $\lambda$. For example, using the $L_1$-penalty $\rho_\lambda(\theta) := \lambda \nbr{\theta}_1$ and squared-error loss $\ell_\theta(X_i, Y_i) := \frac{1}{2} (Y_i - X_i^\top \theta)^2$, \cref{eq:MestPnaive} is the popular lasso estimator \citep{tibshirani1996regression}. When $E$ is random, the interval $\CI^{\xi, \alpha}_n\rbr{\MestEnaive, \Shatnaive}$ is no longer guaranteed to achieve the nominal coverage, i.e., \eqref{eq:naive_coverage} does not hold. 
Our goal in this paper is to construct estimators  $\hat \theta^E_n$ and $\Shat$ such that the interval  $\CI^{\xi, \alpha}_n\rbr{\MestE, \Shat}$ attains the nominal coverage, \emph{conditional} on the fact that  $E$ is the support of some penalized M-estimator $\MestP$ that remains to be specified. That is, we wish for the conditional coverage guarantee
\begin{equation}\label{eq:conditional_coverage}
    \lim_{n \to \infty}\Pr\rbr{\xi^\top \MstarE \in \CI^{\xi, \alpha}_n\rbr{\MestE, \Shat}~\middle|~\supp\rbr{\MestP} = E} \geq 1-\alpha.
\end{equation}

\subsection{The score variable}\label{subsec:core_stat}

It should not be surprising that since \cref{eq:MestEnaive} and \cref{eq:MestPnaive} are solutions to a minimization problem, the gradient of the loss function $\ell_\theta$ with respect to $\theta$ plays an important role. In fact, this gradient defines the score variable.
Recalling the definition of $\MstarE$  in \cref{eq:MstarE}, we define
\begin{equation}\label{eq:Mstar}
    \Mstar := (\MstarE, 0_{\abr{E^c}}),
\end{equation}
the expanded $p$-dimensional vector. For any $\theta$, define the gradient
\begin{equation}\label{eq:first_deriv_def}
    \dot\ell_{\theta}(Y_i) := \sbr{\frac{\partial}{\partial \theta_1'} \ell_{\theta'}(Y_i), \dots, \frac{\partial}{\partial \theta_p'} \ell_{\theta'}(Y_i)}^\top \Bigg\rvert_{\theta' = \theta} \in \RR^{p}.
\end{equation}
Under standard conditions for M-estimation (in which $E$ is fixed), the $E$-indices of the score variable
\begin{equation}\label{eq:core_stat}
    \hat Z_n := \sqrt{n} P_n \dot\ell_{\Mstar}
\end{equation}
are asymptotically normal; this property forms the basis for asymptotically valid inference. However, here our interest lies in asymptotically valid inference when the set $E$ arises from model selection. In this setting, we will show that $\hat Z_n$ is notable in that it contains all of the information needed for both selection \emph{and} inference, up to a vanishing remainder.

\subsection{Approximate normality of the thinned score variable} \label{subsec:thin_core}

We will now borrow ideas from 
the literature on data thinning~\citep{rasines_splitting_2023, neufeld_data_2024, dharamshi_generalized_2025} to decompose the score variable $\hat Z_n$ in \cref{eq:core_stat} into two components, which we will then show are approximately normally distributed, in the sense of a Berry-Esseen-type bound \citep{raic_multivariate_2019}, and approximately conditionally independent, in a sense that we will specify. This formalizes Insight 1 in Section~\ref{sec:contributions}. 

Using the notation defined in \cref{eq:first_deriv_def}, we require the following conditions. Let $\mu_{n, i}^{E, *} := \EE \sbr{\dot\ell_{\Mstar}(Y_i)}$ and $\mustar := \frac1n \sum_{i=1}^n \mu_{n, i}^{E, *}$.

\begin{condition}\label{cond:berry_esseen}
    \,
    \begin{enumerate}[(i)]
        \item $\Sigma_n := \tfrac1n \sum_{i=1}^n\Var\rbr{\dot\ell_{\Mstar}}$ is invertible.
        \item $P_n \EE \nbr{\dot\ell_{\Mstar} - \mu_{n, i}^{E, *}}_2^3$ is uniformly bounded above.
    \end{enumerate}
\end{condition}

Under \cref{cond:berry_esseen}, it can be shown that $\hat Z_n$ is approximately $\Norm(\sqrt{n} \mustar, \Sigma_n)$ \citep{raic_multivariate_2019}. We now apply Gaussian data thinning to $\hat{Z}_n$ as though it were exactly normally distributed. That is, for some $\hat \Sigma_n \in \RR^{p \times p}$, we define $\hat W_n \sim \Norm(0, \hat \Sigma_n)$ and
\begin{equation}\label{eq:define_thins}
    \hat Z_n^{(1)} := \hat Z_n + \gamma \hat W_n \quad\text{and}\quad \hat Z_n^{(2)} := \hat Z_n - \frac1\gamma \hat W_n.
\end{equation}
The tuning parameter $\gamma$ determines the amount of noise added to $\hat{Z}_n$ to construct $\hat{Z}_n^{(1)}$ versus $\hat{Z}_n^{(2)}$. When $\gamma = 1$, the same amount of noise is added to each, and the procedure evokes splitting $n$ samples into two equal pieces.

The following Berry-Esseen-type bound, using results from \citet{raic_multivariate_2019}, establishes that $\rbr{\hat Z_n^{(1)}, \hat Z_n^{(2)}}$ are approximately distributed as $\rbr{Z_n^{(1)}, Z_n^{(2)}}$, where
\begin{equation}\label{eq:define_thins_normal}
    Z_n^{(1)} \sim \Norm\rbr{\sqrt{n}\mustar, \rbr{1 + \gamma^2}\Sigma_n} \quad\text{and}\quad Z_n^{(2)} \sim \Norm\rbr{\sqrt{n}\mustar, \rbr{1 + \tfrac{1}{\gamma^2}}\Sigma_n}
\end{equation}
are independent random variables.  Let $\eigmin{\cdot}$ denote the minimum eigenvalue of its argument, a positive semi-definite matrix.

\begin{lemma}\label{lemma:thinning_core_stat}
Suppose that $C \subseteq \RR^{2p}$ is the union of $K$ measurable and almost surely disjoint convex sets. 
Under Condition \ref{cond:berry_esseen}, for any $\Delta \geq 0$, 
\begin{align}
    &\abr{
    \Pr\rbr{\rbr{\hat Z_n^{(1)}, \hat Z_n^{(2)}} \in C} - \Pr\rbr{\rbr{Z^{(1)}_n, Z^{(2)}_n} \in C}
    }\nonumber\\
    &\leq
    \Pr\rbr{\nbr{\hat \Sigma_n - \Sigma_n}_F > \Delta \eigmin{\Sigma_n}}
    + O\rbr{K p^{1/4} n^{-1/2} \eigmin{\Sigma_n}^{-3/2} + \Delta}
    .\label{eq:thinning_core_stat}
\end{align}
\end{lemma}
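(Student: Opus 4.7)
The plan is to reduce the claim to a Berry--Esseen statement on $(\hat Z_n, \hat W_n)$ versus a reference Gaussian pair $(Z_n, W_n)$, then separately control (a) the non-Gaussianity of $\hat Z_n$ and (b) the mismatch between $\hat \Sigma_n$ and $\Sigma_n$. The key observation is that both $(\hat Z_n^{(1)}, \hat Z_n^{(2)})$ and $(Z_n^{(1)}, Z_n^{(2)})$ are the image of a pair $(\bullet, \bullet)$ under the same invertible linear map
$$A := \begin{pmatrix} I_p & \gamma I_p \\ I_p & -\gamma^{-1} I_p \end{pmatrix}.$$
Indeed, $A(\hat Z_n, \hat W_n)^\top = (\hat Z_n^{(1)}, \hat Z_n^{(2)})^\top$ by \cref{eq:define_thins}, and taking $Z_n \sim \Norm(\sqrt n \mustar, \Sigma_n)$ independent of $W_n \sim \Norm(0, \Sigma_n)$ gives $A(Z_n, W_n)^\top \stackrel{d}{=} (Z_n^{(1)}, Z_n^{(2)})$ as in \cref{eq:define_thins_normal}. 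Since $A^{-1}$ preserves convexity, disjointness, and cardinality, $A^{-1}C$ is again a union of $K$ almost-surely disjoint convex sets in $\RR^{2p}$, and it suffices to bound $|\Pr((\hat Z_n, \hat W_n) \in A^{-1}C) - \Pr((Z_n, W_n) \in A^{-1}C)|$.

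Next, I would introduce the intermediate $(Z_n, \hat W_n)$ with $Z_n$ drawn independently of $\hat W_n$, and apply the triangle inequality to split the difference into $T_1 + T_2$, where $T_1$ swaps $\hat Z_n$ for $Z_n$ (holding the noise fixed) and $T_2$ swaps $\hat W_n$ for $W_n$. For $T_1$, I would condition on $\hat W_n$: for every $w$, the slice $\{z \in \RR^p : (z, w) \in A^{-1}C\}$ is a union of at most $K$ convex sets in $\RR^p$, and applying the multivariate Berry--Esseen bound of \citet{raic_multivariate_2019} slice-wise with a union bound over the $K$ pieces yields an error of order $K p^{1/4} \beta_n$, where $\beta_n := n^{-3/2} \sum_{i=1}^n \EE \nbr{\Sigma_n^{-1/2}(\dot\ell_{\Mstar}(Y_i) - \mu_{n,i}^{E,*})}_2^3$ is the standardized Lyapunov ratio. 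Using $\nbr{\Sigma_n^{-1/2} v}_2 \leq \eigmin{\Sigma_n}^{-1/2} \nbr{v}_2$ together with Condition~\ref{cond:berry_esseen}(ii) gives $\beta_n = O(n^{-1/2} \eigmin{\Sigma_n}^{-3/2})$, so $T_1$ contributes the first term in the $O(\cdot)$ expression on the right-hand side of \cref{eq:thinning_core_stat}.

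For $T_2$, I would condition on $\hat \Sigma_n$ and split by the event $G := \{\nbr{\hat \Sigma_n - \Sigma_n}_F \leq \Delta\, \eigmin{\Sigma_n}\}$. On $G^c$, bound trivially by $\Pr(G^c) = \Pr(\nbr{\hat \Sigma_n - \Sigma_n}_F > \Delta \eigmin{\Sigma_n})$, which is exactly the first term on the right-hand side of \cref{eq:thinning_core_stat}. On $G$, conditional on $\hat \Sigma_n$ we have $\hat W_n \sim \Norm(0, \hat \Sigma_n)$, and the difference is bounded by $d_{\mathrm{TV}}(\Norm(0, \hat \Sigma_n), \Norm(0, \Sigma_n))$. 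The standard Gaussian TV bound (via Pinsker's inequality applied to the KL divergence between centered multivariate normals) gives $d_{\mathrm{TV}} \lesssim \eigmin{\Sigma_n}^{-1} \nbr{\hat \Sigma_n - \Sigma_n}_F \leq \Delta$ on $G$, yielding the $\Delta$ term. Summing $T_1$ and $T_2$ then produces \cref{eq:thinning_core_stat}.

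Two technical points deserve care. First, Raic's theorem is typically stated for a single convex set, so extending it to a union of $K$ almost-surely disjoint convex sets must be handled by a crude union bound; this accounts for the factor of $K$ that appears in the statement and is the only place it enters. Second, $\hat W_n$ is not marginally independent of $\hat Z_n$ because $\hat \Sigma_n$ may depend on the data; I would rely on the standard data-thinning construction $\hat W_n = \hat \Sigma_n^{1/2} \xi$ with $\xi \sim \Norm(0, I_p)$ drawn externally to the data, so that conditioning on the appropriate $\sigma$-algebras restores the conditional independence that the Berry--Esseen bound and the Gaussian TV bound each require. The main obstacle is really just the bookkeeping in this second step; everything else reduces to invoking Raic's inequality and a standard Gaussian KL computation.
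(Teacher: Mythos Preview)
Your overall architecture---triangle inequality through an intermediate pair, Berry--Esseen for one leg, and a Gaussian total-variation bound on the event $G$ for the other---matches the paper's. But you have chosen the \emph{wrong intermediate}, and this breaks your $T_1$ step.

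You interpolate through $(Z_n,\hat W_n)$, with the Gaussian replacing $\hat Z_n$ while keeping the data-dependent noise $\hat W_n$. For $T_1$ you then want to condition on $\hat W_n$ and apply Berry--Esseen to $\hat Z_n$ on the resulting slice. The problem is that $\hat W_n$ is \emph{not} independent of $\hat Z_n$: both depend on the data through $\hat\Sigma_n$, so the conditional law of $\hat Z_n$ given $\hat W_n=w$ is not the marginal law of $\hat Z_n$, and Berry--Esseen does not apply. Your proposed fix---write $\hat W_n=\hat\Sigma_n^{1/2}\xi$ and condition on the external $\xi$---does not help either: conditioning on $\xi$ restores the marginal law of $\hat Z_n$, but now the slice $\{z:(z,\hat\Sigma_n^{1/2}\xi)\in A^{-1}C\}$ is itself a random set depending on the data through $\hat\Sigma_n$, so again Berry--Esseen for a fixed convex set does not apply. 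There is no $\sigma$-algebra you can condition on that simultaneously fixes the slice and leaves $\hat Z_n$ with its marginal distribution.

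The remedy is to swap the order of interpolation: take the intermediate to be $(\hat Z_n,W_n)$ with $W_n\sim\Norm(0,\Sigma_n)$ drawn using the \emph{true} covariance, independently of the data. Then $T_1$ compares $(\hat Z_n,W_n)$ to $(Z_n,W_n)$: conditioning on the genuinely independent $W_n$ gives a fixed slice and Berry--Esseen applies cleanly (the paper in fact avoids slicing altogether by rewriting $A(\hat Z_n,W_n)$ as a sum of independent $2p$-vectors $n^{-1/2}(\dot\ell_{\theta^*}(Y_i)+\gamma W_{n,i},\,\dot\ell_{\theta^*}(Y_i)-\gamma^{-1}W_{n,i})$ and applying Rai\v{c} directly in $\RR^{2p}$). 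And $T_2$ compares $(\hat Z_n,\hat W_n)$ to $(\hat Z_n,W_n)$: conditioning on the full data fixes both $\hat Z_n$ and $\hat\Sigma_n$, and the Gaussian TV bound between $\Norm(0,\hat\Sigma_n)$ and $\Norm(0,\Sigma_n)$ goes through exactly as you describe. With this single swap, everything in your outline works.
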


When $\hat \Sigma_n = \Sigma_n$, i.e., the covariance $\hat \Sigma_n$ used to construct $\hat Z_n^{(1)}$ and $\hat Z_n^{(2)}$ in \cref{eq:define_thins} exactly equals $\Sigma_n$ defined in \cref{cond:berry_esseen}, then we can set $\Delta = 0$ to obtain the classical Berry-Esseen-type rate. We now establish a conditional Berry-Esseen-type result.

\begin{lemma}\label{lemma:conditional_berry_esseen}
    Let $A_n$ and $B_n$ be sets such that $A_n \times B_n$ is the union of $K$ measurable and almost surely disjoint convex sets, and let $r_n^A$ and $r_n^B$ be random vectors in $\mathbb{R}^p$. For any $\Delta \geq 0$, and $\varepsilon > 0$ smaller than the smallest inradius of those $K$ sets \cref{eq:inradius},
    define the rates
    \begin{align*}
        E_1 &:= p^{1/4} n^{-1/2}\eigmin{\Sigma_n}^{-3/2} + \Delta + \Pr\rbr{\nbr{\hat \Sigma_n - \Sigma_n}_F > \Delta \eigmin{\Sigma_n}},\\
        E_2 &:= O\rbr{E_1 + p^{1/2}\varepsilon\rbr{\eigmin{\Sigma_n}^{1/2} + \eigmin{\Sigma_n}^{-1/2}}} + \Pr\rbr{\nbr{r_n^A}_2 > \tfrac\varepsilon2} + \Pr\rbr{\nbr{r_n^B}_2 > \tfrac\varepsilon2},\\
        E_3 &:= \Pr\rbr{Z^{(1)}_n \in A_n^{-\varepsilon}} - O\rbr{E_1} - \Pr\rbr{\nbr{r_n^A}_2 > \varepsilon},
    \end{align*}
    where $A_n^{-\varepsilon}$ is the $\varepsilon$-shrinkage of the set $A_n$, as defined in \cref{eq:epsilon_shrinkage}.
    Under Condition \ref{cond:berry_esseen},
    \begin{align}\label{eq:master_stat}
        \abr{
        \Pr \rbr{\hat Z^{(2)}_n + r^B_n \in B_n \middle| \hat Z^{(1)}_n + r^A_n \in A_n}
        - \Pr \rbr{Z^{(2)}_n \in B_n}
        }
        \leq
        \frac{
        K E_2}
        {\max\cbr{E_3, K E_2}}.
    \end{align}
\end{lemma}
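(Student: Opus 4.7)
The plan is to express the conditional probability as a ratio of two unconditional probabilities and then approximate each using \cref{lemma:thinning_core_stat}. Setting $N := \Pr(\hat Z^{(1)}_n + r^A_n \in A_n,\; \hat Z^{(2)}_n + r^B_n \in B_n)$ and $D := \Pr(\hat Z^{(1)}_n + r^A_n \in A_n)$, the conditional probability of interest equals $N/D$. Writing $P_1 := \Pr(Z_n^{(1)} \in A_n)$ and $P_2 := \Pr(Z_n^{(2)} \in B_n)$ and using that $Z_n^{(1)}$ and $Z_n^{(2)}$ are independent under the idealized distribution \cref{eq:define_thins_normal}, the triangle inequality gives $\abr{N/D - P_2} \leq (\abr{N - P_1 P_2} + \abr{D - P_1})/D$, so the proof reduces to bounding the two numerator terms and providing a lower bound on $D$.

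To remove the additive perturbations $r_n^A$ and $r_n^B$, I would use elementary sandwich inclusions: on the event $\cbr{\nbr{r_n^A}_2 \leq \varepsilon/2}$, membership of $\hat Z_n^{(1)} + r_n^A$ in $A_n$ is implied by $\hat Z_n^{(1)} \in A_n^{-\varepsilon/2}$ and implies $\hat Z_n^{(1)} \in A_n^{+\varepsilon/2}$, with the analogous statement for $r_n^B$ and $B_n$. Intersecting with the complement of this event sandwiches $N$ and $D$ between probabilities of $(\hat Z_n^{(1)}, \hat Z_n^{(2)})$ (respectively, $\hat Z_n^{(1)}$) landing in expansions or shrinkages of $A_n \times B_n$ (respectively, $A_n$), at an additive cost of $\Pr(\nbr{r_n^A}_2 > \varepsilon/2) + \Pr(\nbr{r_n^B}_2 > \varepsilon/2)$.

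The next step is to apply \cref{lemma:thinning_core_stat} to the expanded or shrunk product set, which remains a union of $K$ convex sets provided $\varepsilon$ is below the smallest inradius of the original pieces. This replaces probabilities under $(\hat Z_n^{(1)}, \hat Z_n^{(2)})$ by probabilities under the idealized Gaussian pair $(Z_n^{(1)}, Z_n^{(2)})$ at rate $O(K E_1)$, and independence of the Gaussian limits collapses each joint probability on a convex product set into the product of marginals up to a Gaussian anti-concentration error. Bounding the mass of a $\Norm(0,(1+\gamma^2)\Sigma_n)$ or $\Norm(0,(1+\gamma^{-2})\Sigma_n)$ Gaussian on an $\varepsilon$-boundary of a convex set produces per-piece error of order $p^{1/2}\varepsilon(\eigmin{\Sigma_n}^{1/2} + \eigmin{\Sigma_n}^{-1/2})$, where the two eigenvalue factors arise respectively from the density ceiling and from controlling the radius of a boundary slab. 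Summing over the $K$ pieces and combining with the remainder tails yields $\abr{N - P_1 P_2} = O(K E_2)$ and, by the parallel argument applied to $\hat Z_n^{(1)}$ alone, $\abr{D - P_1} = O(E_2)$, while a one-sided shrinkage sandwich of $D$ together with \cref{lemma:thinning_core_stat} establishes $D \geq E_3$.

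Putting the pieces together gives $\abr{N/D - P_2} \leq K E_2 / D$, and replacing $D$ in the denominator by $\max\cbr{E_3, K E_2}$ accounts both for the lower bound $D \geq E_3$ and for the trivial upper bound $\abr{N/D - P_2} \leq 1$ that holds since the left-hand side is a difference of probabilities. The main technical obstacle is orchestrating the $\varepsilon$-shrinkage and expansion operations so that they preserve convexity, which is precisely why the inradius condition is imposed, while ensuring that the $K$-fold accumulation of Berry-Esseen and anti-concentration errors cleanly produces the stated rate $K E_2$; obtaining the correct $\eigmin{\Sigma_n}$-dependence in the anti-concentration term, which is driven by the interplay between the inflated covariances of $Z_n^{(1)}, Z_n^{(2)}$ and the convex geometry of the selection pieces, is the second delicate step.
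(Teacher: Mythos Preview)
Your proposal is correct and follows essentially the same route as the paper: decompose the conditional probability as $N/D$, bound $\abr{N - P_1P_2}$ and $\abr{D - P_1}$ via the sandwich inclusions $A_n^{-\varepsilon} \subseteq \cdot \subseteq A_n^{+\varepsilon}$ combined with \cref{lemma:thinning_core_stat} and a Nazarov-type Gaussian anti-concentration bound on the $\varepsilon$-boundary, lower bound $D$ by $E_3$ via the same shrinkage argument, and finish with the $\max\{E_3, KE_2\}$ trick. Two very minor imprecisions that do not affect the argument: the inradius condition is needed to keep the $\varepsilon$-shrinkages \emph{nonempty} (convexity of expansions/shrinkages is automatic), and $\abr{D-P_1}$ is in general $O(KE_2)$ rather than $O(E_2)$ since $A_n$ itself may have up to $K$ convex pieces, but this is absorbed into the final $KE_2$ numerator anyway.
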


In the next section, we will show that 
\cref{lemma:conditional_berry_esseen} enables valid post-selection inference on a model selected via penalized M-estimation. 

\section{Asymptotically valid post-selection inference}\label{sec:valid_inf}

In Sections~\ref{sec:noisypenalty} and \ref{sec:noisyresponse}, we apply \cref{lemma:conditional_berry_esseen} to establish  two approaches for valid conditional selective inference in the asymptotic regime where $p$ is fixed and $n$ grows.

\subsection{Post-selection inference by adding a noisy penalty} \label{sec:noisypenalty}

Suppose that $\gamma > 0$ and $\hat W_n \sim \Norm(0, \hat \Sigma_n)$ for any covariance matrix $\hat \Sigma_n$.
We consider inference on the selected model $\supp\rbr{\MestP}$, where $\MestP$ is the solution to a penalized M-estimation problem, i.e.    
\begin{equation}\label{eq:MestP}
    \MestP = \argmin_{\theta \in \RR^p} \cbr{P_n \ell_\theta + \gamma n^{-1/2} \hat W^\top_n \theta + n^{-1/2} \rho_\lambda\rbr{\theta}}.
\end{equation}
Note that \eqref{eq:MestP} is a variant of \cref{eq:MestPnaive}, with the addition of the noisy penalty term $\gamma n^{-1/2} \hat W^\top_n \theta$; it resembles randomized estimators seen in the conditional selective inference literature \citep{tian_selective_2018, panigrahi_exact_2024, huang_selective_2024}.  
 
We will further consider the randomized M-estimator on the sub-model with features in the index set $E \in \{1,\ldots,p\}$, where $\hat W_{n, E}$ is the $|E|$-dimensional vector obtained by subsetting the elements of $\hat W_n$ to those indexed by $E$:
\begin{equation}\label{eq:MestE}
    \MestE :=
    \argmin_{\theta \in \RR^{\abr{E}}} \cbr{P_n \ell_{\rbr{\theta, 0_{p-\abr{E}}}} - \tfrac1\gamma n^{-1/2} \hat W^\top_{n, E} \theta}.
\end{equation}
We note that \eqref{eq:MestE} is a variant of  \cref{eq:MestEnaive} that incorporates the added noisy penalty term $-\tfrac1\gamma n^{-1/2} \hat W^\top_n \theta $. As in $\Mstar$, we define the expanded $p$-dimensional minimizer
\begin{equation}\label{eq:Mest}
    \Mest := \rbr{\MestE, 0_{\abr{E^c}}}.
\end{equation}

We require these estimators to be unique almost surely.
\begin{condition}\label{cond:unique_solution}
    The solutions to \cref{eq:MestP} and \cref{eq:MestE} are unique almost surely. 
\end{condition}

Furthermore, we require the existence of a valid \emph{fixed model} confidence interval \cref{eq:classical_ci} around our point estimator \cref{eq:MestE}.
\begin{condition}\label{cond:sandwich_est}\,
    There exists an estimator $\Shat \in \mathbb{S}_{|E|}^{++}$ satisfying $$\lim_{n \to \infty} \Pr\rbr{\xi^\top \MstarE \in \CI_n^{\xi, \alpha}\rbr{\MestE, \Shat }} \geq 1-\alpha.$$ 
\end{condition}

Before delving further into technical details, we outline the key intuition of this section. Recall the construction of $\hat{Z}_n^{(1)}$ and $\hat{Z}_n^{(2)}$ from \eqref{eq:define_thins}. First, in \cref{lemma:selection_event}, we will show that for any index set $E \subseteq \{1,\ldots,p\}$, the event  $\left\{ \supp\rbr{\MestP} = E \right\}$ equals  the event that $\hat{Z}_n^{(1)}$, plus a small remainder term, belongs to some union of convex sets. Then, in \cref{lemma:inference_event}, we will show that the event   $ \left\{  \xi^\top \MstarE \in \CI_n^{\xi, \alpha}\rbr{\MestE, \Shat}  \right\}$ equals the event that  $\hat{Z}_n^{(2)}$, plus a small remainder term, belongs to some convex set, where $\MstarE$ was defined in \eqref{eq:MstarE}. Together, Lemmas~\ref{lemma:selection_event} and \ref{lemma:inference_event} will formalize Insight 2 from Section~\ref{sec:contributions}. Combining these results with Lemma~\ref{lemma:conditional_berry_esseen} will enable us to show, in Theorem~\ref{thm:valid_inf_grad}, that the event $ \left\{  \xi^\top \MstarE \in \CI_n^{\xi, \alpha}\rbr{\MestE, \Shat}  \right\}$ holds with probability $(1-\alpha)$ asymptotically, conditional on the selection event $\left\{ \supp\rbr{\MestP} = E \right\}$; this formalizes Insight 3.

We first require that our loss function satisfies certain typical M-estimation properties, namely that it converges uniformly across the parameter space and can be well-approximated by a second-order Taylor expansion. This approximation is well-studied in the M-estimation literature and will be discussed in further detail in \cref{sec:glms} in the context of generalized linear models.
\begin{condition}\label{cond:taylor_expansion}
    Our estimators and loss function admit the following Taylor expansions centered at $\Mstar$ defined in \eqref{eq:Mstar}: 
    \begin{equation*}
        P_n \dot\ell_{\MestP} = P_n \dot\ell_{\Mstar} + H_n \rbr{\MestP - \Mstar} + o_p\rbr{n^{-1/2}}
    \end{equation*}
    and
    \begin{equation*}
        P_n \dot\ell_{\Mest} = P_n \dot\ell_{\Mstar} + H_n \rbr{\Mest - \Mstar} + o_p\rbr{n^{-1/2}},
    \end{equation*}
    for some deterministic matrix $H_n \in \mathbb{S}_{p}^{++}$, and where $\dot\ell_\theta$ was defined in \cref{eq:first_deriv_def}.
\end{condition}

Furthermore, we assume that the following population quantities converge asymptotically to non-degenerate quantities. 

\begin{assumption}\label{ass:existing_limits}
   Recalling the definitions of $\MstarE$ in \cref{eq:MstarE}, $\Sigma_n$  in \cref{cond:berry_esseen}, and $H_n$  in \cref{cond:taylor_expansion}, the following limits exist:
    \begin{enumerate}[(i)]
        \item $\sqrt{n}\MstarE \to \theta^{E, *}$ for some fixed and bounded $\theta^{E, *}$,
        \item $\Sigma_n \to \Sigma$ for some fixed $\Sigma \in \mathbb{S}_{p}^{++}$,
        \item $H_n \to H$ for some fixed $H \in \mathbb{S}_{p}^{++}$.
    \end{enumerate}
\end{assumption}
In our theoretical development, \cref{ass:existing_limits}(i) will guarantee that the probability of selecting the model $E$ is asymptotically bounded above zero. Similar local alternative-type assumptions are made in related work \citep{tian_selective_2018, tibshirani_uniform_2018}. In contrast, \citet{lei_discussion_2025} guarantee only unconditional selective coverage and thus do not require such an assumption, 
while \citet{rasines_splitting_2023} make the weaker but less verifiable assumption that the probability of selection decays to zero slow enough. 

We also place a very mild set of conditions on the penalty function.

\begin{condition}\label{cond:penalty_union_convex}
    The penalty $\rho_\lambda(\theta)$ satisfies the following properties:
    \begin{enumerate}[(i)]
        \item The subgradients are uniformly bounded, i.e., $\sup_{\theta \in \RR^p} \sup_{\eta \in \partial\rho(\theta)} \nbr{\eta}_2 < \infty$;
        \item For any $H \in \mathbb{S}_{p}^{++}$,
            \begin{equation*}
                \cbr{H\theta + \eta : \supp\rbr{\theta} = E, \eta \in \partial \rho_\lambda(\theta)}
            \end{equation*}
            can be written as the union of finitely-many disjoint convex sets with non-empty interiors.
    \end{enumerate}
\end{condition}

\cref{cond:penalty_union_convex} is satisfied by the popular $L_1$ penalty. While a similar result is used in the foundational work of \citet{lee_exact_2016}, where a linear model with Gaussian errors is assumed, we emphasize that \cref{cond:penalty_union_convex} involves only the penalty term $\rho_\lambda(\theta)$ and places no assumptions on the distribution of the observations.

\begin{proposition}\label{prop:is_valid_penalty}
    For $\lambda \in (0, \infty)$, the lasso penalty $\rho_\lambda(\theta) = \lambda \nbr{\theta}_1$ satisfies \cref{cond:penalty_union_convex}.
\end{proposition}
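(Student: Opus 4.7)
My plan is to verify both parts of \cref{cond:penalty_union_convex} separately. Part (i) is immediate: the subdifferential of $\lambda\nbr{\theta}_1$ at any $\theta$ consists of vectors $\eta$ whose coordinates satisfy $\abr{\eta_j} \leq \lambda$, so $\nbr{\eta}_2 \leq \lambda\sqrt{p}$ uniformly in $\theta$. The substance of the proposition is part (ii), which I would prove by partitioning the set in question according to the sign pattern of $\theta_E$. For each $s \in \cbr{-1, +1}^{\abr{E}}$, define
\begin{equation*}
    A_s := \cbr{H\theta + \eta : \supp(\theta) = E,\ \mathrm{sign}(\theta_E) = s,\ \eta_E = \lambda s,\ \eta_{E^c} \in [-\lambda, \lambda]^{\abr{E^c}}}.
\end{equation*}
Since $\supp(\theta) = E$ forces $\mathrm{sign}(\theta_E) \in \cbr{-1, +1}^{\abr{E}}$ and determines $\eta_E = \lambda s$ on the support, the set in \cref{cond:penalty_union_convex}(ii) is exactly the finite union $\bigcup_{s} A_s$.

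To verify each $A_s$ is convex with non-empty interior, I would realize it as the image of $\mathbb{R}^{\abr{E}} \times \mathbb{R}^{\abr{E^c}}$ under the affine map $\Phi_s(u, w) := H_E u + \lambda e_s + w_{E^c}$, where $H_E$ denotes the submatrix of columns of $H$ indexed by $E$, $e_s \in \mathbb{R}^p$ is the vector equal to $s$ on $E$ and $0$ on $E^c$, and $w_{E^c}$ embeds $w$ on coordinates $E^c$ and is $0$ on $E$. In block form over $E$-rows and $E^c$-rows, the Jacobian of $\Phi_s$ is
\begin{equation*}
    J_{\Phi_s} = \begin{pmatrix} H_{E,E} & 0 \\ H_{E^c,E} & I_{\abr{E^c}} \end{pmatrix},
\end{equation*}
whose determinant equals $\det(H_{E,E}) > 0$ since $H_{E,E}$ is a principal submatrix of $H \in \mathbb{S}_p^{++}$. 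Thus $\Phi_s$ is an affine bijection of $\mathbb{R}^p$, and $A_s = \Phi_s\rbr{O_s \times [-\lambda, \lambda]^{\abr{E^c}}}$ where $O_s := \cbr{u \in \mathbb{R}^{\abr{E}} : u_j s_j > 0 \text{ for all } j \in E}$ is the open orthant associated with $s$. This presents $A_s$ as the image of a convex set under an affine bijection (hence convex), and its interior contains the non-empty open set $\Phi_s\rbr{O_s \times (-\lambda, \lambda)^{\abr{E^c}}}$.

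The key step is pairwise disjointness: $A_s \cap A_{s'} = \emptyset$ whenever $s \neq s'$. Suppose for contradiction that $v \in A_s \cap A_{s'}$, witnessed by pairs $(\theta, \eta)$ and $(\theta', \eta')$. Then $H\theta + \eta = H\theta' + \eta'$, so rearranging and taking an inner product with $\theta - \theta'$ gives
\begin{equation*}
    \langle H(\theta - \theta'),\, \theta - \theta'\rangle + \langle \eta - \eta',\, \theta - \theta'\rangle = 0.
\end{equation*}
The first term is non-negative because $H \in \mathbb{S}_p^{++}$, with equality iff $\theta = \theta'$; the second term is non-negative by monotonicity of the subdifferential of the convex function $\lambda\nbr{\cdot}_1$. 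Both must therefore vanish, forcing $\theta = \theta'$ and hence $s = \mathrm{sign}(\theta_E) = \mathrm{sign}(\theta'_E) = s'$, a contradiction. I expect this monotonicity argument to be the conceptual core of the proof; the convexity, non-empty interior, and part (i) verifications are essentially bookkeeping about the KKT structure of the $L_1$ subdifferential.
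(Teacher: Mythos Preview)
Your proposal is correct and follows essentially the same route as the paper: partition by the sign pattern $s \in \{-1,+1\}^{|E|}$, show each piece is convex with non-empty interior, and establish disjointness via a positive-definiteness argument. The only differences are cosmetic---you present convexity and non-empty interior through an explicit affine bijection $\Phi_s$ with invertible Jacobian, whereas the paper uses a Minkowski-sum decomposition and treats the $E$ and $E^c$ coordinates separately; and for disjointness you invoke monotonicity of the subdifferential of a convex function, which is a clean abstract restatement of the paper's coordinate-wise sign computation showing $\lambda(\theta_E - \theta'_E)^\top(s'_E - s_E) < 0$.
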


Finally, we require that $\hat \Sigma_n$ is an appropriate estimator for $\Sigma_n$.
\begin{condition}\label{cond:Sigma_est}
     $\hat \Sigma_n$ satisfies $\EE\sbr{\nbr{\hat \Sigma_n - \Sigma_n}_F} = o(1)$.
\end{condition}

Recalling the definitions of $\hat Z_n^{(1)}$ and $\hat Z_n^{(2)}$ in \cref{eq:define_thins}, we are now ready to present Lemmas~\ref{lemma:selection_event} and \ref{lemma:inference_event}. 

\begin{lemma}\label{lemma:selection_event}
Under Assumption \ref{ass:existing_limits} and Conditions \ref{cond:unique_solution}--\ref{cond:Sigma_est}, the event $\left\{ \supp\rbr{\MestP} = E \right\}$ equals the event $\left\{\hat{Z}_n^{(1)} + r_n^A \rbr{\MestP} \in A_n^E \right\}$ for some $r_n^A \rbr{\MestP}$ such that $r_n^A \rbr{\MestP} = o_p(1)$, and for some $A_n^E$ that is the union of finitely many measurable disjoint convex sets.
\end{lemma}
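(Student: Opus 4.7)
The plan is to combine the first-order (KKT) optimality conditions characterizing $\hat\theta_n^{(\lambda)}$ with the Taylor expansion in Condition \ref{cond:taylor_expansion} to rewrite the selection event as a membership event for $\hat Z_n^{(1)}$, and then invoke Condition \ref{cond:penalty_union_convex}(ii) to endow $A_n^E$ with the claimed union-of-disjoint-convex-sets structure.

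First, because $\hat\theta_n^{(\lambda)}$ is the unique minimizer (Condition \ref{cond:unique_solution}) of a sum of a differentiable loss and a convex penalty, there exists $\hat\eta \in \partial\rho_\lambda(\hat\theta_n^{(\lambda)})$ with
$$P_n\dot\ell_{\hat\theta_n^{(\lambda)}} + \gamma n^{-1/2}\hat W_n + n^{-1/2}\hat\eta = 0.$$
Define the remainder
$$r_n^A(\hat\theta_n^{(\lambda)}) := \sqrt{n}\bigl(P_n\dot\ell_{\hat\theta_n^{(\lambda)}} - P_n\dot\ell_{\theta_n^*} - H_n(\hat\theta_n^{(\lambda)} - \theta_n^*)\bigr),$$
which is $o_p(1)$ by Condition \ref{cond:taylor_expansion}. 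Multiplying the KKT condition by $\sqrt{n}$ and substituting $\hat Z_n = \sqrt{n}P_n\dot\ell_{\theta_n^*}$ and $\hat Z_n^{(1)} = \hat Z_n + \gamma\hat W_n$ gives the exact identity
$$\hat Z_n^{(1)} + r_n^A(\hat\theta_n^{(\lambda)}) = -\sqrt{n}H_n(\hat\theta_n^{(\lambda)} - \theta_n^*) - \hat\eta.$$

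Next, I would define the candidate set
$$A_n^E := \bigl\{-\sqrt{n}H_n(\theta - \theta_n^*) - \eta : \supp(\theta) = E,\ \eta \in \partial\rho_\lambda(\theta)\bigr\}$$
and verify that it is a finite union of disjoint convex sets by applying Condition \ref{cond:penalty_union_convex}(ii) with $H := \sqrt{n}H_n \in \mathbb{S}_p^{++}$ (using Condition \ref{cond:taylor_expansion}), and observing that translation by $\sqrt{n}H_n\theta_n^*$ and reflection through the origin preserve both convexity and disjointness.

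Finally, I would prove the two inclusions. The forward inclusion $\{\supp(\hat\theta_n^{(\lambda)}) = E\} \subseteq \{\hat Z_n^{(1)} + r_n^A \in A_n^E\}$ is immediate from the displayed identity. The main obstacle is the reverse inclusion: assuming $\hat Z_n^{(1)} + r_n^A(\hat\theta_n^{(\lambda)}) \in A_n^E$, I would equate the two representations of $\hat Z_n^{(1)} + r_n^A$ to obtain, for some $(\theta, \eta)$ with $\supp(\theta) = E$ and $\eta \in \partial\rho_\lambda(\theta)$,
$$\sqrt{n}H_n\theta + \eta = \sqrt{n}H_n\hat\theta_n^{(\lambda)} + \hat\eta.$$
Since $\theta \mapsto \tfrac{\sqrt{n}}{2}\theta^\top H_n\theta + \rho_\lambda(\theta)$ is strictly convex (as $H_n \in \mathbb{S}_p^{++}$), its subdifferential is strictly monotone, so the above display forces $\theta = \hat\theta_n^{(\lambda)}$ and hence $\supp(\hat\theta_n^{(\lambda)}) = E$. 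This strict-monotonicity step is the delicate part: it is what promotes a mere subset relation to an equality of events and, incidentally, also guarantees disjointness of $A_n^E$ from $A_n^{E'}$ for $E \neq E'$.
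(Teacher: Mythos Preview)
Your proof is correct and follows essentially the same route as the paper: combine the KKT conditions for $\hat\theta_n^{(\lambda)}$ with the Taylor expansion of Condition~\ref{cond:taylor_expansion} to obtain the identity $\hat Z_n^{(1)}+r_n^A=-\sqrt{n}H_n(\hat\theta_n^{(\lambda)}-\theta_n^*)-\hat\eta$, and then invoke Condition~\ref{cond:penalty_union_convex}(ii) (applied with $H=\sqrt{n}H_n$ and followed by an affine map) for the structure of $A_n^E$. The one place you go beyond the paper is the reverse inclusion, which the paper dispatches by citing the uniqueness in Condition~\ref{cond:unique_solution}, whereas you spell out the underlying mechanism via strict monotonicity of the subdifferential of $\theta\mapsto \tfrac{\sqrt{n}}{2}\theta^\top H_n\theta+\rho_\lambda(\theta)$; this is a cleaner justification of the same step.
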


\begin{lemma}
\label{lemma:inference_event}
Under Assumption \ref{ass:existing_limits} and Conditions \ref{cond:unique_solution}, \ref{cond:taylor_expansion}, and \ref{cond:Sigma_est}, the event \newline$\left\{ \xi^\top \MstarE \in \CI_n^{\xi, \alpha}\rbr{\MestE, \Shat } \right\}$ equals the event $\left\{ \hat{Z}_n^{(2)} + r_n^B\rbr{\MestE}  \in B_n^E  \right\}$ for some $r_n^B \rbr{\Mest}$ such that $r_n^B \rbr{\Mest} = o_p(1)$, and for some measurable convex set $B_n^E$.
\end{lemma}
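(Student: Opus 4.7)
The plan is to combine the first-order optimality conditions for $\MestE$ with the Taylor expansion in \cref{cond:taylor_expansion} to rewrite $\sqrt{n}\xi^\top(\MestE - \MstarE)$ as a linear function of $(\hat Z_n^{(2)})_E$ plus an $o_p(1)$ remainder. The coverage event then becomes a two-sided bound on this linear function, which is a (possibly random) slab in $\RR^p$ and hence a measurable convex set.

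First, differentiating the objective in \cref{eq:MestE} at the unique minimizer $\MestE$ (\cref{cond:unique_solution}) gives the KKT condition $(P_n \dot\ell_{\Mest})_E = \tfrac{1}{\gamma} n^{-1/2}\hat W_{n, E}$, since $\Mest = (\MestE, 0_{|E^c|})$ so the derivative with respect to the first $|E|$ coordinates is exactly the $E$-block of $\dot\ell_{\Mest}$. Multiplying by $\sqrt{n}$ and substituting the second Taylor expansion in \cref{cond:taylor_expansion} yields
\[
    \tfrac{1}{\gamma}\hat W_{n, E} = (\hat Z_n)_E + H_{n, EE}\sqrt{n}(\MestE - \MstarE) + o_p(1),
\]
where I used that $\Mest - \Mstar$ is supported on $E$ so $(H_n(\Mest - \Mstar))_E = H_{n, EE}(\MestE - \MstarE)$. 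Rearranging and invoking $\hat Z_n^{(2)} = \hat Z_n - \tfrac{1}{\gamma}\hat W_n$ from \cref{eq:define_thins} gives the key identity
\[
    H_{n, EE}\sqrt{n}(\MestE - \MstarE) = -(\hat Z_n^{(2)})_E + o_p(1).
\]
The principal submatrix $H_{n, EE}$ is invertible because $H_n$ is positive definite by \cref{ass:existing_limits}(iii), so left-multiplication by $\xi^\top H_{n, EE}^{-1}$ yields $\sqrt{n}\xi^\top(\MestE - \MstarE) = -\xi^\top H_{n, EE}^{-1}(\hat Z_n^{(2)})_E + o_p(1)$.

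Second, using \cref{eq:classical_ci}, the coverage event $\{\xi^\top \MstarE \in \CI_n^{\xi,\alpha}(\MestE, \Shat)\}$ is exactly $\{|\sqrt{n}\xi^\top(\MestE - \MstarE)| \leq z_{1-\alpha/2}(\xi^\top \Shat \xi)^{1/2}\}$. Substituting the key identity, I would define
\[
    B_n^E := \left\{ z \in \RR^p : \left|\xi^\top H_{n, EE}^{-1} z_E\right| \leq z_{1-\alpha/2}(\xi^\top \Shat \xi)^{1/2}\right\},
\]
which is the intersection of two half-spaces in $\RR^p$ (a slab along the direction obtained by embedding $H_{n, EE}^{-\top}\xi$ into $\RR^p$ by padding with zeros in $E^c$), hence measurable and convex. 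Taking $r_n^B(\MestE) \in \RR^p$ to carry the $o_p(1)$ Taylor residual in its $E$-block and zero in the $E^c$-block, the coverage event becomes $\{\hat Z_n^{(2)} + r_n^B(\MestE) \in B_n^E\}$, with $r_n^B(\MestE) = o_p(1)$ by \cref{cond:taylor_expansion}.

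The main bookkeeping subtlety is that $B_n^E$ is allowed to depend on the data through $\Shat$ and on $n$ through $H_{n, EE}$; the statement of the lemma does not rule this out, and compatibility of this data dependence with the subsequent application of \cref{lemma:conditional_berry_esseen} is handled downstream in the main theorem rather than here. I do not anticipate a substantive obstacle beyond carefully tracking which block ($E$ vs.\ $E^c$) each quantity lives in and confirming that the Taylor remainder at rate $o_p(n^{-1/2})$ becomes an $o_p(1)$ remainder after scaling by $\sqrt{n}$.
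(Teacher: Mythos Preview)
Your proposal is correct and follows essentially the same route as the paper's proof: combine the KKT condition for $\MestE$ with the Taylor expansion in \cref{cond:taylor_expansion} to express the $E$-block of $\hat Z_n^{(2)}$ plus an $o_p(1)$ remainder as $-H_{n,EE}\sqrt{n}(\MestE-\MstarE)$, then rewrite the coverage event as membership in the slab $B_n^E=\{v\in\RR^p:\xi^\top H_{n,EE}^{-1}v_E\in[\pm z_{1-\alpha/2}(\xi^\top\Shat\xi)^{1/2}]\}$. The only cosmetic difference is that the paper carries the full $p$-dimensional Taylor remainder as $r_n^B$ (which also works since $B_n^E$ places no constraint on the $E^c$-block), whereas you zero out the $E^c$-coordinates; both choices are $o_p(1)$ and yield the same event.
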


Lemmas \ref{lemma:selection_event} and \ref{lemma:inference_event} formalize Insight 2 in \cref{sec:contributions} by establishing that both the selection and inference events are related to thinned versions of the score variable $\hat Z_n$. We now combine these lemmas with the conditional Berry-Esseen results in Lemma~\ref{lemma:conditional_berry_esseen} to formalize Insight 3.

\begin{theorem}[Post-selection inference via a noisy penalty]\label{thm:valid_inf_grad}
    Under Assumption \ref{ass:existing_limits} and Conditions \ref{cond:berry_esseen}--\ref{cond:Sigma_est},
    \begin{align*}
        \lim_{n \to \infty} \Pr \rbr{ \xi^\top \MstarE \in \CI_n^{\xi, \alpha}\rbr{\MestE, \Shat} \middle|\, \supp\rbr{\MestP} = E}
        \geq
        1 - \alpha.
    \end{align*}
\end{theorem}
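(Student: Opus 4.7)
The argument is a direct chaining of Lemmas~\ref{lemma:selection_event}--\ref{lemma:inference_event} with Lemma~\ref{lemma:conditional_berry_esseen}, capped off by the unconditional validity of the classical interval guaranteed by \cref{cond:sandwich_est}. Lemma~\ref{lemma:selection_event} rewrites the selection event as $\cbr{\hat Z_n^{(1)} + r_n^A \in A_n^E}$ with $r_n^A = o_p(1)$ and $A_n^E$ a finite union of disjoint convex sets, and Lemma~\ref{lemma:inference_event} rewrites the coverage event as $\cbr{\hat Z_n^{(2)} + r_n^B \in B_n^E}$ with $r_n^B = o_p(1)$ and $B_n^E$ convex. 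Because products of convex sets are convex in the product space, $A_n^E \times B_n^E$ is itself a finite union of some $K$ disjoint convex sets, so Lemma~\ref{lemma:conditional_berry_esseen} applies and yields
\begin{align*}
    \Pr\rbr{\xi^\top \MstarE \in \CI_n^{\xi,\alpha}\rbr{\MestE, \Shat} \middle|\, \supp\rbr{\MestP} = E}
    \ \geq\ \Pr\rbr{Z_n^{(2)} \in B_n^E} - \frac{K E_2}{\max\cbr{E_3, K E_2}},
\end{align*}
with $E_2, E_3$ the error rates of that lemma.

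\textbf{Controlling the remainder.} I would take $\varepsilon_n, \Delta_n \downarrow 0$ slowly. The Berry--Esseen-type rate $E_1$ vanishes via \cref{cond:berry_esseen} and \cref{ass:existing_limits}(ii), while Markov's inequality combined with \cref{cond:Sigma_est} controls the tail probability of $\nbr{\hat\Sigma_n - \Sigma_n}_F$. The additional contributions in $E_2$ from $r_n^A, r_n^B = o_p(1)$ and the $p^{1/2}\varepsilon_n$ term likewise vanish. The key step is $\liminf_n E_3 > 0$: by \cref{ass:existing_limits}(i), $\sqrt{n}\MstarE$ is bounded (a local-alternative regime), so $Z_n^{(1)}$ converges weakly to a non-degenerate Gaussian, and combined with \cref{cond:penalty_union_convex}(ii) (which ensures the convex pieces of $A_n^E$ have non-empty interior), $\Pr\rbr{Z_n^{(1)} \in \rbr{A_n^E}^{-\varepsilon_n}}$ is bounded away from zero in the limit. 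Hence the ratio $K E_2/\max\cbr{E_3, K E_2}$ is $o(1)$.

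\textbf{Lower bounding the unconditional target.} It remains to show $\liminf_n \Pr\rbr{Z_n^{(2)} \in B_n^E} \geq 1 - \alpha$, which is a purely unconditional statement. By \cref{cond:sandwich_est} together with Lemma~\ref{lemma:inference_event}, $\Pr\rbr{\hat Z_n^{(2)} + r_n^B \in B_n^E} \geq 1 - \alpha - o(1)$; a single-set Berry--Esseen argument analogous to Lemma~\ref{lemma:thinning_core_stat}, using $r_n^B = o_p(1)$ to absorb the shift into a vanishing inflation of the convex set $B_n^E$, transfers this lower bound from $\hat Z_n^{(2)} + r_n^B$ to $Z_n^{(2)}$. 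Combining the three paragraphs gives the theorem. The main obstacle is the $\liminf_n E_3 > 0$ step: one must balance $\varepsilon_n$ against the Berry--Esseen rate so that the shrunken selection region still carries positive limiting Gaussian mass, and Assumption~\ref{ass:existing_limits}(i) plays its essential role here by preventing the selection event from being asymptotically degenerate.
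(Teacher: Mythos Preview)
Your proposal is correct and follows essentially the same route as the paper: both arguments rewrite the selection and coverage events via Lemmas~\ref{lemma:selection_event} and~\ref{lemma:inference_event}, apply Lemma~\ref{lemma:conditional_berry_esseen} to the product $A_n^E \times B_n^E$, verify $E_2 \to 0$ and $\liminf_n E_3 > 0$ using \cref{ass:existing_limits} and \cref{cond:Sigma_est}, and then transfer the unconditional coverage guarantee from $\hat Z_n^{(2)} + r_n^B$ to $Z_n^{(2)}$ by a single-set Berry--Esseen bound. The only organisational difference is that the paper anchors its decomposition at $\Pr(\hat Z_n^{(2)} + r_n^B \in B_n^E)$ (directly $\geq 1-\alpha$ by \cref{cond:sandwich_est}) and bounds the two gaps to $\Pr(Z_n^{(2)} \in B_n^E)$ and to the conditional probability separately, whereas you anchor at $\Pr(Z_n^{(2)} \in B_n^E)$ and do the transfer afterwards; the underlying estimates are identical.
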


\cref{thm:valid_inf_grad} tells us that if we  select a model $E$ using the randomized penalized M-estimator $\MestP$ in \eqref{eq:MestP}, then we can conduct inference on the parameter $\theta_n^{E,*}$ corresponding to this selected model 
by simply constructing a  standard confidence interval  of the form \eqref{eq:classical_ci} that is centered around the randomized M-estimator $\MestE$ in \eqref{eq:MestE}.

From a computational perspective, this amounts to an astoundingly simple strategy for post-selection inference for penalized M-estimators: valid inference in the model selected by the randomized penalized M-estimator $\MestP$ in \eqref{eq:MestP} simply requires software to solve the optimization problem \eqref{eq:MestP} (for both $\lambda>0$ and $\lambda=0$), e.g., using \texttt{adelie}~\citep{yang_fast_2024}.  This is in marked contrast to previous proposals from the selective inference literature such as \citet{tibshirani_uniform_2018} and \citet{huang_selective_2024}, which also consider inference on the model given by the support of $\MestP$ but require bespoke procedures for downstream inference.

\subsection{Post-selection inference by noising the samples}\label{sec:noisyresponse}

The very simple strategy for post-selection inference for penalized M-estimators suggested by Theorem~\ref{thm:valid_inf_grad}  requires the analyst to solve the randomized penalized M-estimation problem given in \eqref{eq:MestP}. In this section, we will present a corollary of Theorem~\ref{thm:valid_inf_grad}, which reveals that as long as the loss function takes an appropriate form, we can conduct post-selection inference for penalized M-estimators \emph{without} needing to solve \eqref{eq:MestP} at all. The key is to add noise to the data $Y$ rather than to the penalized M-estimator optimization problem; this mirrors (but substantially generalizes) the approaches taken by \citet{rasines_splitting_2023} and \citet{lei_discussion_2025}, and provides an asymptotic and distribution-free version of the finite-sample Gaussian  guarantees of  \cite{rasines_splitting_2023}. 

Recall from Section~\ref{subsec:m_est} our random data $Y_i \in \RR^d$. Let $\bar\Sigma_{n, i} \in \RR^{d \times d}$ be an estimate of the covariance $\Var(Y_i)$; in many settings, we will have a homogeneous covariance estimate $\bar\Sigma_{n,i}=\bar\Sigma_n$, although we write our results here more generally to accommodate arbitrarily heterogeneous covariances. We construct auxiliary noise $\bar W_{n, i} \sim \Norm(0, \bar\Sigma_{n, i})$, and define
\begin{equation}\label{eq:MestPoutcome}
    \MestPoutcome = \argmin_{\theta \in \RR^p} \cbr{\frac1n \sum_{i=1}^n \ell_\theta\rbr{Y_i + \gamma \bar W_{n, i}} + n^{-1/2} \rho_\lambda\rbr{\theta}}
\end{equation}
and
\begin{equation}\label{eq:MestEoutcome}
    \MestEoutcome :=
    \argmin_{\theta \in \RR^{\abr{E}}} \cbr{\frac1n \sum_{i=1}^n \ell_{\rbr{\theta, 0_{p-\abr{E}}}}\rbr{Y_i - \tfrac1\gamma  \bar W_{n, i}}}.
\end{equation}

We require the following condition.

\begin{condition}\label{cond:affine} The loss function 
$\ell_\theta(Y_i) = \theta^\top A_{i} Y_i + a_{\theta, i}$ for some $A_{i} \in \RR^{p \times d}$ and $a_{\theta, i} \in \RR$.
\end{condition}

Under \cref{cond:affine}, it is clear that
\begin{equation*}
    \ell_\theta(Y_i + \gamma \bar W_{n, i}) = \ell_\theta(Y_i) + \gamma \theta^\top A_i \bar W_{n, i} \quad\text{and}\quad \ell_\theta\rbr{Y_i - \tfrac1\gamma \bar W_{n, i}} = \ell_\theta(Y_i) - \tfrac1\gamma \theta^\top A_i \bar W_{n, i},
\end{equation*}
where $A_i \bar W_{n, i} \sim \Norm\rbr{0, A_i \bar\Sigma_{n, i} A_i^\top }$.
Thus, for $\hat W_n = n^{-1/2} \sum_{i=1}^n A_i \bar W_{n, i} $, where $\hat W_n$ is the noise used to estimate $\MestP$ in \cref{eq:MestP} and $\MestE$ in \cref{eq:MestE}, the objective function in \cref{eq:MestP} is equivalent to that of \cref{eq:MestPoutcome}, and \cref{eq:MestE} is equivalent to \cref{eq:MestEoutcome}. This implies that \cref{thm:valid_inf_grad} holds.

\begin{corollary}[Valid post-selection inference via a noisy response]\label{thm:valid_inf_outcome}
    Under \cref{cond:affine}, if $\hat W_n = n^{-1/2} \sum_{i=1}^n A_i \bar W_{n, i} $ then $\MestPoutcome = \MestP$ and $\MestEoutcome = \MestE$. Therefore, under Assumption \ref{ass:existing_limits} and Conditions \ref{cond:berry_esseen}--\ref{cond:Sigma_est},
    \begin{align*}
        \lim_{n \to \infty} \Pr \rbr{ \xi^\top \MstarE \in \CI_n^{\xi, \alpha}\rbr{\MestEoutcome, \Shat} \middle|\, \supp\rbr{\MestPoutcome} = E}
        \geq
        1 - \alpha.
    \end{align*}
\end{corollary}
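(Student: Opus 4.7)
The plan is to establish a pathwise equivalence between the noisy-response problems \cref{eq:MestPoutcome}--\cref{eq:MestEoutcome} and the noisy-penalty problems \cref{eq:MestP}--\cref{eq:MestE}, and then invoke \cref{thm:valid_inf_grad} directly. Because \cref{thm:valid_inf_grad} already delivers the post-selection coverage in the noisy-penalty formulation, the entire content of the corollary reduces to showing that the two formulations produce identical minimizers almost surely, which is purely algebraic once the affine structure of the loss is exploited.

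The key computation is to expand the perturbed loss using \cref{cond:affine}. Since $a_{\theta, i}$ does not depend on $Y_i$, we have $\ell_\theta(Y_i + \gamma \bar W_{n, i}) = \ell_\theta(Y_i) + \gamma \theta^\top A_i \bar W_{n, i}$, so averaging over $i$ yields $\frac{1}{n}\sum_{i=1}^n \ell_\theta(Y_i + \gamma \bar W_{n, i}) = P_n \ell_\theta + \gamma n^{-1/2} \hat W_n^\top \theta$ for the choice $\hat W_n := n^{-1/2} \sum_{i=1}^n A_i \bar W_{n, i}$. Adding the penalty $n^{-1/2}\rho_\lambda(\theta)$ to both sides identifies the objective of \cref{eq:MestPoutcome} with that of \cref{eq:MestP}; an identical argument with $\gamma$ replaced by $-1/\gamma$ and the loss restricted to the $E$-coordinates equates the objectives of \cref{eq:MestEoutcome} and \cref{eq:MestE}. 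By \cref{cond:unique_solution}, both pairs of problems have unique minimizers almost surely, so $\MestPoutcome = \MestP$ and $\MestEoutcome = \MestE$ pathwise. Moreover, $\hat W_n = n^{-1/2} \sum_{i=1}^n A_i \bar W_{n, i}$ is centered Gaussian with covariance $\hat \Sigma_n := n^{-1} \sum_{i=1}^n A_i \bar\Sigma_{n, i} A_i^\top$, by independence of the $\bar W_{n, i}$ from each other and from the data; this is precisely the $\hat\Sigma_n$ to which the hypothesis \cref{cond:Sigma_est} is being applied, so all the hypotheses of \cref{thm:valid_inf_grad} are satisfied.

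From the pathwise equality of the estimators, the selection events $\{\supp(\MestPoutcome) = E\}$ and $\{\supp(\MestP) = E\}$ coincide, as do the coverage events $\{\xi^\top \MstarE \in \CI_n^{\xi, \alpha}(\MestEoutcome, \Shat)\}$ and $\{\xi^\top \MstarE \in \CI_n^{\xi, \alpha}(\MestE, \Shat)\}$, so the $(1-\alpha)$ conditional coverage from \cref{thm:valid_inf_grad} transfers verbatim. There is no serious analytic obstacle; the corollary is a clean reduction rather than a new asymptotic argument. The only things to be careful about are that the constant-in-$Y$ term $a_{\theta, i}$ drops out of the comparison because it is identical on both sides of the expansion, and that affineness of $\ell_\theta$ in $Y_i$ is exactly what permits the reduction: without it, perturbing $Y_i$ would not correspond cleanly to adding a linear-in-$\theta$ noise term to the objective, and the pathwise equivalence would break down.
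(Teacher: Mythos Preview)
Your proposal is correct and follows essentially the same approach as the paper: expand the perturbed loss using \cref{cond:affine} to identify the noisy-response objectives with the noisy-penalty objectives for the choice $\hat W_n = n^{-1/2}\sum_i A_i \bar W_{n,i}$, conclude $\MestPoutcome=\MestP$ and $\MestEoutcome=\MestE$, and then invoke \cref{thm:valid_inf_grad}. Your explicit remarks on the Gaussianity of $\hat W_n$ and on uniqueness via \cref{cond:unique_solution} are slightly more detailed than the paper's own proof, but the argument is the same.
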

Thus, from a computational perspective, valid post-selection inference on a model selected via penalized M-estimation requires only software to compute a penalized M-estimator, provided that the loss function is of an appropriate form. We will see that this holds for a variety of common M-estimators such as generalized linear models with canonical link functions.

\section{Application to generalized linear models}\label{sec:glms}

We now apply  our results from previous sections to the setting of a fixed-$X$ generalized linear model (GLM). Even in this very simple setting, existing selective inference tools are restrictive. For instance, consider logistic regression. Since the outcomes are Bernoulli, data thinning is not applicable~\citep{dharamshi_generalized_2025}, and the asymptotic results of \cite{lei_discussion_2025} require outcomes with continuous errors. Data fission~\citep{leiner_data_2025} can be applied in principle, using insights from \cite{neufeld_discussion_2025}, but challenges remain: for instance, (i) when the logistic regression model does not hold, the parameter targeted by data fission typically does not coincide with the true parameter of interest; and (ii) it is not clear how to target a particular allocation of Fisher information between the selection and inference steps. The asymptotic results of \cite{tian_selective_2018} are applicable, but require complex sampling approaches from a correctly-specified selective density in order to conduct inference. \citet{panigrahi_exact_2024}, \citet{huang_selective_2024}, and \citet{bakshi_selective_2024} extend these ideas to incorporate randomization.

We take our working model for $Y_i \in \RR$ to be a GLM with a canonical link function:
\begin{equation}\label{eq:exp_family}
    f_\theta (Y_i) = \exp\rbr{\frac{Y_i X_i^\top \theta - b(X_i^\top \theta)}{\alpha} + c(Y_i, \alpha)},
\end{equation}
where $X_i \in \RR^p$ are fixed regressors, $b: \RR \mapsto \RR$, 
and $\alpha$ is an over-dispersion parameter for the variance. 

Minimizing the negative log-likelihood $-\log f_\theta (Y_i)$ for the working model is equivalent to minimizing the loss
\begin{equation}\label{eq:glm_loss}
    \ell_\theta(Y_i)
    := 
     -Y_i X_i^\top \theta + b(X_i^\top \theta),
\end{equation}
which yields the score variable
\begin{equation*}
    \hat Z_n := \sqrt{n} P_n \dot \ell_{\Mstar} = \frac{1}{\sqrt{n}} \sum_{i=1}^n X_i \rbr{\dot b(X_i^\top \Mstar) - Y_i},
\end{equation*}
where $\Mstar$ was defined in \cref{eq:Mstar}. 

Our goal is valid post-selection inference with a GLM working model and an $L_1$ penalty for selection. We already established in \cref{prop:is_valid_penalty} that the $L_1$ penalty satisfies \cref{cond:penalty_union_convex}. Furthermore, examining the GLM loss function in \cref{eq:glm_loss}, it is clear that \cref{cond:affine} is satisfied with $A_{i} = -X_i$ and $a_{i, \theta} = b\rbr{X_i^\top \theta}$, and so valid inference can be obtained by adding noise to the outcomes as outlined in \cref{thm:valid_inf_outcome}. It remains to establish that the remaining conditions for \cref{thm:valid_inf_outcome} hold.

Since $X_i$ is fixed,
\begin{equation}\label{eq:glm_var}
    \Sigma_n
    := \Var\rbr{\hat Z_n}
    = \frac{1}{n} \sum_{i=1}^n \Var(Y_i) X_i X_i^\top.
\end{equation}
If $X^\top X$ is positive definite, then $\Sigma_n$ is invertible, satisfying \cref{cond:berry_esseen}(i).  \cref{cond:berry_esseen}(ii) holds under mild regularity conditions.
\cref{cond:taylor_expansion} involves a Taylor expansion of the loss around $\MstarE$ and holds under standard assumptions for the theory of M-estimation \citep[Section 5.2]{vaart_asymptotic_1998}, as established by the following result.

\begin{proposition}\label{prop:taylor_expansion_exists}
    Suppose that $\ell$ is thrice differentiable in $\theta$, and $\theta$ is restricted to a compact subset $\Theta \subset \RR^p$ over which we have uniform convergence, 
    \begin{equation*}
        \sup_{\theta \in \Theta} \sbr{P_n \ell_\theta - \EE P_n \ell_\theta} = o_p(1).
    \end{equation*}
    Furthermore, suppose that $\MstarE$ in \cref{eq:Mstar} is a unique minimizer, that $\EE P_n \ell_\theta$ has a unique minimizer $\MstarP$, and that we are in the local alternative setting where $\sqrt{n} \MstarP = O(1)$.
    Then under regularity conditions \citep{vaart_asymptotic_1998, newey_uniform_1991}, \cref{cond:taylor_expansion} holds for $H_n = H_n\rbr{\Mstar} := \EE P_n \ddot \ell_{\Mstar}$.

\end{proposition}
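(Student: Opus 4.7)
The plan is to invoke standard M-estimation asymptotic theory in three steps: (a) consistency of $\MestP$ and $\Mest$ at a $\sqrt{n}$-rate, (b) a second-order Taylor expansion of the score around $\Mstar$ with a remainder controlled by the third derivative of $\ell$, and (c) replacement of the sample Hessian $P_n \ddot\ell_{\Mstar}$ by its population version $H_n$. For step (a), I would first apply the argmin continuous mapping theorem (\citet[Theorem 5.7]{vaart_asymptotic_1998}) together with the uniform convergence hypothesis to get consistency of $\MestP$ for $\MstarP$ and of $\Mest$ for $\Mstar$. The noise term $\gamma n^{-1/2}\hat W_n^\top \theta$ and penalty $n^{-1/2}\rho_\lambda(\theta)$ in \eqref{eq:MestP} vanish uniformly over the compact $\Theta$: the former because $\hat W_n = O_p(1)$ under \cref{cond:Sigma_est}, the latter because the subgradients are uniformly bounded by \cref{cond:penalty_union_convex}(i). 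The local alternative $\sqrt{n}\MstarP = O(1)$ then forces $\Mstar$, $\MstarP$, $\MestP$, and $\Mest$ all to tend to zero. To upgrade to a $\sqrt{n}$-rate, I would examine the subgradient optimality condition at $\MestP$,
\begin{equation*}
    0 = P_n\dot\ell_{\MestP} + \gamma n^{-1/2}\hat W_n + n^{-1/2}\eta_n, \quad \eta_n \in \partial \rho_\lambda(\MestP),
\end{equation*}
and combine a first-order expansion of $P_n \dot\ell_{\MestP}$ around $\Mstar$ with invertibility of $H_n$ and boundedness of $\eta_n$ to extract $\MestP - \Mstar = O_p(n^{-1/2})$; the analogous argument handles $\Mest - \Mstar$.

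For step (b), thrice-differentiability of $\ell$ and the integral form of the Taylor remainder give
\begin{equation*}
    P_n \dot\ell_{\MestP} = P_n \dot\ell_{\Mstar} + P_n \ddot\ell_{\Mstar}\rbr{\MestP - \Mstar} + R_n,
\end{equation*}
with $\nbr{R_n}_2 \leq \tfrac{1}{2}\sup_{\theta \in \Theta}\nbr{P_n \dddot\ell_\theta} \cdot \nbr{\MestP - \Mstar}_2^2$. An integrable envelope for $\dddot\ell$ on $\Theta$ --- precisely what the phrase ``regularity conditions'' is buying us, in the sense of \citet{newey_uniform_1991} --- controls the supremum as $O_p(1)$, so combined with the $\sqrt{n}$-rate from (a) we obtain $R_n = O_p(n^{-1}) = o_p(n^{-1/2})$. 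For step (c), a uniform law of large numbers for $\ddot\ell_\theta$ on $\Theta$ yields $P_n\ddot\ell_{\Mstar} - H_n = o_p(1)$, hence $(P_n\ddot\ell_{\Mstar} - H_n)(\MestP - \Mstar) = o_p(n^{-1/2})$. Collecting the pieces recovers the first expansion in \cref{cond:taylor_expansion}, and the same argument --- without the penalty and noise terms in the KKT condition --- handles $\Mest$.

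The main obstacle is step (a), specifically the $\sqrt{n}$-rate for the penalized randomized estimator $\MestP$. Non-differentiability of $\rho_\lambda$ means one cannot simply invert a gradient equation, and deriving the rate relies on a first-order expansion that is seemingly circular with step (b). The standard resolution is a ``bootstrap''-type improvement: consistency combined with stochastic equicontinuity of the score first yields a crude $o_p(1)$ rate, which is then enough to validate a linearization with remainder $o_p(\nbr{\MestP - \Mstar}_2 + n^{-1/2})$; inverting $H_n$ and using $\hat W_n = O_p(1)$ and $\nbr{\eta_n}_2 = O(1)$ then extracts the $\sqrt{n}$-rate. Once this is secured, the rest of the argument is a mechanical application of Taylor's theorem and uniform laws of large numbers.
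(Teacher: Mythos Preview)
Your proposal is correct and follows essentially the same route as the paper: consistency via Theorem~5.7 of \citet{vaart_asymptotic_1998} (the penalty and noise vanishing uniformly on the compact $\Theta$), a second-order Taylor expansion of $P_n\dot\ell$ around $\Mstar$ with the remainder controlled by an integrable envelope on $\dddot\ell$, and replacement of $P_n\ddot\ell_{\Mstar}$ by $H_n$ via the law of large numbers. The one tactical difference concerns the $\sqrt{n}$-rate for $\MestP$, precisely the step you flag as delicate. Instead of your bootstrap/equicontinuity argument centered at $\Mstar$, the paper first Taylor-expands $P_n\dot\ell_{\MestP}$ around the \emph{unconstrained} population minimizer $\MstarP$, where $\EE P_n\dot\ell_{\MstarP}=0$ and hence $\sqrt{n}P_n\dot\ell_{\MstarP}=O_p(1)$ by the CLT directly; inverting the Hessian in the KKT identity yields $\sqrt{n}(\MestP-\MstarP)=O_p(1)$, and the local-alternative assumptions $\sqrt{n}\MstarP=O(1)$ and $\sqrt{n}\Mstar=O(1)$ then give $\sqrt{n}(\MestP-\Mstar)=O_p(1)$. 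This two-centering trick dissolves the apparent circularity without invoking stochastic equicontinuity, but otherwise your argument and the paper's coincide.
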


In the case of a GLM, the Hessian $H_n\rbr{\Mstar}$ in \cref{prop:taylor_expansion_exists} takes the form
\begin{equation}\label{eq:glm_hess}
    H_n(\theta)
    := \EE P_n \ddot \ell_{\theta}
    = P_n \ddot \ell_{\theta} = \frac{1}{n} \sum_{i=1}^n \ddot b(X_i^\top \theta) X_i X_i^\top.
\end{equation}
By construction of the working model \eqref{eq:exp_family}, $\ddot b(X_i^\top \theta) > 0$ since under the working model $\Var\rbr{Y_i} = \alpha \ddot b(X_i^\top \theta)$. Therefore, provided that
$X^\top X$ has full rank, the Hessian is positive definite. This establishes \cref{cond:unique_solution}, i.e., the uniqueness conditions of $\MestEoutcome$ and $\MestPoutcome$, along with the uniqueness conditions of \cref{prop:taylor_expansion_exists} (see \citet{tibshirani_lasso_2013} for a more detailed analysis of the settings under which $L_1$-penalized problems have unique solutions).

Next, we turn to \cref{cond:Sigma_est}, which involves estimating $\Sigma_n$ defined in \cref{eq:glm_var} using an estimator $\bar{\Sigma}_{n,i}$ of $\Var(Y_i)$. One approach, as taken by \citet{huang_selective_2024}, is to assume that the the working model \eqref{eq:exp_family} holds for some particular $\MstarP$. Then  $\Var\rbr{Y_i} = \alpha \ddot b(X_i^\top \MstarP)$ and $E(Y_i) = \dot b(X_i^\top \MstarP)$. Define the over-dispersion estimator 
\begin{equation}\label{eq:overdispersion}
    \hat \alpha(\theta) = \frac{1}{n - p} \sum_{i=1}^n \frac{\rbr{Y_i - \dot b(X_i^\top \theta)}^2}{\ddot b(X_i^\top \theta)}
\end{equation}
Assuming that the working model holds, classic GLM results provide a consistent estimator $\tilde \theta_n$ of $\MstarP$, which in turn provides a consistent estimator $\hat \alpha\rbr{\tilde \theta_n} \ddot b\rbr{X_i^\top \tilde \theta_n} $ of $\Var(Y_i)$. Under stronger regularity conditions, i.e., uniform integrability or convergence in distribution, consistency can be strengthened to $L_1$ consistency required for \cref{cond:Sigma_est} to hold \citep[Ch. 2.5]{vaart_asymptotic_1998}. Alternatively, nonparametric variance estimation approaches exist, e.g. using kernel-based regression methods \citep{hall_variance_1989, fan_efficient_1998}.

Finally, we turn to \cref{cond:sandwich_est}, which requires a variance estimator $\Shat$ satisfying
\begin{equation}\label{eq:outcome_sandwich}
    \lim_{n \to \infty} \Pr\rbr{\xi^\top \MstarE \in \CI_n^{\xi, \alpha}\rbr{\MestEoutcome, \Shat }} \geq 1-\alpha.
\end{equation}
Since $\MestEoutcome$ is the M-estimator in \cref{eq:MestEoutcome}, where $E$ defines a fixed model, \cref{eq:outcome_sandwich} is satisfied by the classic model-robust sandwich estimator $\Shat := \rbr{H_{n, E, E}(\MestEoutcome)}^{-1} \tilde \Sigma_{n, E, E} \rbr{H_{n, E, E}(\MestEoutcome)}^{-1}$~\citep{white_heteroskedasticity-consistent_1980}, where $H_{n, E, E}(\theta)$ is a submatrix of the Hessian defined in \cref{eq:glm_hess} and $\tilde \Sigma_{n, E, E}$ is a diagonal matrix of squared residuals $\rbr{\rbr{Y_i - \tfrac{1}{\gamma} \bar W_{n, i} } - \dot b(X_{i, E}^\top \MestEoutcome)}^2$.

This establishes all of the conditions necessary for application of \cref{thm:valid_inf_outcome}, whose full procedure is outlined in Algorithm \ref{algo:valid_inf_outcome}. A similar procedure implementing \cref{thm:valid_inf_grad} is given in \cref{sec:app_grad_algo}.

\begin{mybox}[label=algo:valid_inf_outcome]{implementing \cref{thm:valid_inf_outcome} for $L_1$-penalized GLMs}
\textbf{Input:} Data $\cbr{Y_i, X_i}$, penalty $\lambda > 0$, information split $\gamma > 0$.
\begin{enumerate}
    \item ~$\tilde\theta_n \leftarrow
    \argmin_{\theta \in \RR^{p}} P_n \ell_{\theta}$.
    \item ~Sample noise $\bar W_{n, i} \leftarrow \Norm\rbr{0,  \hat\alpha(\tilde\theta_n) \ddot b \rbr{X_i^\top \tilde\theta_n} }$.
    \item ~Estimate $\MestPoutcome$ as in \cref{eq:MestPoutcome} and define $E := \supp\rbr{\MestPoutcome}$.
    \item ~Estimate $\MestEoutcome$ as in \cref{eq:MestEoutcome}.
    \item ~Compute sandwich variance estimator using outcomes $\cbr{Y_i - \tfrac{1}{\gamma} \bar W_{n, i}}_{i=1}^n$:\footnotesize
    \begin{equation*}
        \Shat \leftarrow \rbr{H_{n, E, E}\rbr{\MestEoutcome}}^{-1} \sbr{\frac1n \sum_{i=1}^n X_{i, E} \rbr{Y_i - \tfrac{1}{\gamma} \bar W_{n, i} - \dot b(X_{i, E}^\top \MestEoutcome)}^2 X_{i, E}^\top}
    \rbr{H_{n, E, E}\rbr{\MestEoutcome}}^{-1}.
    \end{equation*}
\end{enumerate}
\textbf{Output:} A confidence interval centered at $\MestEoutcome$:
\begin{equation*}
    \CI^{\xi, \alpha}_n\rbr{\MestEoutcome, \Shat} = \sbr{\xi^\top \MestEoutcome \pm n^{-1/2} z_{1-\alpha/2}\rbr{ \xi^\top \Shat \xi}^{1/2}}.
\end{equation*}
\end{mybox}

\section{Experimental results}\label{sec:applications}

We now empirically study the procedure outlined in Algorithm \ref{algo:valid_inf_outcome}, first in simulation and then on the \emph{Teenage Friends and Lifestyle Study} dataset. All code is provided at \href{https://github.com/rflperry/m\_estimation\_SI}{https://github.com/rflperry/m\_estimation\_SI}.

\subsection{Simulation studies: comparisons to data thinning}

In the following simulations, we generated outcomes from a generalized linear model with canonical link function, i.e., $ \EE[Y_i] = g^{-1}\rbr{X_i^\top \theta_n^{*}}$.
The covariates $X_i \in \RR^{40}$ are normally distributed with variance one and an equicorrelation of $0.3$. The true mean function $\theta_n^{*}$ is a sparse vector with ten nonzero entries. Selection is performed using the lasso penalty $\rho_\lambda(\theta) := \lambda \nbr{\theta}_1$, where $\lambda \propto \sqrt{(1 + \gamma^2) \log(p)} \, \mathrm{sd}(Y_1, \dots, Y_n)$; recall from \cref{eq:MestPoutcome} that this penalty is then further scaled by $n^{-1/2}$. We set $\gamma = 1$, which effectively allocates half of the information in the data for selection and half for inference.

Our first simulation investigates the setting of a linear mean and homoskedastic Gaussian noise with an unknown variance, i.e., $Y_i = X_i^\top \theta_n^{*} + \varepsilon_i$. We apply Algorithm \ref{algo:valid_inf_outcome} to conduct inference on the model selected using $L_1$-penalized ordinary least squares. In this setting, score thinning closely aligns with the Gaussian data thinning proposals of \citet{rasines_splitting_2023} (with a different variance estimator) and \citet{lei_discussion_2025}. We compare score thinning to the randomized conditional selective inference approaches of \citet{panigrahi_exact_2024} and \citet{huang_selective_2024}, both of which have the \emph{exact} same selection procedure as score thinning but rely on bespoke conditional likelihoods for inference. Lastly, we include comparisons to sample splitting (which uses half of the samples for selection and half for inference, in order to match the information split of $\gamma = 1$) and the classical approach (which reuses the data for selection and inference).

In the left-hand panel of \cref{fig:linear}, we display the average coverage of $90\%$ confidence intervals for all parameters in each selected model; a method that controls the conditional coverage in the sense of \cref{eq:conditional_coverage} will attain $90\%$ coverage. The classical method (which reuses the data for selection and inference) fails to achieve the nominal coverage, and thus we exclude it from the remaining panels. All other approaches attain 90\% coverage, except that of \citet{huang_selective_2024}, which approximates a selective MLE and has been shown to undercover in practice \citep{panigrahi_exact_2024}. In contrast, \citet{panigrahi_exact_2024} employ an \emph{exact} approach that attains the nominal coverage. In the middle panel of \cref{fig:linear}, we assess the power of each method by comparing the distribution of average confidence interval widths. Score thinning yields wider intervals on average than the randomized conditional inference approaches of \citet{panigrahi_exact_2024} and \citet{huang_selective_2024}, which is expected since the conditional approaches are able to use more information for inference and since \citet{huang_selective_2024} does not attain the nominal coverage. Score thinning yields narrower intervals than sample splitting, as suggested by Proposition 1 of \citet{rasines_splitting_2023}. The poor performance of sample splitting may also be due to the fact that sample splitting can lead to a poorly-conditioned design matrix when $n$ is small. Lastly, we assess the model selection accuracy in the right-hand panel of \cref{fig:linear} by comparing the average false discovery rate (FDR) of each selective inference method. The FDR of a model $E$ is given by $\abr{E \setminus \supp\rbr{\theta_n^*}} / |E|$, i.e., the fraction of incorrectly selected features relative to the total number of selected features. As mentioned earlier, score thinning and the randomized conditional approaches of \citet{huang_selective_2024} and \citet{panigrahi_exact_2024} have identical selection events and therefore have identical FDRs, which are comparable to that of sample splitting; all methods select models of a similar size $|E|$.

\begin{figure}[!htb]
    \centering
    \includegraphics[width=1.0\linewidth]{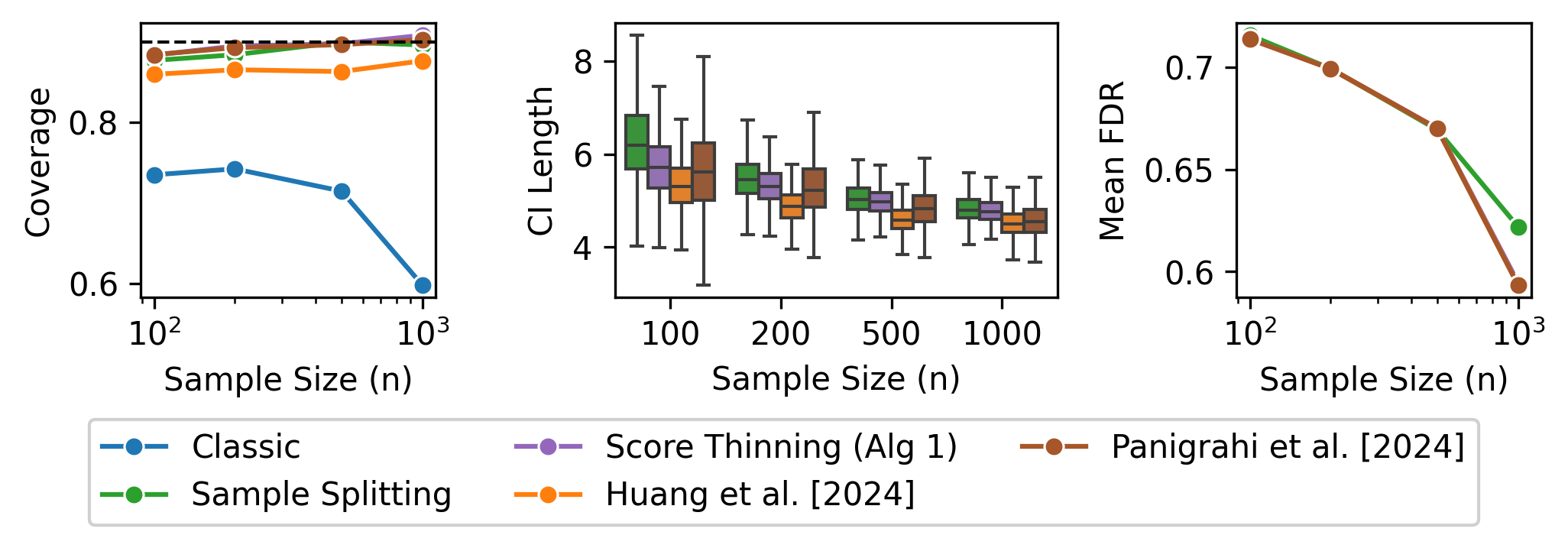}
    \caption{Outcomes are generated from a linear model with i.i.d. Gaussian noise whose variance is unknown. Inference is conducted on the linear model selected by $L_1$-penalized ordinary least squares. Results are aggregated across $1000$ simulated datasets. Since the classical approach (which reuses the data for selection and inference) fails to attain valid coverage, we omit it from the center and right-hand panels.}
    \label{fig:linear}
\end{figure}

In our second simulation, outcomes are generated from a Poisson regression model, with and without Gamma-distributed overdisperion. With overdispersion, the outcomes are marginally negative binomial (NB). Inference is conducted on the model selected using $L_1$-penalized Poisson regression. Since the outcomes are Poisson or NB, data thinning can be applied to obtain datasets for selection and inference that are \emph{exactly} independent~\citep{neufeld_data_2024}. Score thinning also applies. In the top row of \cref{fig:poisson} (in the absence of overdispersion), we see that all of the selective inference methods attain nominal coverage and perform similarly. However, in the bottom row (in the presence of overdispersion), we see that Poisson thinning fails to attain nominal coverage; this is because applying Poisson thinning to NB data does not yield independent datasets. NB thinning and our score thinning proposal have comparable performances; however, while the former requires correct specification of the data-generating distribution, the latter requires much weaker assumptions.

\begin{figure}[!htb]
    \centering
    \includegraphics[width=1.0\linewidth]{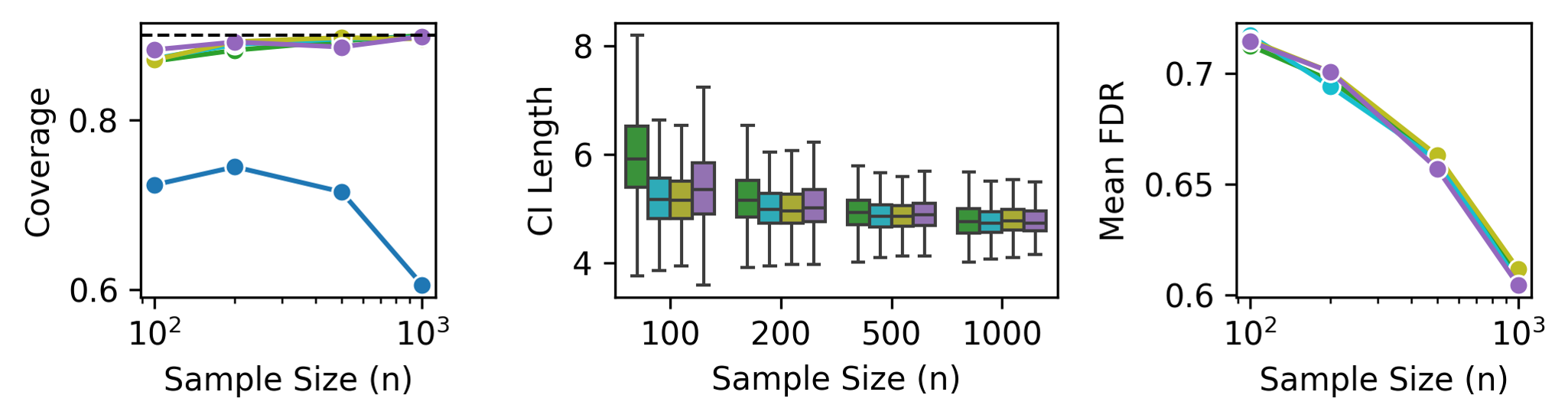}
    \includegraphics[width=1.0\linewidth]{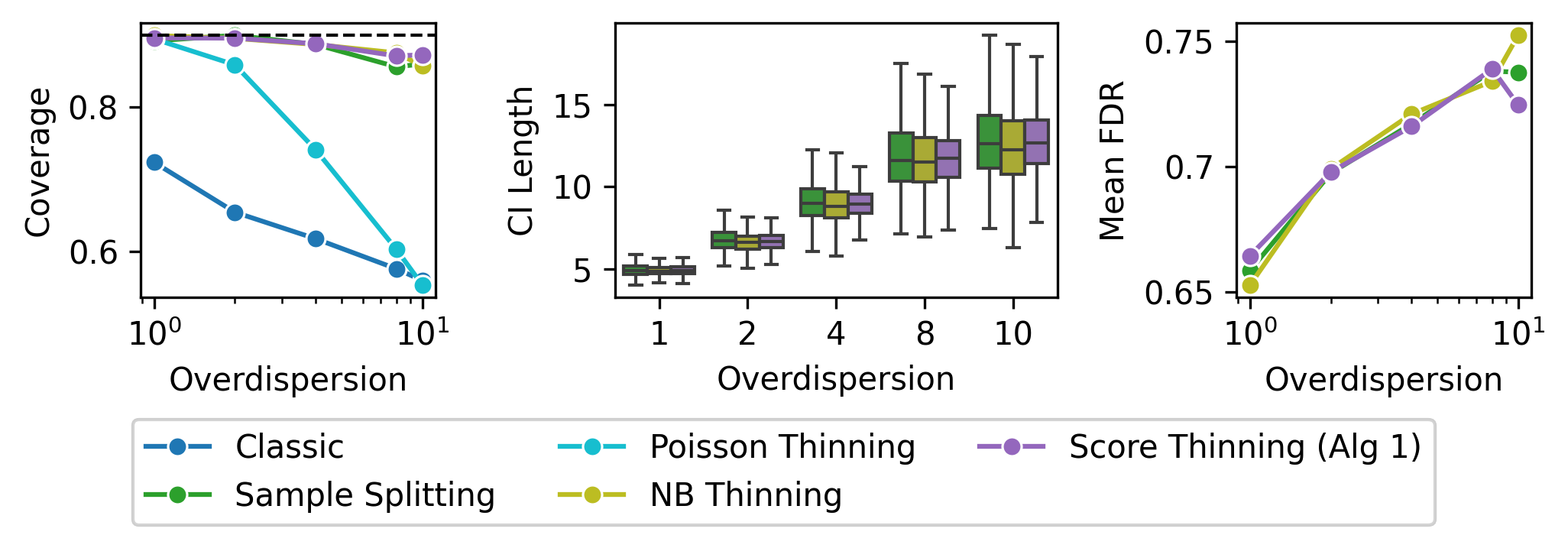}
    \caption{We simulate outcomes from a Poisson regression model with and without overdispersion. Inference is conducted on the model selected using $L_1$-penalized Poisson regression. Confidence intervals are computed for all coefficients in the selected model. Results are aggregated across $1000$ simulated datasets. Since the classical approach (which reuses the data for selection and inference) fails to attain nominal coverage, we omit it from the center and right-hand panels. \textbf{(Top):} In the absence of overdispersion, as we vary the sample size, all thinning methods perform comparably. \textbf{(Bottom):} We vary the overdispersion of the variance relative to the mean; this induces a negative binomial outcome distribution. Since Poisson thinning fails to attain valid coverage, we omit it from the center and right-hand panels.}
    \label{fig:poisson}
\end{figure}

\subsection{Simulation studies: comparisons beyond data thinning}

We now move to settings in which data thinning \emph{cannot be applied} to obtain \emph{exactly} independent datasets for selection and inference. We first consider outcomes generated from a logistic regression model. Inference is conducted on the model selected using $L_1$-penalized logistic regression. Since the outcomes are Bernoulli, it is not possible to thin them into independent datasets~\citep{dharamshi_generalized_2025}, although score thinning still applies. The left-hand panel of \cref{fig:logistic} shows that the classical method (which reuses the data for selection and inference) fails to attain valid coverage. The coverage of \citet{huang_selective_2024} once again falls below the nominal level. The exact approach of \citet{panigrahi_exact_2024} is not displayed as it is only applicable to the linear Gaussian setting. In principle, the exact approach of \citet{bakshi_selective_2024} could be applied, although they do not provide software for their extension to logistic regression. In the center panel of \cref{fig:logistic}, the randomized conditional method of \citet{huang_selective_2024} has the narrowest intervals, possibly because they fail to attain the nominal coverage. Sample splitting yields the widest intervals, again likely due to the ill-conditioning of the design matrix. In the right-hand panel of \cref{fig:logistic}, the FDRs are comparable.

We note that existing software for (penalized) logistic regression, e.g, \verb=glmnet= in \verb=R= \citep{friedman_regularization_2010}, typically requires that the inputted outcome values are binary or at least non-negative. Because the noisy outcomes in \cref{eq:MestPoutcome} and \cref{eq:MestEoutcome} are non-binary and contain negative values, we cannot use \verb=glmnet= out of the box to solve \cref{eq:MestPoutcome} or \cref{eq:MestEoutcome}. Instead, our software implementation makes use of the \verb=Python= package \verb=RegReg=~\citep{regreg}.

\begin{figure}[!htb]
    \centering
    \includegraphics[width=1.0\linewidth]{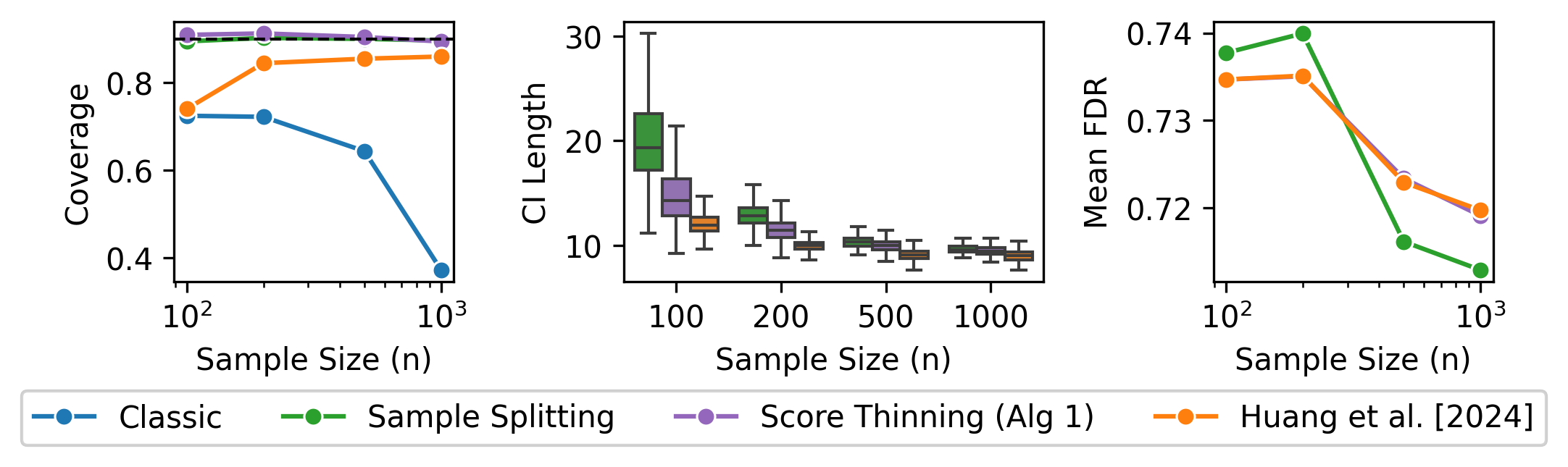}
    \caption{We simulate outcomes from a logistic regression model. Inference is conducted on the model selected using $L_1$-penalized logistic regression. Confidence intervals are computed for all coefficients in the selected model. Results are aggregated across $1000$ simulated datasets. Since the classical approach (which reuses the data for selection and inference) fails to attain valid coverage, we omit it from the center and right-hand panels.}
    \label{fig:logistic}
\end{figure}

In our final simulation, we consider a more involved setting for which no existing methods are available in the selective inference literature: cluster-correlated data. We simulate observations whose mean is linear in the covariates $X_{i, j}$ but which have a cluster-specific intercept, i.e., $Y_{i, j} = X_{i, j}^\top \theta_n^{*} + \nu_i + \varepsilon_{i, j}$ where $\nu_i \sim \Norm(0, 0.5)$,  $\varepsilon_{i, j}$ are i.i.d. symmetric Laplace random variables with variance $0.5$, $j = 1, \dots, 10$, and $i = 1, \dots, n/10$; at present, no method is available to thin a Laplace random variable into independent components.
Thus, $n$ observations are grouped into equicorrelated clusters of size ten. Algorithm \ref{algo:valid_inf_outcome} is not immediately applicable to this setting since the working model in \cref{eq:exp_family} assumed independent outcomes rather than cluster-correlated outcomes. We therefore apply Algorithm \ref{algo:valid_inf_outcome} with minor modifications: in Step 2 we estimate the marginal variance and equicorrelation parameter, and in Step 5 we use a sandwich variance estimator with an equicorrelated working model. \cref{fig:clustered} shows that the classical approach (which reuses the data for selection and inference) again fails to achieve the nominal coverage, while score thinning attains narrower confidence intervals than sample splitting. We note that sample splitting is performed on the cluster level in order for the independence of the selection and inference steps to be preserved. We do not compare to a randomized conditional approach as no existing proposal can accommodate this setting.

\begin{figure}[!htb]
    \centering
    \includegraphics[width=1.0\linewidth]{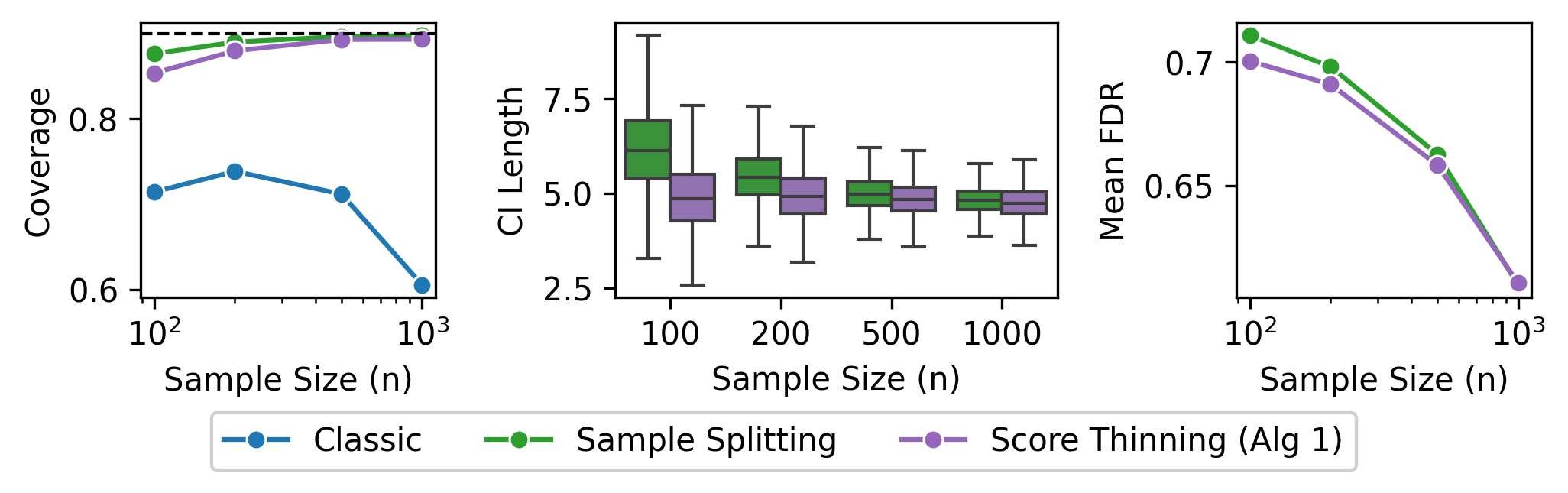}
    \caption{Outcomes are generated from a linear model with equicorrelated non-Gaussian noise. Inference is conducted in the linear model selected by $L_1$-penalized linear regression. Results are aggregated across $5000$ simulated datasets. Since the classical approach (which reuses the data for selection and inference) fails to attain valid coverage, we omit it from the center and right-hand panels.}
    \label{fig:clustered}
\end{figure}

\subsection{Application to principal components network regression}

The \emph{Teenage Friends and Lifestyle Study} dataset contains the survey responses of secondary school students in Glasgow \citep{michell_peer_1996}. The students were surveyed three times between 1995 and 1997, and asked various questions about their lifestyle including their smoking usage and who their friends were. Thus, the observations lie on an observed friendship network.

Using the first wave of survey responses, \citet{hayes_estimating_2025} estimate the association between the sex of the student and their odds of smoking. Without accounting for the network, a logistic regression model accounting for age and church attendance estimates the odds ratio of being a smoker when male rather than female to be $0.30$ with a $95\%$ confidence interval of $(0.11, 0.81)$. However, we can see from the left and middle panels of \cref{fig:glasgow} that there are groups of friends who smoke.

To determine whether the association between sex and smoking persists after accounting for the friendship network, \citet{hayes_estimating_2025} further include the singular vectors from the singular value decomposition of the network's adjacency matrix as regression covariates. However, in doing so they face two uncertain modeling choices. First, since the graph is directed, the adjacency matrix is non-symmetric and thus the left and right singular vectors are different. Specifically, the former relate to outgoing edges in the network while the latter relate to incoming edges. Second, the number of singular vectors $r$ to include in the regression is unknown, and implies an assumption on the rank of the network. The authors arbitrarily choose to include only the right singular vectors in the regression, and then present results for $r =1, \dots, 25$.

To adaptively choose a model, we employ score thinning in Algorithm \ref{algo:valid_inf_outcome} (using an $L_1$-penalized logistic regression model with $\gamma = 1$ and the same $\lambda$ as in the simulations). The $n \times p$ design matrix $X$ consists of $n=152$ subjects and $p=56$ features, which consist of an intercept, age, sex, church attendance (four categories), and all of the first $25$ left and right singular vectors. Since the singular vectors used to construct $X$ are based on the entire network, sample splitting is not applicable. The noisy outcomes used for selection and inference are plotted in the right-hand panel of \cref{fig:glasgow}.

The singular vectors selected via $L_1$-penalized logistic regression using the noisy responses include a mix of non-consecutive left and right singular vectors. The estimated odds ratio of being a smoker when male rather than female is $0.43$ with a $95\%$ confidence interval of $(0.06, 3.1)$. \citet{hayes_estimating_2025} similarly found non-significant estimates, in line with the hypothesis that the apparent association between sex and smoking is the spurious result of the network structure.

\begin{figure}[!htb]
    \centering
    \includegraphics[width=0.38\linewidth]{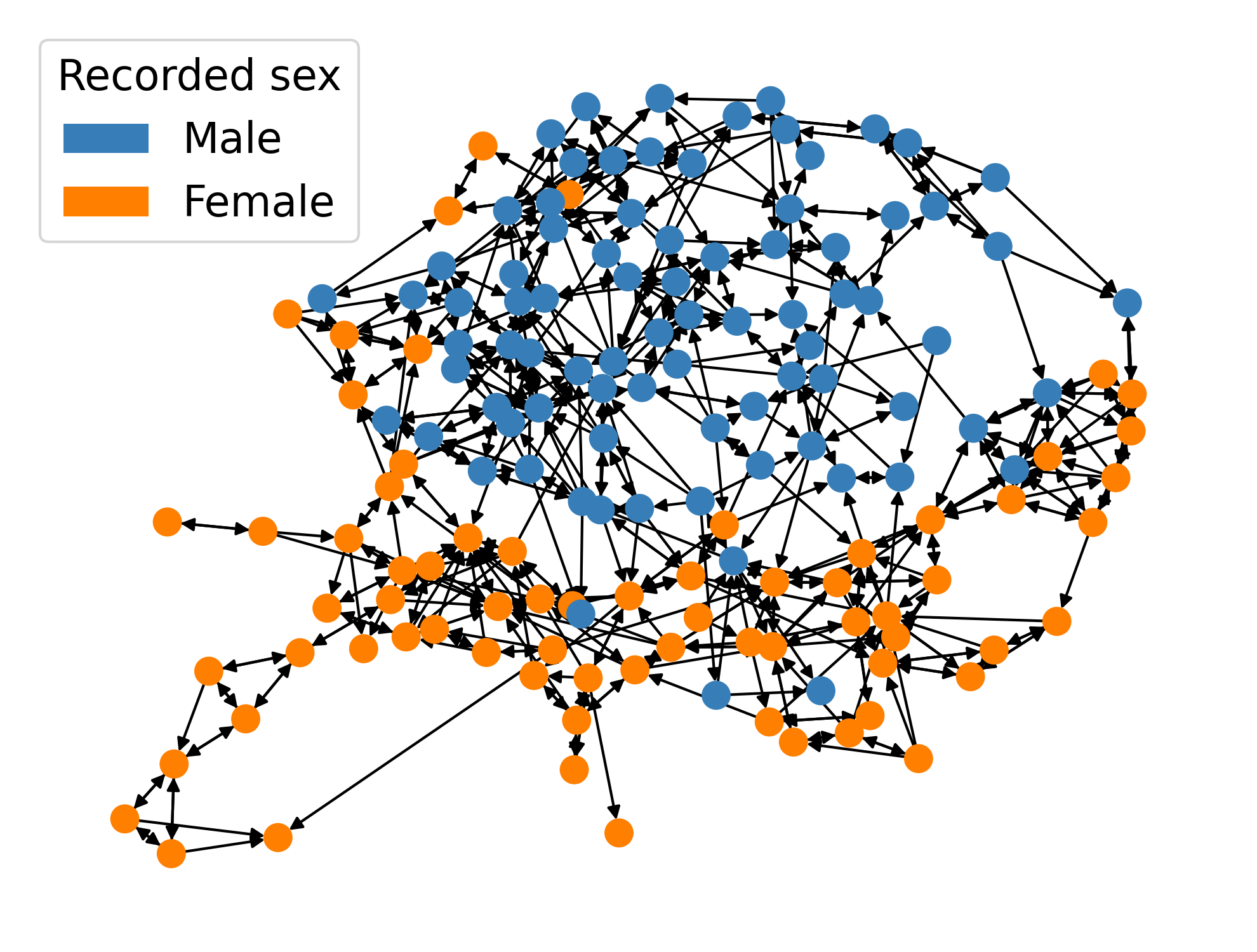}
    \includegraphics[width=0.38\linewidth]{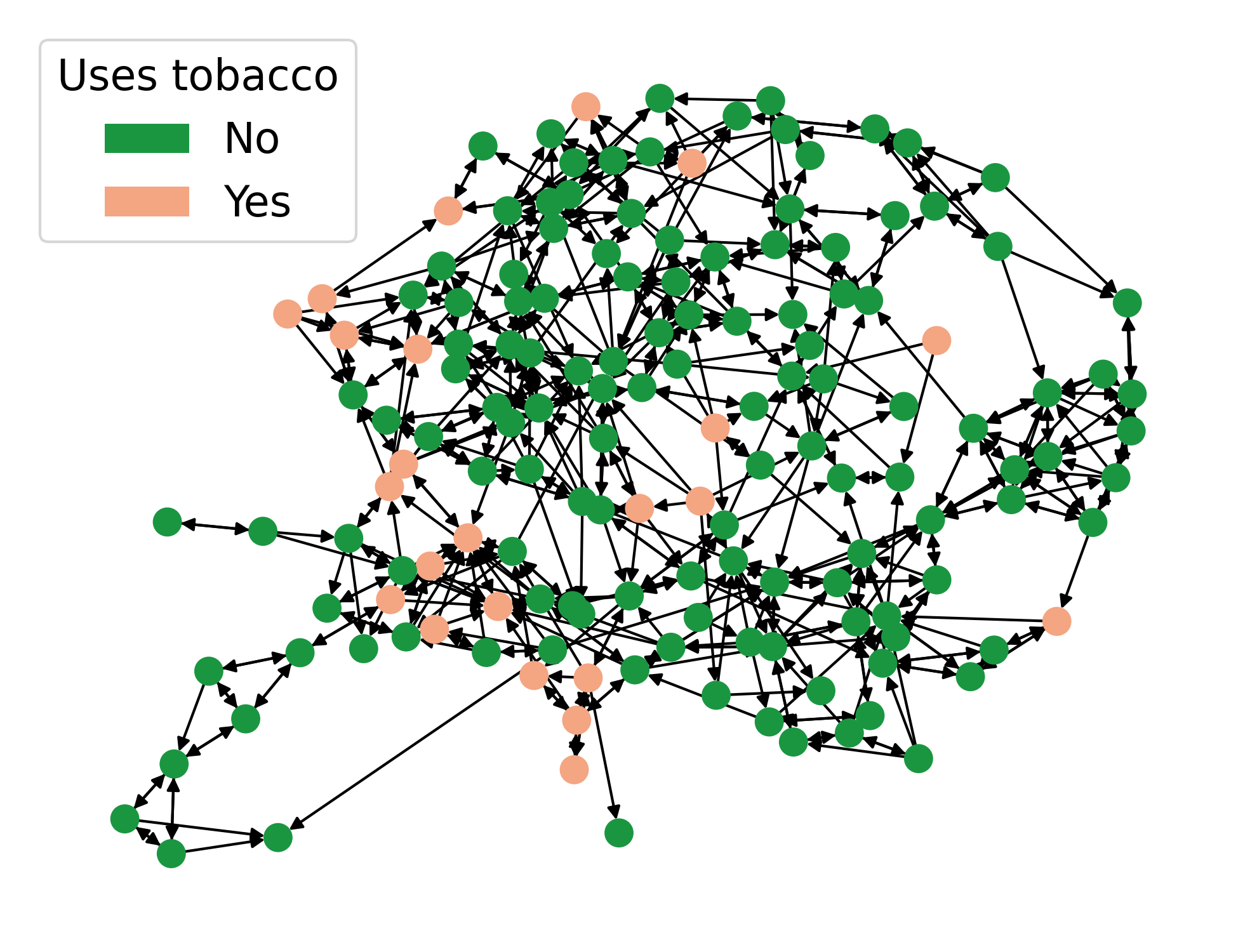}
    \includegraphics[width=0.22\linewidth]{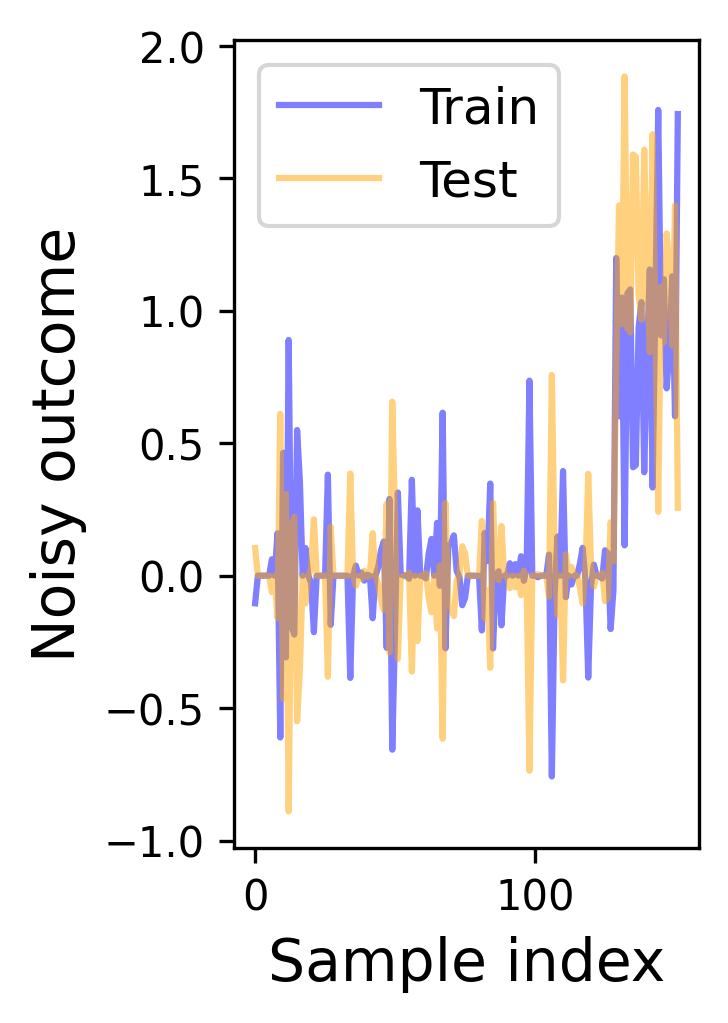}
    \caption{(Left): the friendship network of students, colored by recorded sex. (Middle): the friendship network of students, colored by their reported tobacco usage $Y$. (Right): the noisy outcomes constructed by adding and subtracting Gaussian noise to the outcomes $Y$, i.e., $Y_i + \bar W_{n, i}$ and $Y_i - \bar W_{n, i}$. These sets are not independent, but selection and inference using them are asymptotically independent.}
    \label{fig:glasgow}
\end{figure}

\section{Discussion}\label{sec:discussion}

Existing approaches to selective inference trade off statistical efficiency, ease of implementation, flexibility of the selection procedure, and distributional assumptions. Fully conditional approaches maximize the statistical efficiency in the sense of \citet{fithian_optimal_2017} by conditioning on the minimal amount of information; however, these methods require bespoke derivations and computational implementations for each selection procedure~\citep{lee_exact_2016, tian_selective_2018, panigrahi_exact_2024, huang_selective_2024}. On the other hand, applying data thinning to the response to obtain exactly independent datasets for selection and inference can accommodate arbitrary selection procedures and is easy to implement; yet, data thinning is less efficient and requires strong distributional assumptions~\citep{rasines_splitting_2023, neufeld_data_2024, dharamshi_generalized_2025}.

We have introduced score thinning, which is situated in between these two classes of approaches. Our results suggest that adding and subtracting Gaussian noise --- that is, conducting ``Gaussian data thinning'' on non-Gaussian data --- may be somewhat robust to model misspecification in penalized M-estimation post-selection inference tasks, provided that the sample size is sufficiently large.

Furthermore, our work demonstrates an equivalence between adding and subtracting noise from the outcome, as in Algorithm \ref{algo:valid_inf_outcome}, and adding and subtracting noise from the gradient, as in Algorithm \ref{algo:valid_inf_grad} and other selective inference papers \citep{tian_selective_2018, panigrahi_exact_2024, huang_selective_2024}. While the latter may be more widely applicable to penalized M-estimation problems, the former is often more practical as it avoids the need for bespoke software.

Our work suggests many new questions. First, our theoretical results are limited to the setting where $p$ is fixed, and yet the setting where $p$ is growing is also of substantial interest. Second, while \cref{cond:penalty_union_convex} is satisfied by the $L_1$ penalty, it remains to show that it holds for other popular penalties such as the group lasso. Lastly, the idea of thinning functions of the data, rather than the data itself, is a widely applicable idea which may allow for further development of strategies inspired by data thinning in the absence of distributional assumptions.

\section{Acknowledgments}
The authors thank the anonymous reviewers for their valuable suggestions. This work is supported in part by: an Amazon AI Fellowship to Ronan Perry; NSF CAREER Award DMS-2337882 to Snigdha Panigrahi; and NSF DMS-2322920, NSF DMS-2514344, NIH 5P30DA048736, and ONR N00014-23-1-2589 to Daniela Witten.

\newpage
\bibliographystyle{plainnat}
\bibliography{references}

\appendix

\setcounter{equation}{0}
\renewcommand{\theequation}{\thesection.\arabic{equation}}
\setcounter{theorem}{0}
\renewcommand{\thetheorem}{\thesection.\arabic{theorem}}
\setcounter{lemma}{0}
\renewcommand{\thelemma}{\thesection.\arabic{lemma}}

\section{Supporting theory and their proofs}

We state and prove technical results necessary for proving results in the main text.

\subsection{An extended Berry-Esseen bound}

We first recall the multivariate Berry-Esseen Theorem.
    \begin{theorem}[Berry-Esseen~\citep{raic_multivariate_2019}]\label{thm:berry_esseen}
        Let $U_i$ be independent mean zero random variables in $\RR^p$ and let $V_n := \sum_{i=1}^n U_i$ such that $\Var(V_n) = \sum_{i=1}^n \Var(U_i) = I_p$. Define $V \sim \Norm(0, I_p)$. For any measurable and convex set $A \subseteq \RR^p$,
        \begin{equation*}
            \abr{\Pr\rbr{V_n \in A} - \Pr\rbr{V \in A}} \leq \rbr{42 p^{1/4} + 16} \sum_{i=1}^n \EE\nbr{U_i}_2^3.
        \end{equation*}
    \end{theorem}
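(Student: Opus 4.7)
The plan is to prove this classical result via Stein's method combined with a convex-set smoothing argument, following the blueprint of Bentkus as refined by Rai\v{c}. The central difficulty is that $\mathbf{1}_A$ is not smooth: Stein's method naturally handles smooth test functions, so one must first replace $\mathbf{1}_A$ by a Gaussian convolution and then control the approximation error using geometric properties of the standard Gaussian measure on convex sets.

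For any $\varepsilon > 0$, I would set $f_A^\varepsilon := \mathbf{1}_A * \phi_\varepsilon$, where $\phi_\varepsilon$ is the density of $\Norm(0, \varepsilon^2 I_p)$, and decompose
\begin{equation*}
\Pr(V_n \in A) - \Pr(V \in A) = \bigl[\EE f_A^\varepsilon(V_n) - \EE f_A^\varepsilon(V)\bigr] + \bigl[\Pr(V_n \in A) - \EE f_A^\varepsilon(V_n)\bigr] + \bigl[\EE f_A^\varepsilon(V) - \Pr(V \in A)\bigr].
\end{equation*}
For the first (smooth) bracket, I would solve the multivariate Stein equation $\Delta g(x) - x^\top \nabla g(x) = f_A^\varepsilon(x) - \EE f_A^\varepsilon(V)$ and invoke standard estimates showing that $g$ inherits derivative bounds of order $\varepsilon^{-k}$ from $f_A^\varepsilon$, since each derivative of $\phi_\varepsilon$ costs a factor of $\varepsilon^{-1}$. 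A leave-one-out third-order Taylor expansion of $\EE g(V_n)$ about $V_n - U_i$, summed over $i$, using $\EE U_i = 0$ together with $\sum_i \Var(U_i) = I_p$ to annihilate the first two orders, yields a remainder of order $\varepsilon^{-2} \sum_i \EE \nbr{U_i}_2^3$.

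For the remaining two brackets, the crucial geometric input is Nazarov's inequality: for $V \sim \Norm(0, I_p)$ and any convex $A$, the Gaussian measure of the $\varepsilon$-shell $\cbr{x : \mathrm{dist}(x, \partial A) \leq \varepsilon}$ is at most $C p^{1/4} \varepsilon$, uniformly in $A$. The third bracket is bounded by this shell probability directly. For the second bracket, I would write the shell as $A^\varepsilon \setminus A^{-\varepsilon}$, a difference of two convex sets, so the Berry--Esseen bound itself can be bootstrapped to transfer the shell estimate from $V$ to $V_n$. The sharp $p^{1/4}$ (rather than the trivial $p^{1/2}$) in Nazarov's inequality is precisely what will ultimately drive the $p^{1/4}$ prefactor in the final bound.

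The main technical obstacle is obtaining a rate that is \emph{linear} in $\sum_i \EE\nbr{U_i}_2^3$ with the explicit constants $42$ and $16$. A naive balancing of the Stein remainder $\varepsilon^{-2} \sum_i \EE\nbr{U_i}_2^3$ against the smoothing error $p^{1/4}\varepsilon$ would yield only an exponent of $1/3$ on $\sum_i \EE\nbr{U_i}_2^3$; recovering the linear rate requires Rai\v{c}'s refined recursive argument, in which the smoothing inequality is iteratively applied to the already-near-Gaussian $V_n$ rather than to $V$ alone, and the universal constants are tracked sharply through each iteration. This recursive bookkeeping, together with the sharp form of Nazarov's inequality, is what I expect to consume the bulk of the work.
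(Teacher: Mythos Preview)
The paper does not prove this theorem: it is stated as a citation of Rai\v{c} (2019) and used as a black-box input to Lemma~A.2 and beyond. There is therefore no ``paper's own proof'' to compare your proposal against.

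That said, your sketch is a faithful outline of the strategy in Rai\v{c}'s paper (and the Bentkus lineage it refines): Stein's method applied to a Gaussian-smoothed indicator, Ball--Nazarov control of Gaussian surface area of convex sets yielding the $p^{1/4}$ factor, and the recursive/inductive smoothing argument to upgrade the naive $1/3$-exponent to the linear rate in $\sum_i \EE\nbr{U_i}_2^3$. One small correction: in the Stein remainder you would more naturally encounter a bound involving the third-derivative sup-norm of the Stein solution $g$, which scales like $\varepsilon^{-1}$ (not $\varepsilon^{-2}$) after smoothing by $\phi_\varepsilon$, because the Stein solution gains one order of regularity over the test function; getting this scaling right is essential for the recursion to close with a linear rate. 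The explicit constants $42$ and $16$ are the output of Rai\v{c}'s careful numerical optimization of this recursion and would require substantial bookkeeping to reproduce, but for the purposes of the present paper only the order $O\!\left(p^{1/4}\sum_i \EE\nbr{U_i}_2^3\right)$ is ever used.
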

    Equivalently, this result says that
    \begin{equation*}
        \abr{\Pr\rbr{V_n \in A} - \Pr\rbr{V \in A}} = O\rbr{p^{1/4} \sum_{i=1}^n \EE\nbr{U_i}_2^3}
    \end{equation*}
    for any set $A \subseteq \RR^p$ that is convex and measurable.

    The following lemma follows from \cref{thm:berry_esseen}.
    \begin{lemma}\label{lemma:extended_be}
        Let $U_i$ be independent random variables in $\RR^p$ and assume that $\mu_{n, i} := \EE[U_i]$ is finite. Define $\hat V_n := n^{-1/2} \sum_{i=1}^n U_i$ and assume $\Sigma_n := \Var(\hat V_n) = \frac1n \sum_{i=1}^n \Var(U_i)$ is invertible. Define $\mu_n := \tfrac1n \sum_{i=1}^n \mu_{n, i}$ and let $V_n \sim \Norm(\sqrt{n} \mu_n, \Sigma_n)$. Then for any set $A \subseteq \RR^p$ which can be expressed as a union of $K$ measurable almost surely disjoint and convex sets, it holds that
        \begin{equation*}
            \abr{\Pr\rbr{\hat V_n \in A} - \Pr\rbr{V_n \in A}} = O\rbr{K p^{1/4} n^{-3/2} \eigmin{\Sigma_n}^{-3/2} \sum_{i=1}^n \EE \nbr{U_i - \mu_{n, i}}_2^3}.
        \end{equation*}
    \end{lemma}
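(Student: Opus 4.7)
The approach is to reduce to the classical multivariate Berry-Esseen bound (Theorem~\ref{thm:berry_esseen}) by an affine standardization that turns $\hat V_n$ into a sum of mean-zero, identity-covariance increments, and then to exploit the preservation of convexity and disjointness under affine bijections to handle the $K$ pieces of $A$ by a union-bound argument. The Mahalanobis rescaling will produce exactly the $\eigmin{\Sigma_n}^{-3/2}$ factor in the final rate.

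First, I would set $\tilde U_i := n^{-1/2}\Sigma_n^{-1/2}(U_i - \mu_{n, i})$, which is well-defined since $\Sigma_n$ is invertible. These are independent and mean zero, with $\sum_{i=1}^n \Var(\tilde U_i) = \Sigma_n^{-1/2}\rbr{\tfrac1n \sum_i \Var(U_i)}\Sigma_n^{-1/2} = I_p$, matching the hypotheses of Theorem~\ref{thm:berry_esseen}. Setting $\tilde V_n := \sum_{i=1}^n \tilde U_i$, $V \sim \Norm(0, I_p)$, and introducing the affine bijection $T(x) := \Sigma_n^{-1/2}(x - \sqrt{n}\mu_n)$, we have $\tilde V_n = T(\hat V_n)$ almost surely, and $V$ equals $T(V_n)$ in distribution, so $\Pr\rbr{\hat V_n \in A} = \Pr\rbr{\tilde V_n \in T(A)}$ and $\Pr\rbr{V_n \in A} = \Pr\rbr{V \in T(A)}$. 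Since $T$ is an affine bijection, $T(A) = \bigcup_{k=1}^K T(A_k)$ is again an almost-surely-disjoint union of $K$ measurable convex sets.

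Second, by countable additivity and the triangle inequality,
\[
\abr{\Pr\rbr{\hat V_n \in A} - \Pr\rbr{V_n \in A}} \leq \sum_{k=1}^K \abr{\Pr\rbr{\tilde V_n \in T(A_k)} - \Pr\rbr{V \in T(A_k)}},
\]
and applying Theorem~\ref{thm:berry_esseen} to each convex piece $T(A_k)$ bounds each summand by $(42 p^{1/4} + 16) \sum_i \EE \nbr{\tilde U_i}_2^3$. Using that $\nbr{\Sigma_n^{-1/2} v}_2 \leq \eigmin{\Sigma_n}^{-1/2} \nbr{v}_2$ for the symmetric positive-definite matrix $\Sigma_n$, we obtain $\nbr{\tilde U_i}_2^3 \leq n^{-3/2}\eigmin{\Sigma_n}^{-3/2}\nbr{U_i - \mu_{n, i}}_2^3$, and substituting yields the claimed $O\rbr{K p^{1/4} n^{-3/2}\eigmin{\Sigma_n}^{-3/2}\sum_i \EE\nbr{U_i - \mu_{n, i}}_2^3}$ rate. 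I do not expect any serious obstacle here; the argument is essentially bookkeeping once one recognizes that an affine bijection sends a finite disjoint union of convex sets to a finite disjoint union of convex sets, and that the universal prefactor in Theorem~\ref{thm:berry_esseen} applies uniformly across every piece.
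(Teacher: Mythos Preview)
Your proposal is correct and follows essentially the same approach as the paper: both proofs apply the affine standardization $x \mapsto \Sigma_n^{-1/2}(x - \sqrt{n}\mu_n)$, note that this preserves the structure of $A$ as a union of $K$ almost surely disjoint convex sets, apply the triangle inequality over the $K$ pieces, invoke Theorem~\ref{thm:berry_esseen} on each piece, and then bound $\nbr{n^{-1/2}\Sigma_n^{-1/2}(U_i - \mu_{n,i})}_2^3$ via the minimum eigenvalue of $\Sigma_n$. The only difference is notational---you name the affine map $T$ explicitly while the paper works directly with the transformed set $\tilde A$.
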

    \begin{proof}\,

    Define $\tilde A := \cbr{\Sigma_n^{-1/2}(a - \sqrt{n}\mu_n): a \in A}$ (omitting a dependence on $n$ for brevity), and note that
    \begin{align}
        &\abr{\Pr\rbr{\hat V_n \in A} - \Pr\rbr{V_n \in A}}\nonumber\\
        &= \abr{\Pr\rbr{\Sigma_n^{-1/2} \rbr{\hat V_n - \sqrt{n}\mu_n} \in \tilde A} - \Pr\rbr{\Sigma_n^{-1/2} \rbr{V_n - \sqrt{n}\mu_n} \in \tilde A}}.\label{eq:extended_be_proof}
    \end{align}
    The term
    \begin{equation*}
        \Sigma^{-1/2}_n \rbr{\hat V_n - \sqrt{n} \mu_n} = \sum_{i=1}^n n^{-1/2} \Sigma^{-1/2}_n \rbr{U_i - \mu_{n, i}}
    \end{equation*}
    is the sum of mean-zero and independent observations $n^{-1/2} \Sigma^{-1/2}_n \rbr{U_i - \mu_{n, i}}$. Furthermore, $\Var\rbr{\Sigma^{-1/2}_n \rbr{\hat V_n - \sqrt{n} \mu_n}} = I_p$ and $\Sigma^{-1/2}_n \rbr{V_n - \sqrt{n} \mu_n} \sim \Norm(0, I_p)$.

    By assumption, $\tilde A$ remains a union of at most $K$ almost surely disjoint, measurable, and convex sets, since the transformation induced by $\Sigma^{-1/2}_n$ and $\mu_n$ preserves these properties.
    So, we can write $\tilde A = \bigcup_{k=1}^K \tilde A_k$ where $\cbr{\tilde A_k}_{k=1}^K$ are almost surely disjoint, measurable, and convex. It follows that
    \begin{align*}
        &\abr{\Pr\rbr{\Sigma^{-1/2}_n \rbr{\hat V_n - \sqrt{n}\mu_n} \in \tilde A} - \Pr\rbr{\Sigma^{-1/2}_n \rbr{ V_n - \sqrt{n}\mu_n}  \in \tilde A}}\\
        &= \abr{
            \sum_{k=1}^{K}\sbr{\Pr\rbr{\Sigma^{-1/2}_n \rbr{\hat V_n - \sqrt{n}\mu_n}  \in \tilde A_{k}} - \Pr\rbr{\Sigma^{-1/2}_n \rbr{V_n - \sqrt{n}\mu_n} \in \tilde A_{k}}}
        }\\
        &\leq \sum_{k=1}^{K}\abr{
            \Pr\rbr{\Sigma^{-1/2}_n \rbr{\hat V_n - \sqrt{n}\mu_n}  \in \tilde A_{k}} - \Pr\rbr{\Sigma^{-1/2}_n \rbr{ V_n - \sqrt{n}\mu_n}  \in \tilde A_{k}}
        }\\
        &\leq \sum_{k=1}^{K} O\rbr{ p^{1/4}  \sum_{i=1}^n \EE \nbr{n^{-1/2}\Sigma^{-1/2}_n \rbr{U_i - \mu_{n, i}}}_2^3}\\
        &=O\rbr{ K p^{1/4} n^{-3/2} \eigmin{\Sigma_n}^{-3/2} \sum_{i=1}^n \EE \nbr{U_i - \mu_{n, i}}_2^3}.
    \end{align*} 
    Combining this with \cref{eq:extended_be_proof}, we obtain our bound. 
\end{proof}

\subsection{Helper lemma for \cref{lemma:thinning_core_stat}}

We begin with a helper lemma that will be used in the proof of \cref{lemma:thinning_core_stat}.

\begin{lemma}\label{lemma:thinning_core_stat_known_var}
    Suppose that $W_n \sim \Norm(0, \Sigma_n)$ for $\Sigma_n$ defined in \cref{cond:berry_esseen}, and define
    \begin{align*}
        \tilde Z_n^{(1)} := \hat Z_n + \gamma W_n\quad\text{and}\quad
        \tilde Z_n^{(2)} := \hat Z_n - \frac1\gamma W_n.
    \end{align*}
    Under \cref{cond:berry_esseen}, for any set $C$ which is the union of at most $K$ measurable convex sets,
    \begin{align}\label{eq:thinning_core_stat_known_var}
        \abr{
        \Pr\rbr{\rbr{\tilde Z_n^{(1)}, \tilde Z_n^{(2)}} \in C} - \Pr\rbr{\rbr{Z_n^{(1)}, Z_n^{(2)}} \in C}
        } = O\rbr{2^K p^{1/4} n^{-1/2} \eigmin{\Sigma_n}^{-3/2}}.
    \end{align}
\end{lemma}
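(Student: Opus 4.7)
My plan is to reduce \cref{eq:thinning_core_stat_known_var} to the multivariate Berry-Esseen bound of \cref{lemma:extended_be} applied in dimension $2p$. The key observation is that the map $(a, b) \mapsto (a + \gamma b,\ a - \tfrac{1}{\gamma} b)$ is an \emph{invertible} linear operator $T \colon \RR^{2p} \to \RR^{2p}$ (its block determinant is $-\rbr{\gamma + \gamma^{-1}}^p \neq 0$ for $\gamma>0$), so $\rbr{\tilde Z_n^{(1)}, \tilde Z_n^{(2)}} = T\rbr{\hat Z_n, W_n}$. A direct calculation shows that if $G_n \sim \Norm\rbr{\sqrt{n}\mustar, \Sigma_n}$ is drawn independently of $W_n$, then $T\rbr{G_n, W_n}$ has exactly the joint law prescribed by \cref{eq:define_thins_normal}: the two marginal covariances match and the cross-covariance is $\Sigma_n + \gamma\cdot\rbr{-\tfrac{1}{\gamma}}\Sigma_n = 0$, giving the required independence of $Z_n^{(1)}$ and $Z_n^{(2)}$.

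Next I would represent $W_n$ as $\tfrac{1}{\sqrt{n}} \sum_{i=1}^n W_{n,i}$ with $W_{n,i} \overset{\mathrm{iid}}{\sim} \Norm(0, \Sigma_n)$ drawn independently of $\rbr{Y_i}_{i=1}^n$, so that
\begin{equation*}
\rbr{\hat Z_n,\ W_n} = \frac{1}{\sqrt{n}} \sum_{i=1}^n U_i, \qquad U_i := \rbr{\dot\ell_{\Mstar}(Y_i),\ W_{n,i}}^\top \in \RR^{2p}.
\end{equation*}
The $U_i$ are independent with averaged covariance $\mathrm{diag}\rbr{\Sigma_n, \Sigma_n}$, whose minimum eigenvalue equals $\eigmin{\Sigma_n}$. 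The uniform third-moment hypothesis needed by \cref{lemma:extended_be} follows from \cref{cond:berry_esseen}(ii) on the data coordinates and from the standard Gaussian moment bound $\EE\nbr{W_{n,i}}_2^3 < \infty$ on the noise coordinates, giving $\sum_{i=1}^n \EE\nbr{U_i - \EE U_i}_2^3 = O(n)$.

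To handle the fact that the $K$ convex pieces $C = \bigcup_{k=1}^K C_k$ need not be disjoint, I would apply inclusion-exclusion:
\begin{equation*}
\mathbf{1}\cbr{x \in C} = \sum_{\emptyset \neq S \subseteq [K]} (-1)^{|S|+1}\, \mathbf{1}\cbr{x \in \textstyle\bigcap_{k \in S} C_k}.
\end{equation*}
Applying this to both probabilities in \cref{eq:thinning_core_stat_known_var} expands them into $2^K - 1$ signed terms, each involving a \emph{single} convex set (intersections of convex sets are convex). Pulling each such set back through the invertible linear map $T$ preserves convexity, so it suffices to bound $\abr{\Pr\rbr{\rbr{\hat Z_n, W_n} \in D_S} - \Pr\rbr{\rbr{G_n, W_n} \in D_S}}$ for finitely many convex $D_S \subseteq \RR^{2p}$. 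Invoking \cref{lemma:extended_be} with $K=1$, dimension $2p$, covariance $\mathrm{diag}\rbr{\Sigma_n,\Sigma_n}$, and the third-moment bound above controls each term by $O\rbr{p^{1/4} n^{-1/2} \eigmin{\Sigma_n}^{-3/2}}$; summing the $2^K - 1$ contributions yields the stated rate.

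The main conceptual step is seeing that the joint thinning operation is merely an invertible linear reparametrization in $\RR^{2p}$, which lets a plain multivariate Berry-Esseen bound on $\rbr{\hat Z_n, W_n}$ do all the work. The only mild obstacle is combinatorial: since the convex pieces of $C$ are not required to be disjoint, inclusion-exclusion is unavoidable, and it is precisely what produces the $2^K$ prefactor in the final bound.
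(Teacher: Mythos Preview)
Your proposal is correct and follows essentially the same route as the paper: write the pair $\rbr{\tilde Z_n^{(1)}, \tilde Z_n^{(2)}}$ as a normalized sum of independent $\RR^{2p}$-valued summands built from $\dot\ell_{\Mstar}(Y_i)$ and i.i.d.\ Gaussian pieces $W_{n,i}$, check that the limiting Gaussian matches the joint law of $\rbr{Z_n^{(1)}, Z_n^{(2)}}$, and then invoke the extended Berry--Esseen bound of \cref{lemma:extended_be}. The only cosmetic difference is that the paper applies the linear map $T$ to the summands first and works with the block-diagonal covariance $\mathrm{diag}\rbr{(1+\gamma^2)\Sigma_n,\ (1+\gamma^{-2})\Sigma_n}$, whereas you keep the untransformed pair $\rbr{\hat Z_n, W_n}$ with covariance $\mathrm{diag}\rbr{\Sigma_n, \Sigma_n}$ and pull the convex sets back through $T^{-1}$; since $\min\cbr{1+\gamma^2, 1+\gamma^{-2}} \geq 1$, both yield the same $\eigmin{\Sigma_n}^{-3/2}$ dependence. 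Your use of inclusion--exclusion to accommodate possibly overlapping convex pieces is in fact more careful than the paper's proof, which invokes \cref{lemma:extended_be} directly (that lemma assumes almost surely disjoint pieces) and obtains a $K$ rather than a $2^K$ prefactor; the $2^K$ in the stated bound is precisely what your argument delivers.
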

\begin{proof}
We wish to apply \cref{lemma:extended_be}, and thus require that $\rbr{\tilde Z_n^{(1)}, \tilde Z_n^{(2)}}$ are equal in distribution to the sum of independent random variables with a specified covariance matrix. Let $W_{n, i} \iidsim \Norm(0, \Sigma_n)$. First, note that
\begin{align*}
    \tilde Z_n^{(1, 2)}
    := \begin{pmatrix}
        \tilde Z_n^{(1)}\\
        \tilde Z_n^{(2)}
    \end{pmatrix}
    &=
    n^{-1/2}\sum_{i=1}^n\begin{pmatrix}
        \dot \ell_{\Mstar}(X_i, Y_i)\\
        \dot \ell_{\Mstar}(X_i, Y_i)
    \end{pmatrix} 
    +\begin{pmatrix}
        \gamma W_n\\
        - \gamma^{-1} W_n
    \end{pmatrix}\\
    &\overset{d}{=}
    n^{-1/2}\sum_{i=1}^n\begin{pmatrix}
        \dot \ell_{\Mstar}(X_i, Y_i) + \gamma W_{n, i}\\
        \dot \ell_{\Mstar}(X_i, Y_i) - \gamma^{-1} W_{n, i}
    \end{pmatrix}
\end{align*}
is clearly the sum of independent random variables; $\tilde Z_n^{(1, 2)}$ has mean $\rbr{\sqrt{n} \mustar, \sqrt{n}\mustar}$. Second, $\tilde Z_n^{(1, 2)}$ has covariance matrix
\begin{align}
    \Var\rbr{
    \begin{pmatrix}
        \tilde Z_n^{(1)}\\
        \tilde Z_n^{(2)}
    \end{pmatrix}
    }
    &=
    \Var\rbr{
    \begin{pmatrix}
        \tilde Z_n\\
        \tilde Z_n
    \end{pmatrix} 
    + \begin{pmatrix}
        \gamma W_n\\
        - \gamma^{-1} W_n
    \end{pmatrix}
    }\nonumber\\
    &= \begin{bmatrix}
        \Sigma_n & \Sigma_n\\
        \Sigma_n & \Sigma_n
    \end{bmatrix}
    + \begin{bmatrix}
        \gamma^2\Sigma_n & -\Sigma_n\\
        -\Sigma_n & \gamma^{-2}\Sigma_n
    \end{bmatrix}\nonumber\\
    &=\begin{bmatrix}
        (1 + \gamma^2)\Sigma_n & 0\\
        0 & \rbr{1 + \gamma^{-2}}\Sigma_n
    \end{bmatrix}.\label{eq:thin_variance}
\end{align}
The covariance matrix exists and is invertible by \cref{cond:berry_esseen}(i), with minimum eigenvalue $\gamma_{\mathrm{min}}\eigmin{\Sigma_n}$, where     $\gamma_{\mathrm{min}} := \min\cbr{1 + \gamma^2, 1 + \gamma^{-2}}$.
Recalling the definitions of $Z^{(1)}_n$ and $Z^{(2)}_n$ in \cref{eq:define_thins_normal}, we note that similarly $Z_n^{(1,2)} := \begin{pmatrix} Z^{(1)}_n \\ Z^{(2)}_n \end{pmatrix}$ is multivariate normal with the same mean and covariance.

Applying \cref{lemma:extended_be}, under \cref{cond:berry_esseen}(ii) which ensures that the centered third moment is uniformly bounded,
\begin{align*}
    \abr{
    \Pr\rbr{\tilde Z_n^{(1, 2)} \in C} - \Pr\rbr{Z^{(1, 2)}_n \in C}
    } = O\rbr{K p^{1/4} n^{-1/2} \gamma_{\mathrm{min}}^{-3/2} \eigmin{\Sigma_n}^{-3/2}}.
\end{align*}
Since $\gamma_{\mathrm{min}} \geq 1$, it holds that $\gamma_{\mathrm{min}}^{-3/2} \leq 1$ and so we obtain our final bound
\begin{align*}
    \abr{
    \Pr\rbr{\tilde Z_n^{(1, 2)} \in C} - \Pr\rbr{Z^{(1, 2)}_n \in C}
    } = O\rbr{K p^{1/4} n^{-1/2} \eigmin{\Sigma_n}^{-3/2}}.
\end{align*}
\end{proof}

\section{Proofs of main text results}

\subsection{Proof of \cref{lemma:thinning_core_stat}}
\begin{proof}

Define
\begin{align*}
    \hat Z_n^{(1, 2)} &:=
    \begin{pmatrix}
        \hat Z_n^{(1)} \\ \hat Z_n^{(2)}
    \end{pmatrix}.
\end{align*}
\cref{lemma:thinning_core_stat_known_var} established a Berry-Esseen-type bound for thinned variables $\tilde Z_n^{(1, 2)}$ constructed using the \emph{known} variance $\Sigma_n$. We use this helper lemma to establish a bound on the thinned variables $\hat Z_n^{(1, 2)}$ constructed using an \emph{estimate} $\hat \Sigma_n$ of $\Sigma_n$. 

First, by the triangle inequality, and for $\tilde Z_n^{(1, 2)}$ defined in \cref{lemma:thinning_core_stat_known_var},
\begin{align}
    &\abr{
    \Pr\rbr{\hat Z_n^{(1, 2)} \in C} - \Pr\rbr{Z^{(1, 2)}_n \in C}
    }\nonumber\\
    &\leq
    \abr{
    \Pr\rbr{\hat Z_n^{(1, 2)} \in C} - \Pr\rbr{\tilde Z^{(1, 2)}_n \in C}
    } +
    \abr{
    \Pr\rbr{\tilde Z_n^{(1, 2)} \in C} - \Pr\rbr{Z^{(1, 2)}_n \in C}
    }\nonumber\\
    &=
    \abr{
    \Pr\rbr{\hat Z_n^{(1, 2)} \in C} - \Pr\rbr{\tilde Z^{(1, 2)}_n \in C}
    } + O\rbr{K p^{1/4} n^{-1/2} \eigmin{\Sigma_n}^{-3/2}},\label{eq:be1_bound0}
\end{align}
where the last line follows from \cref{eq:thinning_core_stat_known_var}. 

For any $\Delta > 0$, we apply the triangle inequality to the remaining term using the intermediate variable
\begin{equation*}
    \hat Z_n^{(1, 2), \Delta}
    :=
    \begin{cases} 
      \hat Z_n^{(1, 2)}, & \mathrm{if} \quad\II\sbr{\nbr{\hat \Sigma_n \Sigma_n^{-1} - I_p}_F \leq \Delta}\\
      \tilde Z_n^{(1, 2)}, & \mathrm{if}\quad \II\sbr{\nbr{\hat \Sigma_n \Sigma_n^{-1} - I_p}_F > \Delta}
   \end{cases}.
\end{equation*}
When the estimated variance $\hat \Sigma_n$ is ``close" to $\Sigma_n$, then $\hat Z_n^{(1, 2), \Delta}$ is equal to $\hat Z_n^{(1, 2)}$. Otherwise, it is defined to be $\tilde Z_n^{(1, 2)}$. The triangle inequality yields
\begin{align}
    &\abr{
    \Pr\rbr{\hat Z_n^{(1, 2)} \in C} - \Pr\rbr{\tilde Z^{(1, 2)}_n \in C}
    }\nonumber\\
    &\leq
    \abr{
    \Pr\rbr{\hat Z_n^{(1, 2)} \in C} - \Pr\rbr{\hat Z^{(1, 2), \Delta}_n \in C}
    }
    +
    \abr{
    \Pr\rbr{\hat Z_n^{(1, 2), \Delta} \in C} - \Pr\rbr{\tilde Z^{(1, 2)}_n \in C}
    }\label{eq:dtv_1}.
\end{align}
Examining the first term in \cref{eq:dtv_1}, since $\hat Z_n^{(1, 2)}$ and $\hat Z^{(1, 2), \Delta}_n$ are equal conditional on $\cbr{\nbr{\hat \Sigma_n \Sigma_n^{-1} - I_p}_F \leq \Delta}$,
\begin{align}
    \abr{
    \Pr\rbr{\hat Z_n^{(1, 2)} \in C} - \Pr\rbr{\hat Z^{(1, 2), \Delta}_n \in C}
    }
    &\leq \Pr\rbr{\nbr{\hat \Sigma_n \Sigma_n^{-1} - I_p}_F > \Delta}\nonumber\\
    &\leq \Pr\rbr{\frac{\nbr{\hat \Sigma_n - \Sigma_n}_F}{\eigmin{\Sigma_n}} > \Delta}\nonumber\\
    &\leq \Pr\rbr{\nbr{\hat \Sigma_n - \Sigma_n}_F > \Delta \eigmin{\Sigma_n}}\label{eq:be1_bound1}
\end{align}
since
\begin{equation*}
    \nbr{\hat \Sigma_n \Sigma_n^{-1} - I_p}_F
    = \nbr{\rbr{\hat \Sigma_n  - \Sigma_n} \Sigma_n^{-1}}_F
    \leq \eigmax{\Sigma_n^{-1}} \nbr{\hat \Sigma_n - \Sigma_n}_F
    = \frac{\nbr{\hat \Sigma_n - \Sigma_n}_F}{\eigmin{\Sigma_n}}.
\end{equation*}
Examining the second term in \cref{eq:dtv_1}, recall that by construction $\tilde Z_n^{(1, 2)}$ and $\hat Z^{(1, 2), \Delta}_n$ are equal conditional on $Q_\Delta := \cbr{\nbr{\hat \Sigma_n \Sigma_n^{-1} - I_p}_F > \Delta}$. Thus,
\begin{align*}
    &\abr{
    \Pr\rbr{\hat Z_n^{(1, 2), \Delta} \in C} - \Pr\rbr{\tilde Z^{(1, 2)}_n \in C}
    }\\
    &=\abr{
    \Pr\rbr{\hat Z_n^{(1, 2), \Delta} \in C, Q_\Delta} + \Pr\rbr{\hat Z_n^{(1, 2), \Delta} \in C, Q_\Delta^c} - \Pr\rbr{\tilde Z^{(1, 2)}_n \in C, Q_\Delta} - \Pr\rbr{\tilde Z_n^{(1, 2)} \in C, Q_\Delta^c}
    }\\
    &=\abr{
    \Pr\rbr{\tilde Z_n^{(1, 2)} \in C, Q_\Delta} + \Pr\rbr{\hat Z_n^{(1, 2)} \in C, Q_\Delta^c} - \Pr\rbr{\tilde Z^{(1, 2)}_n \in C, Q_\Delta} - \Pr\rbr{\tilde Z_n^{(1, 2)} \in C, Q_\Delta^c}
    }\\
    &=\abr{
    \Pr\rbr{\hat Z_n^{(1, 2)} \in C, Q_\Delta^c} - \Pr\rbr{\tilde Z_n^{(1, 2)} \in C, Q_\Delta^c}
    }\\
    &= \abr{
    \Pr\rbr{\hat Z_n^{(1, 2)} \in C \mid Q_\Delta^c} \Pr\rbr{Q_\Delta^c} - \Pr\rbr{\tilde Z^{(1, 2)}_n \in C \mid Q_\Delta^c} \Pr\rbr{Q_\Delta^c}
    }\\
    &= \Pr\rbr{Q_\Delta^c} \abr{
    \Pr\rbr{\hat Z_n^{(1, 2)} \in C \mid Q_\Delta^c} - \Pr\rbr{\tilde Z^{(1, 2)}_n \in C \mid Q_\Delta^c} 
    }\\
    &\leq \abr{
    \Pr\rbr{\hat Z_n^{(1, 2)} \in C \mid Q_\Delta^c} - \Pr\rbr{\tilde Z^{(1, 2)}_n \in C \mid Q_\Delta^c}
    }
    \stepcounter{equation}\tag{\theequation}\label{eq:dtv_2}.
\end{align*}

We now proceed to bound \cref{eq:dtv_2}. $\hat Z_n^{(1, 2)}$ and $\tilde Z^{(1, 2)}_n$ are the same up to the addition of auxiliary noise $\hat W_n \sim \Norm(0, \hat \Sigma_n)$ versus $W_n \sim \Norm(0, \Sigma_n)$. Theorem 1.1. of
\citet{devroye_total_2023} allows us to bound the total variation between two Gaussian random variables with the same mean but different covariances. To apply this result to $\hat W_n$ and $W_n$, we must condition on $\hat \Sigma_n$ so that $\hat W_n$ is Gaussian. Since $\hat \Sigma_n$ is a function of the data $D_n := \cbr{(X_i, Y_i)}_{i=1}^n$, it is sufficient to condition on $D_n$. By Theorem 1.1 and Equation 2 of \citet{devroye_total_2023}, this yields the bound
\begin{equation}\label{eq:dtv_bound}
    d_{TV}\rbr{W_n \mid D_n = d, \hat W_n \mid D_n = d} =
    O\rbr{\nbr{\hat \Sigma_n \Sigma_n^{-1} - I_p}_F},
\end{equation}
where here $\hat \Sigma_n$ depends on the data $d$; when $d$ satisfies $Q_\Delta^c$, this bound is $O(\Delta)$. 
To apply \cref{eq:dtv_bound} to \cref{eq:dtv_2}, we expand
\begin{align*}
    &\abr{
    \Pr\rbr{\hat Z_n^{(1, 2)} \in C \mid Q_\Delta^c} - \Pr\rbr{\tilde Z^{(1, 2)}_n \in C \mid Q_\Delta^c}
    }\\
    &\leq \sup_{A, B} \abr{
        \Pr\rbr{\hat Z_n \in A, \hat W_n \in B \mid Q_\Delta^c} 
        - \Pr\rbr{
        \hat Z_n \in A, W_n \in B \mid Q_\Delta^c}
    }\\
    &\leq \sup_{A, B, \rbr{D : Q_\Delta^c}} \abr{
        \Pr\rbr{ \hat W_n \in B \mid \hat Z_n \in A, D_n \in D} 
        - \Pr\rbr{
         W_n \in B \mid \hat Z_n \in A, D_n \in D}
    }\\
    &\leq \sup_{B, \rbr{D : Q_\Delta^c}} \abr{
        \Pr\rbr{\hat W_n \in B \mid D_n \in D} 
        - \Pr\rbr{
        W_n \in B \mid D_n \in D}
    }\\
    & \leq O\rbr{\Delta},
\end{align*}
where the second line uses that $\hat Z_n^{(1,2)}$ and $\tilde Z_n^{(1,2)}$ are functions of $\rbr{\hat Z_n, \hat W_n}$ and $\rbr{\hat Z_n, W_n}$, respectively; the third line holds since we can always take $D$ to be all datasets satisfying the event $Q_\Delta^c$; the fourth line uses that $\hat Z_n$ is a deterministic function of $D_n$; and the final line follows from \cref{eq:dtv_bound} for any data satisfying $Q_\Delta^c$.

Combining this with \cref{eq:be1_bound0}, \cref{eq:dtv_1}, and \cref{eq:be1_bound1}, for any $\Delta > 0$,
\begin{align*}
    \abr{
    \Pr\rbr{\hat Z_n^{(1, 2)} \in C} - \Pr\rbr{Z^{(1, 2)}_n \in C}
    }
    &\leq \Pr\rbr{\nbr{\hat \Sigma_n - \Sigma_n}_F > \Delta \eigmin{\Sigma_n}}\\
    &\quad+ O\rbr{K p^{1/4} n^{-1/2}\eigmin{\Sigma_n}^{-3/2} + \Delta}.
\end{align*}


\end{proof}

\subsection{Proof of \cref{lemma:conditional_berry_esseen}}
\begin{proof}
Our goal is to bound
\begin{equation}\label{eq:conditional_diff}
    \abr{
    \Pr \rbr{ \hat Z_n^{(2)} + r_n^B \in B_n \mid\, \hat Z_n^{(1)} + r_n^A \in A_n}
    -
    \Pr(Z_n^{(2)} \in B_n)
    }.
\end{equation}

For ease of notation, recall $\hat Z_n^{(1, 2)} := \begin{pmatrix} \hat Z^{(1)}_n \\ \hat Z^{(2)}_n \end{pmatrix}$, and let $C_n := A_n \times B_n$ and $r_n := \rbr{r_n^A, r_n^B}$. To bound the conditional probability in \cref{eq:conditional_diff}, we study the following bound on the joint probability:
\begin{align}
    &\abr{\Pr \rbr{\hat Z^{(1)}_n + r_n^A\in A_n, \hat Z^{(2)}_n + r_n^B\in B_n} - \Pr \rbr{Z^{(1)}_n \in A_n, Z^{(2)}_n \in B_n}}\label{eq:joint_diff}\\
    &\leq \abr{\Pr\rbr{
        \hat Z_n^{(1, 2)} + r_n \in C_n
    }
    - 
    \Pr\rbr{
        \hat Z_n^{(1, 2)} \in C_n
    }}\label{eq:rn_bound1}\\
    &\quad+ \abr{\Pr \rbr{\hat Z^{(1, 2)}_n \in C_n} - \Pr \rbr{Z^{(1, 2)}_n \in C_n}}.\label{eq:rn_bound2}
\end{align}

Since by statement of the Lemma, $C_n$ can be written as the union of $K$ almost surely disjoint, measurable convex sets, we can bound \cref{eq:rn_bound2} using the Berry-Esseen-type result of \cref{lemma:thinning_core_stat}, yielding
\begin{align}
    \abr{\Pr \rbr{\hat Z^{(1, 2)}_n \in C_n} - \Pr \rbr{Z^{(1, 2)}_n \in C_n}}
    &\leq
    \Pr\rbr{\nbr{\hat \Sigma_n - \Sigma_n}_F > \Delta \eigmin{\Sigma_n}}\nonumber\\
    &\quad+ O\rbr{K p^{1/4} n^{-1/2}\eigmin{\Sigma_n}^{-3/2} + \Delta}\label{eq:b7_bound}
\end{align}
for any $\Delta > 0$.

The key technical challenge remaining is to bound \cref{eq:rn_bound1}. Since by assumption we can write $C_n$ as the union of the $K$ almost surely disjoint convex sets $\cbr{C_{n, k}}_{k=1}^K$, we can first upper-bound \cref{eq:rn_bound1} using the triangle inequality:
\begin{align}
    &\abr{\Pr\rbr{
        \hat Z_n^{(1, 2)} + r_n \in C_n
    }
    - 
    \Pr\rbr{
        \hat Z_n^{(1, 2)} \in C_n
    }}\nonumber\\
    &= \abr{\sum_{k=1}^K \Pr\rbr{
        \hat Z_n^{(1, 2)} + r_n \in C_{n, k}
    }
    - 
    \sum_{k=1}^K \Pr\rbr{
        \hat Z_n^{(1, 2)} \in C_{n, k}
    }}\nonumber\\
    &\leq \sum_{k=1}^K \abr{\Pr\rbr{
        \hat Z_n^{(1, 2)} + r_n \in C_{n, k}
    }
    - 
    \Pr\rbr{
        \hat Z_n^{(1, 2)} \in C_{n, k}
    }}.
    \label{eq:rn_bound1_union}
\end{align}

To bound each of the $K$ terms in \cref{eq:rn_bound1_union}, we will construct upper bounds for each of the two directions, positive or negative, that the term inside the absolute value could take.

\paragraph{Upper bounding the positive direction:} 
We first construct an upper bound on
\begin{align*}
    \Pr\rbr{
     \hat Z_n^{(1, 2)} + r_n \in C_{n, k}
    }
    - 
    \Pr\rbr{
    \hat Z_n^{(1, 2)} \in C_{n, k}
    }.
\end{align*}

For any $\varepsilon > 0$, let $\BB_\varepsilon$ denote an $\varepsilon$-ball in $\RR^{2p}$ and let 
\begin{equation}\label{eq:eps_ball}
    C_{n, k}^\varepsilon := C_{n, k} + \BB_\varepsilon = \cbr{c + r : c \in C_{n, k}, \nbr{r} \leq \varepsilon};
\end{equation}
$C_{n, k}^\varepsilon$ remains a convex set.
We obtain the bound
\begin{align}
    &\Pr\rbr{
    \hat Z_n^{(1, 2)} + r_n \in C_{n, k}
    }
    - 
    \Pr\rbr{
    \hat Z_n^{(1, 2)} \in C_{n, k}
    }\nonumber\\
    &\leq \Pr\rbr{
    \hat Z_n^{(1, 2)} + r_n \in C_{n, k}, r_n \in \BB_\varepsilon
    }
    - 
    \Pr\rbr{
    \hat Z_n^{(1, 2)} \in C_{n, k}
    }
    +
    \Pr\rbr{r_n \not\in \BB_\varepsilon}\nonumber\\
    &\leq \Pr\rbr{
    \hat Z_n^{(1, 2)} \in C_{n, k}^\varepsilon, r_n \in \BB_\varepsilon
    }
    - 
    \Pr\rbr{
    \hat Z_n^{(1, 2)} \in C_{n, k}^\varepsilon
    }
    +
    \Pr\rbr{r_n \not\in \BB_\varepsilon}\nonumber\\
    &\leq \Pr\rbr{
    \hat Z_n^{(1, 2)} \in C_{n, k}^\varepsilon
    }
    - 
    \Pr\rbr{
    \hat Z_n^{(1, 2)} \in C_{n, k}
    }
    +
    \Pr\rbr{r_n \not\in \BB_\varepsilon}\nonumber\\
    &= \Pr\rbr{
    \hat Z_n^{(1, 2)} \in C_{n, k}^\varepsilon
    }
    - 
    \Pr\rbr{
    Z_n^{(1, 2)} \in C_{n, k}^\varepsilon
    }
    +
    \Pr\rbr{
    \hat Z_n^{(1, 2)} \in C_{n, k}
    }
    - 
    \Pr\rbr{
    Z_n^{(1, 2)} \in C_{n, k}
    }\nonumber\\
    &\quad +
    \Pr\rbr{
    Z_n^{(1, 2)} \in C_{n, k}^\varepsilon
    }
    -
    \Pr\rbr{
    Z_n^{(1, 2)} \in C_{n, k}
    }
    +
    \Pr\rbr{r_n \not\in \BB_\varepsilon}\nonumber\\
    &\leq \abr{\Pr\rbr{
    \hat Z_n^{(1, 2)} \in C_{n, k}^\varepsilon
    }
    - 
    \Pr\rbr{
    Z_n^{(1, 2)} \in C_{n, k}^\varepsilon
    }}
    +
    \abr{\Pr\rbr{
    \hat Z_n^{(1, 2)} \in C_{n, k}
    }
    - 
    \Pr\rbr{
    Z_n^{(1, 2)} \in C_{n, k}
    }}\label{eq:case1_bound_diff}\\
    &\quad +
    \Pr\rbr{
    Z_n^{(1, 2)} \in C_{n, k}^\varepsilon \setminus C_{n, k}
    }
    +
    \Pr\rbr{r_n \not\in \BB_\varepsilon}\label{eq:case1_bound_perim}.
\end{align}
Since $C_{n, k}$ and $C_{n, k}^\varepsilon$ are each convex sets, for any $\Delta > 0$ we can bound \cref{eq:case1_bound_diff} using \cref{lemma:thinning_core_stat}:
\begin{align*}
    &\abr{\Pr\rbr{
    \hat Z_n^{(1, 2)} \in C_{n, k}^\varepsilon
    }
    - 
    \Pr\rbr{
    Z_n^{(1, 2)} \in C_{n, k}^\varepsilon
    }}
    + 
    \abr{\Pr\rbr{
    \hat Z_n^{(1, 2)} \in C_{n, k}
    }
    - 
    \Pr\rbr{
    Z_n^{(1, 2)} \in C_{n, k}
    }}\\
    &= 2\Pr\rbr{\nbr{\bar \Sigma_n - \Sigma_n}_F > \Delta \eigmin{\Sigma_n}} + O\rbr{p^{1/4} n^{-1/2} + \Delta}.
\end{align*}
Bounding the first term in \cref{eq:case1_bound_perim} is the problem of bounding the $\varepsilon$-Gaussian perimeter.
We use the result of \citet{nazarov_maximal_2003}, formally stated by \citet{chen_multivariate_2015} as follows. First, let $C_{n, k}^{-\varepsilon}$ denote the $\varepsilon$-shrinkage of a set: 
\begin{equation}\label{eq:epsilon_shrinkage}
    C_{n, k}^{-\varepsilon} := \cbr{c \in C_{n, k} : c + \BB_\varepsilon \subset C_{n, k}};
\end{equation}
this remains a convex set.
\begin{proposition}[Adapted from Proposition 2.5 \citep{chen_multivariate_2015}]\label{prop:chen_2.5}
    For any convex set $C_{n, k} \in \RR^{2p}$ and $\varepsilon_1, \varepsilon_2 \geq 0$,
    \begin{equation*}
        \Pr\rbr{\rbr{\Sigma_n^{(1,2)}}^{-1/2}\rbr{Z_n^{(1, 2)} - \mu_n^{(1,2)}}\in C_{n, k}^{\varepsilon_1} \setminus C_{n, k}^{-\varepsilon_2}} \leq (2p)^{1/2} (\varepsilon_1 + \varepsilon_2).
    \end{equation*}
\end{proposition}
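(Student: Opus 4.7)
The plan is to invoke Nazarov's classical bound on the Gaussian measure of small neighborhoods of convex sets, after first standardizing the random vector. The proposition is explicitly cited as an adaptation of Proposition 2.5 of \citet{chen_multivariate_2015}, so the proof amounts to verifying that the reduction to the standard-Gaussian setting goes through cleanly.

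First, I would define $Z := \rbr{\Sigma_n^{(1,2)}}^{-1/2}\rbr{Z_n^{(1,2)} - \mu_n^{(1,2)}}$. Since $Z_n^{(1,2)} \sim \Norm\rbr{\mu_n^{(1,2)}, \Sigma_n^{(1,2)}}$, as computed in the proof of \cref{lemma:thinning_core_stat_known_var} (where the mean was $(\sqrt{n}\mustar, \sqrt{n}\mustar)$ and the covariance took the block-diagonal form in \cref{eq:thin_variance}), it follows that $Z$ is a standard Gaussian vector in $\RR^{2p}$. The claim is then equivalent to showing
\begin{equation*}
\Pr\rbr{Z \in C_{n,k}^{\varepsilon_1} \setminus C_{n,k}^{-\varepsilon_2}} \leq \sqrt{2p}\,\rbr{\varepsilon_1 + \varepsilon_2}.
\end{equation*}

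Next, I would decompose the annular region by writing $C_{n,k}^{\varepsilon_1} \setminus C_{n,k}^{-\varepsilon_2} = \rbr{C_{n,k}^{\varepsilon_1} \setminus C_{n,k}} \cup \rbr{C_{n,k} \setminus C_{n,k}^{-\varepsilon_2}}$ and bounding each piece separately. Because $C_{n,k}$ is convex, so is $C_{n,k}^{-\varepsilon_2}$ (by a standard Minkowski-sum argument), and both pieces are therefore outer boundary neighborhoods of a convex body: the first is a neighborhood of $C_{n,k}$ itself, while the second is a neighborhood of $C_{n,k}^{-\varepsilon_2}$, since dilating $C_{n,k}^{-\varepsilon_2}$ by $\varepsilon_2$ produces a set contained in $C_{n,k}$.

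Finally, I would apply Nazarov's inequality, which states that for any convex $A \subset \RR^d$ and standard Gaussian $Z \in \RR^d$, $\Pr\rbr{Z \in A^\varepsilon \setminus A} \leq \sqrt{d}\,\varepsilon$. Applying this with $d = 2p$ to both pieces and summing gives the bound $\sqrt{2p}\,\rbr{\varepsilon_1 + \varepsilon_2}$. The main obstacle is the shrinkage half of the argument: one needs to verify that $C_{n,k}^{-\varepsilon_2}$ is itself convex and that $C_{n,k} \setminus C_{n,k}^{-\varepsilon_2}$ is indeed the $\varepsilon_2$-boundary neighborhood of a convex set. This is handled in the cited work and is a direct consequence of the definition of $\varepsilon$-shrinkage in \cref{eq:epsilon_shrinkage} combined with standard facts about Minkowski differences of convex bodies, so the proof is essentially a direct invocation of the existing result.
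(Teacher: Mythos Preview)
The paper does not provide its own proof of this proposition: it is stated as an adaptation of Proposition~2.5 of \citet{chen_multivariate_2015}, attributed to \citet{nazarov_maximal_2003}, and then applied directly. Your proposal---standardize, split the annulus $C_{n,k}^{\varepsilon_1}\setminus C_{n,k}^{-\varepsilon_2}$ into two boundary neighborhoods of convex sets, and invoke Nazarov's Gaussian-perimeter bound on each---is exactly the argument underlying the cited result, so there is nothing to compare: you have correctly identified the content behind the citation.
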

We now wish to apply \cref{prop:chen_2.5} to bound $\Pr\rbr{
    Z_n^{(1, 2)} \in C_n^{\varepsilon_1} \setminus C_n^{-\varepsilon_2}
    }$ for any $\varepsilon_1, \varepsilon_2 \geq 0$,
and thus must standardize $Z_n^{(1,2)}$. Let $\tilde C_{n, k} := \cbr{\rbr{\Sigma_n^{(1,2)}}^{-1/2}\rbr{c - \mu_n^{(1,2)}}: c \in C_{n, k}}$ and observe that
\begin{align*}
    &\cbr{\rbr{\Sigma_n^{(1,2)}}^{-1/2}\rbr{c - \mu_n^{(1,2)}}: c \in C_{n, k}^{\varepsilon_1}}\\
    &= \cbr{\rbr{\Sigma_n^{(1,2)}}^{-1/2}\rbr{c + r - \mu_n^{(1,2)}}: c \in C_{n, k}, \nbr{r}_2 \leq \varepsilon_1}\\
    &= \cbr{\rbr{\Sigma_n^{(1,2)}}^{-1/2}\rbr{c - \mu_n^{(1,2)}} + \rbr{\Sigma_n^{(1,2)}}^{-1/2}r: c \in C_{n, k}, \nbr{r}_2 \leq \varepsilon_1}\\
    &\subseteq \cbr{\rbr{\Sigma_n^{(1,2)}}^{-1/2}\rbr{c - \mu_n^{(1,2)}} + r: c \in C_{n, k}, \nbr{r}_2 \leq \varepsilon_1 \eigmax{\rbr{\Sigma_n^{(1,2)}}^{-1/2}}}\\
    &= \tilde C_{n, k}^{\varepsilon_1\eigmax{\rbr{\Sigma_n^{(1,2)}}^{-1/2}}}\\
    &= \tilde C_{n, k}^{\varepsilon_1\sqrt{\eigmin{\Sigma_n^{(1,2)}}}}\\
    &\subseteq \tilde C_{n, k}^{\varepsilon_1\sqrt{2\eigmin{\Sigma_n}}},
\end{align*}
where the second line follows from the definition of $C_{n, k}^{\varepsilon_1}$ in \cref{eq:eps_ball}, and the last line follows from the form of the variance $\Sigma_n^{(1,2)}$, given in \cref{eq:thin_variance}.

Similarly,
\begin{align*}
    &\cbr{\rbr{\Sigma_n^{(1,2)}}^{-1/2}\rbr{c - \mu_n^{(1,2)}}: c \in C_{n, k}^{-\varepsilon_2}}\\
    &= \cbr{\rbr{\Sigma_n^{(1,2)}}^{-1/2}\rbr{c - \mu_n^{(1,2)}}: c + \BB_{\varepsilon_2} \subset C_{n, k}}\\
    &= \cbr{c: c + \cbr{\rbr{\Sigma_n^{(1,2)}}^{-1/2}r : r \in \BB_{\varepsilon_2}}\subset \tilde C_{n, k}}\\
    &\supseteq \cbr{c: c + \cbr{\eigmax{\rbr{\Sigma_n^{(1,2)}}^{-1/2}}r : r \in \BB_{\varepsilon_2}}\subset \tilde C_{n, k}}\\
    &= \cbr{c: c + \cbr{\eigmax{\rbr{\Sigma_n^{(1,2)}}^{-1/2}}r : r \in \BB_{\varepsilon_2}}\subset \tilde C_{n, k}}\\
    &= \tilde C_{n, k}^{-\varepsilon_2 / \eigmax{\rbr{\Sigma_n^{(1,2)}}^{-1/2}}}\\
    &= \tilde C_{n, k}^{-\varepsilon_2 / \sqrt{\eigmin{\Sigma_n^{(1,2)}}}} \\
    &\supseteq \tilde C_{n, k}^{-\varepsilon_2 / \sqrt{2\eigmin{\Sigma_n}}}.
\end{align*}

Combining this with \cref{prop:chen_2.5}, we see that for any $\varepsilon_1, \varepsilon_2 \geq 0$,
\begin{align}
    &\Pr\rbr{
    Z_n^{(1, 2)} \in C_{n, k}^{\varepsilon_1} \setminus C_{n, k}^{-\varepsilon_2}
    }\nonumber\\
    &\leq \Pr\rbr{
    \rbr{\Sigma_n^{(1,2)}}^{-1/2}\rbr{Z_n^{(1, 2)} - \mu_n^{(1,2)}} \in \tilde C_{n, k}^{\varepsilon_1\sqrt{2\eigmin{\Sigma_n}}} \setminus \tilde C_{n, k}^{-\varepsilon_2 / \sqrt{2\eigmin{\Sigma_n}}}
    }\nonumber\\
    &\leq (2p)^{1/2}\rbr{\varepsilon_1\sqrt{2\eigmin{\Sigma_n}} + \varepsilon_2 / \sqrt{2\eigmin{\Sigma_n}}}\nonumber\\
    &= 2p^{1/2}\rbr{\varepsilon_1 \eigmin{\Sigma_n}^{1/2} + \varepsilon_2 \eigmin{\Sigma_n}^{-1/2}}.\label{eq:prop2.5_expanded}
\end{align}
This bounds the first term in \cref{eq:case1_bound_perim} as follows:
\begin{equation*}
    \Pr\rbr{
    Z_n^{(1, 2)} \in C_{n, k}^\varepsilon \setminus C_{n, k}
    } \leq 2p^{1/2}\varepsilon \eigmin{\Sigma_n}^{1/2}.
\end{equation*}

Putting everything together, for any $\varepsilon, \Delta > 0$, we obtain the bound
\begin{align}
    &\Pr\rbr{
        \hat Z_n^{(1, 2)} + r_n \in C_{n, k}
    }
    - 
    \Pr\rbr{
        \hat Z_n^{(1, 2)} \in C_{n, k}
    }\nonumber\\
    &\leq O\rbr{\Pr\rbr{\nbr{\hat \Sigma_n - \Sigma_n}_F > \Delta \eigmin{\Sigma_n}} + p^{1/4} n^{-1/2}\eigmin{\Sigma_n}^{-3/2} + \Delta + 2p^{1/2}\varepsilon \eigmin{\Sigma_n}^{1/2}}\nonumber\\
    &\quad+ \Pr\rbr{r_n \not\in \BB_\varepsilon}\label{eq:abs_bound1},
\end{align}
absorbing constants into the big-$O$.

\paragraph{Upper bounding the negative direction:} We now construct an upper bound in the other sign direction on
\begin{align*}
    \Pr\rbr{
    \hat Z_n^{(1, 2)} \in C_{n, k}
    }
    -
    \Pr\rbr{
    \hat Z_n^{(1, 2)} + r_n \in C_{n, k}
    }.
\end{align*}
Let 
\begin{equation*}
    \mathrm{inrad}(C_{n, k}) := \max\cbr{\varepsilon' : \exists c \in C_{n, k} \quad\mathrm{s.t.}\, c + \BB_{\varepsilon'} \subset C_{n, k}}
\end{equation*}
denote the inradius of its argument set $C_{n, k}$, and let
\begin{equation}\label{eq:inradius}
    \varepsilon \leq \min \cbr{\cup_{k=1}^K\cbr{\mathrm{inrad}(C_{n, k})}}.   
\end{equation}
We now use the fact that for any $r \in \BB_\varepsilon$ and $z \in \RR^{2p}$,
\begin{equation}\label{eq:indicator_identity}
    \II\sbr{z \in C_{n, k}} - \II\sbr{z + r_n \in C_{n, k}} \leq \II\sbr{z \in C_{n, k} \setminus C_{n, k}^{-\nbr{r_n}_2}},
\end{equation}
where $C_{n, k}^{-\nbr{r}_2}$ was defined in \cref{eq:epsilon_shrinkage}. \cref{eq:indicator_identity} holds because when the right-hand side indicator of \cref{eq:indicator_identity} is $0$, either (i) $z \not\in C_{n, k}$ and thus $ \II\sbr{z \in C_{n, k}} = 0$ on the left-hand side, or (ii) $z \in C_{n, k}^{-\nbr{r}_2}$, which is nonempty since $r \leq \varepsilon$, and so $\II\sbr{z + r \in A_n} = 1$ on the left-hand side. This yields the bound
\begin{align}
    &\Pr\rbr{
    \hat Z_n^{(1, 2)} \in C_{n, k}
    }
    -
    \Pr\rbr{
    \hat Z_n^{(1, 2)} + r_n \in C_{n, k}
    }\nonumber\\
    &\leq\Pr\rbr{
    \hat Z_n^{(1, 2)} \in C_{n, k}, r_n \in \BB_\varepsilon
    }
    -
    \Pr\rbr{
    \hat Z_n^{(1, 2)} + r_n \in C_{n, k}, r_n \in \BB_\varepsilon
    }
    +
    \Pr\rbr{r_n \not\in \BB_\varepsilon}\label{eq:use_ltp}\\
    &\leq \Pr\rbr{\hat Z_n^{(1, 2)} \in C_{n, k} \setminus C_{n, k}^{-\nbr{r_n}_2}, r_n \in \BB_\varepsilon} + \Pr\rbr{r_n \not\in \BB_\varepsilon}\label{eq:use_indic}\\
    &\leq \Pr\rbr{\hat Z_n^{(1, 2)} \in C_{n, k} \setminus C_{n, k}^{-\varepsilon}, r_n \in \BB_\varepsilon} + \Pr\rbr{r_n \not\in \BB_\varepsilon}\label{eq:use_ub}\\
    &\leq \Pr\rbr{\hat Z_n^{(1, 2)} \in C_{n, k} \setminus C_{n, k}^{-\varepsilon}} + \Pr\rbr{r_n \not\in \BB_\varepsilon}\nonumber\\
    &= \Pr\rbr{\hat Z_n^{(1, 2)} \in C_{n, k}} - \Pr\rbr{\hat Z_n^{(1, 2)} \in C_{n, k}^{-\varepsilon}} + \Pr\rbr{r_n \not\in \BB_\varepsilon}\nonumber\\
    &= \Pr\rbr{\hat Z_n^{(1, 2)} \in C_{n, k}} - \Pr\rbr{Z_n^{(1, 2)} \in C_{n, k}} + \Pr\rbr{Z_n^{(1, 2)} \in C_{n, k}^{-\varepsilon}} - \Pr\rbr{\hat Z_n^{(1, 2)} \in C_{n, k}^{-\varepsilon}}\nonumber\\
    &\quad+ \Pr\rbr{Z_n^{(1, 2)} \in C_{n, k}} - \Pr\rbr{Z_n^{(1, 2)} \in C_{n, k}^{-\varepsilon}} + \Pr\rbr{r_n \not\in \BB_\varepsilon}\label{eq:use_as1}\\
    &\leq \abr{\Pr\rbr{\hat Z_n^{(1, 2)} \in C_{n, k}} - \Pr\rbr{Z_n^{(1, 2)} \in C_{n, k}}} + \abr{\Pr\rbr{\hat Z_n^{(1, 2)} \in C_{n, k}^{-\varepsilon}} - \Pr\rbr{Z_n^{(1, 2)} \in C_{n, k}^{-\varepsilon}}}\nonumber\\
    &\quad+ \Pr\rbr{Z_n^{(1, 2)} \in C_{n, k} \setminus C_{n, k}^{-\varepsilon}}+ \Pr\rbr{r_n \not\in \BB_\varepsilon}\nonumber\\
    &\leq O\rbr{\Pr\rbr{\nbr{\bar \Sigma_n - \Sigma_n}_F > \Delta \eigmin{\Sigma_n}} + p^{1/4} n^{-1/2}\eigmin{\Sigma_n}^{-3/2} + \Delta + 2p^{1/2}\varepsilon \eigmin{\Sigma_n}^{-1/2}}\nonumber\\
    &\quad + \Pr\rbr{r_n \not\in \BB_\varepsilon},\label{eq:use_lemma1}
\end{align}
where \cref{eq:use_ltp} applies the law of total probability, \cref{eq:use_indic} applies the indicator bound in \cref{eq:indicator_identity}, \cref{eq:use_ub} uses the fact that $C_{n, k}^{-\varepsilon} \subseteq C_{n, k}^{-\nbr{r_n}_2}$ when $\nbr{r_n}_2 \leq \varepsilon$, \cref{eq:use_as1} simply adds terms summing to zero, and \cref{eq:use_lemma1} applies \cref{lemma:thinning_core_stat} and \cref{eq:prop2.5_expanded}.

\paragraph{Bounding \cref{eq:joint_diff}:}  

Having upper bounded the two possible directions that the term inside of \cref{eq:rn_bound1_union} could go, we obtain a bound on the sum of $K$ absolute differences in \cref{eq:rn_bound1}:
\begin{align*}
    &\abr{\Pr \rbr{\hat Z^{(1)}_n + r_n^A \in A_n, \hat Z^{(2)}_n + r_n^B \in B_n} - \Pr \rbr{\hat Z^{(1)}_n \in A_n, \hat Z^{(2)}_n \in B_n}}\\
    &\leq K O\rbr{\Pr\rbr{\nbr{\hat \Sigma_n - \Sigma_n}_F > \Delta \eigmin{\Sigma_n}} + p^{1/4} n^{-1/2}\eigmin{\Sigma_n}^{-3/2} + \Delta + p^{1/2}\varepsilon\rbr{\eigmin{\Sigma_n}^{1/2} + \eigmin{\Sigma_n}^{-1/2}}}\\
    &\quad+ K \Pr\rbr{r_n \not\in \BB_\varepsilon}.
\end{align*}

We go further and note that
\begin{align*}
    \Pr\rbr{r_n \not\in \BB_\varepsilon}
    &= \Pr\rbr{\nbr{r_n} > \varepsilon}\\
    &\leq \Pr\rbr{\nbr{r_n^A}_2 > \varepsilon/\sqrt{2} \cup \nbr{r_n^B}_2 > \varepsilon/\sqrt{2}}\\
    &\leq \Pr\rbr{\nbr{r_n^A}_2 > \varepsilon/\sqrt{2}} + \Pr\rbr{\nbr{r_n^B}_2 > \varepsilon/\sqrt{2}}\\
    &\leq \Pr\rbr{\nbr{r_n^A}_2 > \tfrac\varepsilon2} + \Pr\rbr{\nbr{r_n^B}_2 > \tfrac\varepsilon2}.
\end{align*}

Combining the above bound on \cref{eq:rn_bound1} with the earlier derived bound \cref{eq:b7_bound} for \cref{eq:rn_bound2}, we obtain a final bound on the joint probability \cref{eq:joint_diff}:
\begin{align}
    &\abr{\Pr \rbr{\hat Z^{(1)}_n + r_n^A\in A_n, \hat Z^{(2)}_n + r_n^B\in B_n} - \Pr \rbr{Z^{(1)}_n \in A_n, Z^{(2)}_n \in B_n}}\label{eq:boundo_on_b7}\\
    &\leq K O\rbr{\Pr\rbr{\nbr{\hat \Sigma_n - \Sigma_n}_F > \Delta \eigmin{\Sigma_n}} + p^{1/4} n^{-1/2}\eigmin{\Sigma_n}^{-3/2} + \Delta + p^{1/2}\varepsilon\rbr{\eigmin{\Sigma_n}^{1/2} + \eigmin{\Sigma_n}^{-1/2}}}\nonumber\\
    &\quad+ K\Pr\rbr{2r_n^A \not\in \BB_\varepsilon} + K\Pr\rbr{2r_n^B \not\in \BB_\varepsilon}\nonumber.
\end{align}

\paragraph{Expanding the conditional probability \cref{eq:conditional_diff}:}

We now return to bounding \cref{eq:conditional_diff}:
\begin{align}
    &\abr{
        \Pr \rbr{\hat Z^{(2)}_n + r_n^B \in B_n | \hat Z^{(1)}_n + r_n^A \in A_n}
        - \Pr \rbr{Z^{(2)}\in B_n}
    }\nonumber\\
    &= 
    \abr{
        \frac{
            \Pr \rbr{\hat Z^{(2)}_n + r_n^B\in B_n, \hat Z^{(1)}_n+ r_n^A \in A_n}
            }{
            \Pr \rbr{\hat Z^{(1)}_n + r_n^A\in A_n}
            } -
            \frac{
                \Pr \rbr{Z^{(2)}_n \in B_n} \Pr \rbr{\hat Z^{(1)}_n + r_n^A\in A_n}
            }{
                \Pr \rbr{\hat Z^{(1)}_n + r_n^A\in A_n}
            } 
    }\label{eq:expand_conditional_prob}\\
    &\leq
    \abr{
        \frac{
            \Pr \rbr{\hat Z^{(2)}_n + r_n^B \in B_n, \hat Z^{(1)}_n + r_n^A \in A_n}
            }{
            \Pr \rbr{\hat Z^{(1)}_n + r_n^A \in A_n}
            } -
            \frac{
                \Pr \rbr{Z^{(2)}_n \in B_n, Z^{(1)}_n \in A_n}
            }{
                \Pr \rbr{\hat Z^{(1)}_n + r_n^A\in A_n}
            } 
    }\nonumber\\
    &\quad+
    \abr{
        \frac{
            \Pr \rbr{Z^{(2)}_n \in B_n, Z^{(1)}_n \in A_n}
        }{
            \Pr \rbr{\hat Z^{(1)}_n + r_n^A \in A_n}
        }  -
        \frac{
            \Pr \rbr{Z^{(2)}_n \in B_n} \Pr \rbr{\hat Z^{(1)}_n + r_n^A \in A_n}
        }{
            \Pr \rbr{\hat Z^{(1)}_n + r_n^A \in A_n}
        } 
    }\label{eq:apply_teq}\\
    &= 
    \frac{
        \abr{\Pr \rbr{\hat Z^{(2)}_n + r_n^B \in B_n, \hat Z^{(1)}_n + r_n^A \in A_n}
        -
        \Pr \rbr{Z^{(2)}_n \in B_n, Z^{(1)}_n \in A_n}
        }}{
        \Pr \rbr{\hat Z^{(1)}_n + r_n^A \in A_n}
        }\nonumber\\
    &\quad+
        \frac{\abr{
            \Pr \rbr{Z^{(2)}_n \in B_n} \Pr\rbr{ Z^{(1)}_n \in A_n}
            -
            \Pr \rbr{Z^{(2)}_n \in B_n}
            \Pr \rbr{\hat Z^{(1)}_n + r_n^A \in A_n}
        }}{
            \Pr \rbr{\hat Z^{(1)}_n + r_n^A \in A_n}
        }\label{eq:combine_shared_denom}\\
    &\leq
    \frac{
        \abr{\Pr \rbr{\hat Z^{(2)}_n + r_n^B \in B_n, \hat Z^{(1)}_n + r_n^A \in A_n}
        -
        \Pr \rbr{Z^{(2)}_n \in B_n, Z^{(1)}_n \in A_n}
        }}{
        \Pr \rbr{\hat Z^{(1)}_n + r_n^A \in A_n}
        }\nonumber\\
    &\quad+
        \frac{\abr{
            \Pr \rbr{Z^{(1)}_n \in A_n}
            -
            \Pr \rbr{\hat Z^{(1)}_n + r_n^A \in A_n}
        }}{
            \Pr \rbr{\hat Z^{(1)}_n + r_n^A \in A_n}
        }\label{eq:factor_out}\\
    &\leq K\frac{
    O\rbr{\Pr\rbr{\nbr{\hat \Sigma_n - \Sigma_n}_F > \Delta \eigmin{\Sigma_n}} + p^{1/4} n^{-1/2}\eigmin{\Sigma_n}^{-3/2} + \Delta + p^{1/2}\varepsilon\rbr{\eigmin{\Sigma_n}^{1/2} + \eigmin{\Sigma_n}^{-1/2}}}}{\Pr \rbr{\hat Z^{(1)}_n + r_n^A \in A_n}}\nonumber\\
    &\quad+2K\frac{\Pr\rbr{\nbr{r_n^A}_2 > \varepsilon/2} + \Pr\rbr{\nbr{r_n^B}_2 > \varepsilon/2}}
    {\Pr \rbr{\hat Z^{(1)}_n + r_n^A \in A_n}},\label{eq:apply_bounds}
\end{align}
where \cref{eq:expand_conditional_prob} expands the definition of conditional probability, \cref{eq:apply_teq} applies the triangle inequality, \cref{eq:combine_shared_denom} uses the fact that $Z_n^{(1)}$ and $Z_n^{(2)}$ are independent, \cref{eq:factor_out} factors out $\Pr \rbr{Z^{(2)}_n \in B_n} \leq 1$, and \cref{eq:apply_bounds} bounds each of the two numerator using \cref{eq:boundo_on_b7}, in the latter numerator implicitly taking $B_n = \RR^p$.

\paragraph{Bounding the denominator:}
Lastly, we lower bound the denominator of \cref{eq:apply_bounds}, $\Pr\rbr{\hat Z^{(1)}_n + r_n^A \in A_n}$:
\begin{align}
    &\Pr\rbr{\hat Z^{(1)}_n + r_n^A \in A_n}\nonumber\\
    &\geq \Pr\rbr{\hat Z^{(1)}_n + r_n^A \in A_n, \nbr{r_n^A}_2 \leq \varepsilon}\nonumber\\
    &\geq \Pr\rbr{\hat Z^{(1)}_n \in A_n^{-\varepsilon}, \nbr{r_n^A}_2 \leq \varepsilon}\nonumber\\
    &= 1 - \Pr\rbr{\cbr{\hat Z^{(1)}_n \not\in A_n^{-\varepsilon}}\, \cup\, \cbr{\nbr{r_n^A}_2 > \varepsilon}}\nonumber\\
    &\geq 1 - \abr{\Pr\rbr{\hat Z^{(1)}_n \not\in A_n^{-\varepsilon}} - \Pr\rbr{Z^{(1)}_n \not\in A_n^{-\varepsilon}}} - \Pr\rbr{Z^{(1)}_n \not\in A_n^{-\varepsilon}} - \Pr\rbr{\nbr{r_n^A}_2 > \varepsilon}\nonumber\\
    &= \Pr\rbr{Z^{(1)}_n \in A_n^{-\varepsilon}} - \abr{\Pr\rbr{\hat Z^{(1)}_n \not\in A_n^{-\varepsilon}} - \Pr\rbr{Z^{(1)}_n \not\in A_n^{-\varepsilon}}} - \Pr\rbr{\nbr{r_n^A}_2 > \varepsilon}\nonumber\\
    &\geq \Pr\rbr{Z^{(1)}_n \in A_n^{-\varepsilon}} - \Pr\rbr{\nbr{\hat \Sigma_n - \Sigma_n}_F > \Delta \eigmin{\Sigma_n}} \nonumber\\
    &\quad- O\rbr{K p^{1/4} n^{-1/2}  \eigmin{\Sigma_n}^{-3/2} + \Delta} - \Pr\rbr{\nbr{r_n^A}_2 > \varepsilon},\label{eq:denom_lb}
\end{align}
where the last line uses the bound from \cref{lemma:thinning_core_stat} on the absolute difference.

\paragraph{A final bound on \cref{eq:conditional_diff}:}

Recalling the notation in the statement of \cref{lemma:conditional_berry_esseen}, we defined the rates
\begin{align*}
    E_1 &:= p^{1/4} n^{-1/2}\eigmin{\Sigma_n}^{-3/2} + \Delta + \Pr\rbr{\nbr{\hat \Sigma_n - \Sigma_n}_F > \Delta \eigmin{\Sigma_n}},\\
    E_2 &:= O\rbr{E_1 + p^{1/2}\varepsilon\rbr{\eigmin{\Sigma_n}^{1/2} + \eigmin{\Sigma_n}^{-1/2}}} + \Pr\rbr{\nbr{r_n^A}_2 > \tfrac\varepsilon2} + \Pr\rbr{\nbr{r_n^B}_2 > \tfrac\varepsilon2},\\
    E_3 &:= \Pr\rbr{Z^{(1)}_n \in A_n^{-\varepsilon}} - O\rbr{E_1} - \Pr\rbr{\nbr{r_n^A}_2 > \varepsilon}.
\end{align*}
The bound in \cref{eq:apply_bounds} is
\begin{equation*}
    \abr{
        \Pr \rbr{\hat Z^{(2)}_n + r^B_n \in B_n \middle| \hat Z^{(1)}_n + r^A_n \in A_n}
        - \Pr \rbr{Z^{(2)}_n \in B_n}
        }
        \leq
        \frac{
        K E_2}
        {\Pr\rbr{\hat Z^{(1)}_n + r_n^A \in A_n}}.
\end{equation*}
The bound in \cref{eq:denom_lb} implies
\begin{equation*}
    \Pr\rbr{\hat Z^{(1)}_n + r_n^A \in A_n} \geq E_3.
\end{equation*}
We must take care in the case that the denominator lower bound $E_3$ becomes negative, a case which implies $E_3 < K E_2$. If $E_3 < K E_2$, we can trivially upper bound \cref{eq:conditional_diff} by one. Thus, we arrive at our final bound
\begin{equation*}
    \abr{
        \Pr \rbr{\hat Z^{(2)}_n + r^B_n \in B_n \middle| \hat Z^{(1)}_n + r^A_n \in A_n}
        - \Pr \rbr{Z^{(2)}_n \in B_n}
        }
        \leq
        \frac{
        K E_2}
        {\max\cbr{K E_2, E_3}}.
\end{equation*}

\end{proof}

\subsection{Proof of \cref{lemma:selection_event}}
\begin{proof}
By first-order optimality conditions, the penalized estimator $\MestP$, defined in \cref{eq:MestP}, solves
\begin{equation}\label{eq:selection_zero}
    \sqrt{n} P_n \dot\ell_{\MestP} + \gamma \hat W_n + \eta = 0
\end{equation}
for some subderivative $\eta \in \partial \rho_\lambda\rbr{\MestP}$, where $\partial \rho_\lambda\rbr{\MestP}$ is the sub-differential of $\rho_\lambda(\theta)$ evaluated at $\theta = \MestP$.
Scaling \cref{cond:taylor_expansion} by $\sqrt{n}$ yields
\begin{equation}\label{eq:taylor}
    \sqrt{n} P_n \dot\ell_{\MestP} = \sqrt{n} P_n \dot \ell_{\Mstar} +  \sqrt{n} H_n (\MestP - \Mstar) + r_n^A\rbr{\MestP},
\end{equation}

where $r_n^A = o_p\rbr{1}$. Combining \cref{eq:taylor} with the optimality condition in \cref{eq:selection_zero}, there exists some $\eta \in \partial \rho_\lambda\rbr{\MestP}$ such that
\begin{equation*}
    \sqrt{n} P_n \dot \ell_{\Mstar} + \gamma \hat W_n + \sqrt{n} H_n (\MestP - \Mstar) + r_n^A\rbr{\MestP} + \eta = 0,
\end{equation*}
or equivalently
\begin{equation}\label{eq:penalized_taylor}
    \sqrt{n} P_n \dot \ell_{\Mstar} + \gamma \hat W_n - H_n \sqrt{n} \Mstar + r_n^A\rbr{\MestP} = - H_n \sqrt{n} \MestP - \eta.
\end{equation}

We now use \cref{eq:penalized_taylor} to characterize our selection event.
For any $E$, let $\RR_E := \cbr{\theta \in \RR^p : \supp(\theta) = E}$.

\begin{align}
    &\cbr{
        \supp\rbr{\MestP} = E
        }\nonumber\\
    &= \cbr{
        \MestP \in \RR^p_{E}
        }\nonumber\\
    &= \cbr{
        \sqrt{n}\MestP \in \RR^p_{E}
        }\nonumber\\
    &= \cbr{
        -\sqrt{n}H_n\MestP\in \cbr{-H_n \theta  : \theta \in \RR^p_{E}}
        }\nonumber\\
    &= \cbr{
        -\sqrt{n}H_n\MestP - \eta \in \cbr{-H_n \theta - \tilde \eta : \theta \in \RR^p_{E}, \tilde \eta \in \partial \rho_\lambda(\theta)}
        }\nonumber\\
    &= \cbr{
        \sqrt{n} P_n \dot \ell_{\Mstar} + \gamma \hat W_n - H_n \sqrt{n} \Mstar + r_n^A\rbr{\MestP} \in \cbr{-H_n \theta - \tilde \eta : \theta \in \RR^p_{E}, \tilde \eta \in \partial \rho_\lambda(\theta)}
        }\nonumber\\
    &= \cbr{
        \sqrt{n} P_n \dot \ell_{\Mstar} + \gamma \hat W_n + r_n^A\rbr{\MestP} \in \cbr{ H_n \sqrt{n} \Mstar - H_n \theta - \tilde \eta : \theta \in \RR^p_{E}, \tilde \eta \in \partial \rho_\lambda(\theta)}
        },\nonumber
\end{align}
where the fifth equality uses \cref{cond:unique_solution} and \cref{eq:penalized_taylor}.

The set $\cbr{ H_n \theta + \tilde \eta : \theta \in \RR^p_{E}, \tilde \eta \in \partial \rho_\lambda(\theta)}$ is the union of finitely many disjoint measurable and convex sets by \cref{cond:penalty_union_convex}. Thus, the set $\cbr{ H_n \sqrt{n} \Mstar - H_n \theta - \tilde \eta : \theta \in \RR^p_{E}, \tilde \eta \in \partial \rho_\lambda(\theta)}$ is too, as merely an affine transformation. Thus, we can characterize the selection event as
\begin{equation*}
    \cbr{
        \supp\rbr{\MestP} = E
        }
    = \cbr{ \hat Z_n^{(1)} + 
        r_n^A\rbr{\MestP} \in A_n^E
        },
\end{equation*}
where $\hat Z_n^{(1)} = \sqrt{n} P_n \dot\ell_{\Mstar} + \gamma \hat W_n$ by definition in \cref{eq:thinning_core_stat}, and
\begin{equation}\label{eq:An_def}
    A_n^E := \cbr{ H_n \sqrt{n} \Mstar - H_n \theta - \tilde \eta : \theta \in \RR^p_{E}, \tilde \eta \in \partial \rho_\lambda(\theta)}
\end{equation}
is the union of finitely many disjoint convex sets.

\end{proof}

\subsection{Proof of \cref{lemma:inference_event}}\label{sec:proof_lemme_inference_event}

\begin{proof}
Recall that $\MestE$ is the model-$E$ minimizer defined in \cref{eq:MestE}. Let 
\begin{equation}\label{eq:Mest_app}
    \Mest := \rbr{\MestE, 0_{|E|}}
\end{equation}
be the expanded $p$-dimensional sparse vector.
By first-order optimality conditions, $\Mest$ solves
\begin{equation}\label{eq:inference_zero_E}
    \rbr{\sqrt{n} P_n \dot\ell_{\Mest} - \tfrac1\gamma \hat W_{n}}_E = 0_{\abr{E}}.
\end{equation}
Equivalently, since this places no restriction on the $E^c$ coordinates, we can write
\begin{equation}\label{eq:inference_zero}
    \sqrt{n} P_n \dot\ell_{\Mest} - \tfrac1\gamma \hat W_n \in 0_{\abr{E}} \times \RR^{\abr{E^c}}.
\end{equation}
Scaling \cref{cond:taylor_expansion} by $\sqrt{n}$ yields
\begin{equation}\label{eq:taylorE}
    \sqrt{n} P_n \dot\ell_{\Mest} = \sqrt{n} P_n \dot \ell_{\Mstar} +  \sqrt{n} H_n (\Mest - \Mstar) + r_n^B(\Mest),
\end{equation}
where $r_n^B(\Mest) = o_p\rbr{1}$.
Combining \cref{eq:taylorE} with the optimality condition in \cref{eq:inference_zero},
\begin{equation}\label{eq:inference_taylor}
    \sqrt{n} P_n \dot \ell_{\Mstar} - \tfrac1\gamma \hat W_n + \sqrt{n} H_n (\Mest - \Mstar) + r_n^B(\Mest) \in 0_{\abr{E}} \times \RR^{\abr{E^c}}.
\end{equation}

We now expand the coverage event
\begin{align}
    &\cbr{\xi^\top \theta^{E,*}_{n, j} \in \CI_n^{\xi, \alpha}\rbr{\MestE, \Shat}}\nonumber\\
    &= \cbr{\xi^\top \MstarE \in \sbr{\xi^\top \MestE \pm n^{-1/2}z_{1-\alpha/2}\rbr{\xi^\top \Shat\xi}^{1/2}}}\nonumber\\
    &= \cbr{\sqrt{n} \xi^\top  \rbr{\MstarE - \MestE} \in \sbr{\pm z_{1-\alpha/2}\rbr{\xi^\top \Shat\xi}^{1/2}}}\nonumber\\
    &= \cbr{\sqrt{n} \rbr{\MstarE - \MestE} \in \cbr{v \in \RR^{|E|}: \xi^\top v \in \sbr{\pm z_{1-\alpha/2}\rbr{\xi^\top \Shat\xi}^{1/2}}}}\nonumber\\
    &= \cbr{\sqrt{n} \rbr{\Mstar - \Mest} \in \cbr{v \in \RR^{p}: \xi^\top v_{E} \in \sbr{\pm z_{1-\alpha/2}\rbr{\xi^\top \Shat\xi}^{1/2}}}}\nonumber\\
    &= \cbr{\sqrt{n} H_n \rbr{\Mstar - \Mest} \in \cbr{v \in \RR^{p}: \xi^\top \rbr{H_{n, E, E}}^{-1} v_{E} \in \sbr{\pm z_{1-\alpha/2}\rbr{\xi^\top \Shat\xi}^{1/2}}}}\nonumber\\
    &= \cbr{\sqrt{n}  P_n \dot\ell_{\MstarE} - \tfrac1\gamma \hat W_n + r_n^B\rbr{\Mest} \in \cbr{v \in \RR^{p}: \xi^\top \rbr{H_{n, E, E}}^{-1} v_{E} \in \sbr{\pm z_{1-\alpha/2}\rbr{\xi^\top \Shat\xi}^{1/2}}}}\nonumber\\ 
    &=\cbr{
    \hat Z_n^{(2)} + r_n^B\rbr{\Mest}
    \in B_n^E},\label{eq:inference_equiv}
\end{align}
where the fourth equality makes use of the definitions of $\Mstar$ in \cref{eq:Mstar} and $\Mest$ in \cref{eq:Mest_app}, $\hat Z_n^{(2)} = \sqrt{n} P_n \dot\ell_{\Mstar} - \tfrac1\gamma \hat W_n$ by definition in \cref{eq:thinning_core_stat}, and
\begin{equation}\label{eq:Bn_def}
    B_n^E := \cbr{v \in \RR^{p}: \xi^\top \rbr{H_{n, E, E}}^{-1} v_{E} \in \sbr{\pm z_{1-\alpha/2}\rbr{\xi^\top \Shat\xi}^{1/2}}}.
\end{equation}
Note that $B_n^E$ is a convex set since it the pre-image of a convex set with respect to an affine function.
\end{proof}

\subsection{Proof of \cref{thm:valid_inf_grad}}
\begin{proof}

Recalling the definition of $B_n^E$ in \cref{eq:Bn_def}, we lower bound the conditional probability
\begin{align}
    &\Pr \rbr{ \xi^\top \MstarE \in \CI_n^{\xi, \alpha}\rbr{\MestE, \Shat} \middle|\, \supp\rbr{\MestP} = E}\nonumber\\
    &=
    \Pr \rbr{ \xi^\top \MstarE \in \CI_n^{\xi, \alpha}\rbr{\MestE, \Shat} \middle|\, \supp\rbr{\MestP} = E} - \Pr\rbr{Z_n^{(2)} \in B_n^E}\nonumber\\
    &\quad + \Pr\rbr{Z_n^{(2)} \in B_n^E}
    - \Pr\rbr{\hat Z_n^{(2)} + r_n^B\rbr{\Mest} \in B_n^E} + \Pr\rbr{\hat Z_n^{(2)} + r_n^B\rbr{\Mest} \in B_n^E}\nonumber\\
    &\geq \Pr\rbr{\hat Z_n^{(2)} + r_n^B\rbr{\Mest} \in B_n^E} - \abr{\Pr\rbr{Z_n^{(2)} \in B_n^E} - \Pr\rbr{\hat Z_n^{(2)} + r_n^B\rbr{\Mest} \in B_n^E}}\label{eq:to_bound2}\\
    &\quad-\abr{\Pr \rbr{ \xi^\top \MstarE \in \CI_n^{\xi, \alpha}\rbr{\MestE, \Shat} \middle|\, \supp\rbr{\MestP} = E}
    - \Pr\rbr{Z_n^{(2)} \in B_n^E}}\label{eq:to_bound1}.
\end{align}
By \cref{lemma:inference_event}, as in \cref{eq:inference_equiv}, and \cref{cond:sandwich_est},
\begin{equation*}
    \lim_{n \to \infty}\Pr\rbr{\hat Z_n^{(2)} + r_n^B\rbr{\Mest} \in B_n^E}
    = \lim_{n \to \infty} \Pr\rbr{\xi^\top \theta^{E,*}_{n, j} \in \CI_n^{\xi, \alpha}\rbr{\MestE, \Shat}} \geq 1-\alpha.
\end{equation*}
It remains to show that each of the remaining terms in \cref{eq:to_bound2} and \cref{eq:to_bound1} are $o(1)$.

\paragraph{Bounding \cref{eq:to_bound1}}
We apply \cref{lemma:conditional_berry_esseen} to the sets $A_n^E$ introduced in \cref{lemma:selection_event}, and $B_n^E$ introduced in \cref{lemma:inference_event}.
For clarity, we begin by restating \cref{lemma:conditional_berry_esseen} using the notation of \cref{lemma:selection_event} and \cref{lemma:inference_event}.

Since $A_n^E$ (defined in \cref{eq:An_def}) is the union of $2^{|E|}$ disjoint convex sets, and $B_n^E$ (defined in \cref{eq:Bn_def}) is a convex set, $A_n^E \times B_n^E$ can be written as $K = 2^{|E|}$ convex sets.
Let $\Delta > 0$, and $\varepsilon > 0$ be less than the smallest inradii of these $K$ sets, defined in \cref{eq:inradius}. 
For brevity, define the terms
\begin{align}
    E_1 &:= p^{1/4} n^{-1/2}\eigmin{\Sigma_n}^{-3/2} + \Delta + \Pr\rbr{\nbr{\hat \Sigma_n - \Sigma_n}_F > \Delta \eigmin{\Sigma_n}},\label{eq:E1}\\
    E_2 &:= O\rbr{E_1 + p^{1/2}\varepsilon\rbr{\eigmin{\Sigma_n}^{1/2} + \eigmin{\Sigma_n}^{-1/2}}} + \Pr\rbr{\nbr{r_n^A}_2 > \tfrac\varepsilon2} + \Pr\rbr{\nbr{r_n^B}_2 > \tfrac\varepsilon2},\label{eq:E2}\\
    E_3 &:= \Pr\rbr{Z^{(1)}_n \in A_n^{-\varepsilon}} - O\rbr{E_1} - \Pr\rbr{\nbr{r_n^A}_2 > \varepsilon},\label{eq:E3}
\end{align}
where $r_n^A \rbr{\MestP}$ defined in \cref{eq:taylor}, $r_n^B \rbr{\Mest}$ defined in \cref{eq:taylorE}, and $\rbr{A_n^E}^{-\varepsilon}$ is the $\varepsilon$-shrinkage (defined in \cref{eq:epsilon_shrinkage}) of $A_n^E$.
If Condition \ref{cond:berry_esseen} holds, then
\begin{align}\label{eq:master_stat_app}
    \abr{
    \Pr \rbr{\hat Z^{(2)}_n + r_n^B \rbr{\Mest} \in B_n^E \middle| \hat Z^{(1)}_n + r_n^A \rbr{\MestP} \in A_n^E}
    - \Pr \rbr{Z^{(2)}_n \in B_n^E}
    }
    \leq
    \frac{
    KE_2}
    {\max\cbr{E_3, KE_2}}.
\end{align}

It remains to show that the bound vanishes. First, we establish that for large enough $n$ the inradii are bounded above zero and hence we can fix an $\varepsilon > 0$. To show this, it suffices to show that $B_n^E$ converges to a fixed set, and that each of the $2^K$ constituent sets of $A_n^E$ converge to fixed sets with non-empty interiors.
 
To begin, recall from \cref{eq:Bn_def} that
\begin{equation*}
    B_n^E := \cbr{v \in \RR^{p}: \xi^\top \rbr{H_{n, E, E}}^{-1} v_{E} \in \sbr{\pm z_{1-\alpha/2}\rbr{\xi^\top \Shat\xi}^{1/2}}}.
\end{equation*}
By \cref{ass:existing_limits}, $H_{n, E, E} \to H_{E, E}$ and $S_n^E \to S^E := \rbr{\hat H_{E, E}}^{-1} \hat \Sigma_{E, E} \rbr{\hat H_{E, E}}^{-1}$, and thus
\begin{equation*}
    B_n^E \to B^E := \cbr{v \in \RR^{p}: \xi^\top \rbr{H_{E, E}}^{-1} v_{E} \in \sbr{\pm z_{1-\alpha/2}\rbr{\xi^\top S^E \xi}^{1/2}}}.
\end{equation*}
In the $E^c$ coordinates, the elements of $B^E$ are unrestricted. By \cref{ass:existing_limits}, $H$ and $\Sigma$ are positive definite and so $\sbr{\pm z_{1-\alpha/2}\rbr{(1 + \tfrac{1} {\gamma^2})\xi^\top S^E\xi}^{1/2}}$ contains a non-empty open set. Since the maximum eigenvalue of $\rbr{H_{E, E}}^{-1}$ is also bounded away from zero by \cref{ass:existing_limits}, the elements of $B^E$ contain a non-empty open set in the $E$ coordinates with inradius bounded away from zero. Thus, $B^E$ contains a non-empty open set.

Now, recall from \cref{eq:An_def} that
\begin{equation*}
    A_n^E := \cbr{ H_n \sqrt{n} \Mstar - H_n \theta - \tilde \eta : \theta \in \RR^p_{E}, \tilde \eta \in \partial \rho_\lambda(\theta)}.
\end{equation*}
By Assumption \ref{ass:existing_limits}(i) and \ref{ass:existing_limits}(iii),
\begin{equation*}
    A_n^E \to A^E := \cbr{ H \theta^{E, *} - H \theta - \tilde \eta : \theta \in \RR^p_{E}, \tilde \eta \in \partial \rho_\lambda(\theta)},
\end{equation*}
which by \cref{cond:penalty_union_convex} can be written as the union of finitely many disjoint convex sets with non-empty interiors.

It remains to show that the upper bound in \cref{eq:master_stat_app} asymptotically vanishes. We examine each of the three constituent terms $E_1$, $E_2$, and $E_3$. First, recall from \cref{eq:E1} that
\begin{equation*}
    E_1 := p^{1/4} n^{-1/2}\eigmin{\Sigma_n}^{-3/2} + \Delta + \Pr\rbr{\nbr{\hat \Sigma_n - \Sigma_n}_F > \Delta \eigmin{\Sigma_n}}.
\end{equation*}
Since $\eigmin{\Sigma_n}^{-3/2} \to \eigmin{\Sigma}^{-3/2} < \infty$ by \cref{ass:existing_limits}, clearly $p^{1/4} n^{-1/2}\eigmin{\Sigma_n}^{-3/2} \to 0$. Similarly, since $\eigmin{\Sigma_n} \to \eigmin{\Sigma} > 0$ by \cref{ass:existing_limits}, by \cref{cond:Sigma_est} and Markov's inequality, for any $\Delta > 0$
\begin{equation*}
    \Pr\rbr{\nbr{\hat \Sigma_n - \Sigma_n}_F > \Delta \eigmin{\Sigma_n}} = o(1).
\end{equation*}
Since $\Delta$ was arbitrary, we can choose a sequence $\Delta_n$ such that
\begin{equation*}
    \Delta_n + \Pr\rbr{\nbr{\hat \Sigma_n - \Sigma_n}_F > \Delta_n \eigmin{\Sigma_n}} = o(1).
\end{equation*}

Now recall from \cref{eq:E2} that
\begin{equation*}
    E_2 := O\rbr{E_1 + p^{1/2}\varepsilon\rbr{\eigmin{\Sigma_n}^{1/2} + \eigmin{\Sigma_n}^{-1/2}}} + \Pr\rbr{\nbr{r_n^A}_2 > \tfrac\varepsilon2} + \Pr\rbr{\nbr{r_n^B}_2 > \tfrac\varepsilon2}.
\end{equation*}
By \cref{lemma:selection_event} and \cref{lemma:inference_event}, for any $\varepsilon > 0$,
\begin{equation*}
    \Pr\rbr{\nbr{r_n^A \rbr{\MestP}}_2 > \tfrac\varepsilon2} + \Pr\rbr{\nbr{r_n^B \rbr{\Mest}}_2 > \tfrac\varepsilon2} = o(1).
\end{equation*}
Since $\varepsilon$ was arbitrarily small and $\eigmin{\Sigma_n} \to \eigmin{\Sigma}$ by \cref{ass:existing_limits}, $\varepsilon\rbr{\eigmin{\Sigma_n}^{1/2} + \eigmin{\Sigma_n}^{-1/2}} \to 0$ and so $E_2 \to 0$. 

Lastly, recall from \cref{eq:E3} that
\begin{equation*}
    E_3 := \Pr\rbr{Z^{(1)}_n \in A_n^{-\varepsilon}} - O\rbr{E_1} - \Pr\rbr{\nbr{r_n^A}_2 > \varepsilon}.
\end{equation*}
By the logic used earlier to analyze $E_2$, $O\rbr{E_1} - \Pr\rbr{\nbr{r_n^A}_2 > \varepsilon} \to 0$, and so it remains to study $\Pr\rbr{Z^{(1)}_n \in \rbr{A_n^E}^{-\varepsilon}}$. Since $\varepsilon$ is less than the inradius of the limit $A^E$ of $A_n^E$ and $\Sigma_n$ converges to the positive definite $\Sigma$. It follows that
\begin{equation*}
    \liminf_{n \to \infty} \Pr\rbr{Z^{(1)}_n \in \rbr{A_n^E}^{-\varepsilon}} > 0.
\end{equation*}
This establishes that the upper bound in \cref{eq:master_stat_app} asymptotically vanishes:
\begin{equation*}
    \lim_{n \to \infty} \frac{E_2}{\max\cbr{E_2, E_3}} = 0.
\end{equation*}

\paragraph{Bounding \cref{eq:to_bound2}}

We need to show that
\begin{equation*}
    \lim_{n \to \infty}\abr{\Pr\rbr{\hat Z_n^{(2)} + r_n^B\rbr{\Mest} \in B_n^E} - \Pr\rbr{Z_n^{(2)} \in B_n^E}} = 0.
\end{equation*}

\cref{eq:boundo_on_b7} upper bounded this absolute difference with $E_2$ defined in \cref{eq:E2}. As previously established, $\lim_{n \to \infty}E_2 = 0$.

\end{proof}

\subsection{Proof of \cref{thm:valid_inf_outcome}}

Under \cref{cond:affine}, it is clear to see that
\begin{equation*}
    \ell_\theta\rbr{Y_i - \tfrac1\gamma \bar W_{n, i}} = \ell_\theta(Y_i) - \tfrac1\gamma \theta^\top A_i \bar W_{n, i},
\end{equation*}
where $A_i \bar W_{n, i} \sim \Norm\rbr{0, A_i \bar\Sigma_{n, i} A_i^\top }$.
Thus, analyzing $\MestEoutcome$,
\begin{align*}
    \MestEoutcome
    &:= \argmin_{\theta \in \RR^{\abr{E}}} \cbr{\frac1n \sum_{i=1}^n \ell_{\rbr{\theta, 0_{p-\abr{E}}}}\rbr{Y_i - \tfrac1\gamma  \bar W_{n, i}}}\\
    &= \argmin_{\theta \in \RR^{\abr{E}}} \cbr{\frac1n \sum_{i=1}^n 
    \sbr{\ell_{\rbr{\theta, 0_{p-\abr{E}}}}(Y_i) - \tfrac1\gamma \rbr{\theta, 0_{p-\abr{E}}}^\top A_i \bar W_{n, i}}}\\
    &= \argmin_{\theta \in \RR^{\abr{E}}} \cbr{\frac1n \sum_{i=1}^n 
    \ell_{\rbr{\theta, 0_{p-\abr{E}}}}(Y_i) - \rbr{\theta, 0_{p-\abr{E}}}^\top \frac1n \sum_{i=1}^n \tfrac1\gamma A_i \bar W_{n, i}}\\
    &=\argmin_{\theta \in \RR^{\abr{E}}} \cbr{P_n \ell_{\rbr{\theta, 0_{p-\abr{E}}}} - \tfrac1\gamma n^{-1/2} \rbr{\theta, 0_{p-\abr{E}}}^\top \hat W_{n}}\\
    &=\argmin_{\theta \in \RR^{\abr{E}}} \cbr{P_n \ell_{\rbr{\theta, 0_{p-\abr{E}}}} - \tfrac1\gamma n^{-1/2} \theta^\top \hat W_{n, E}}
    =: \MestE,
\end{align*}
if $\hat W_n = n^{-1/2} \sum_{i=1}^n A_i \bar W_{n, i}$. A similar proof shows that $\MestP = \MestPoutcome$. By equality, we can apply \cref{thm:valid_inf_grad}.

\subsection{Proof of \cref{prop:taylor_expansion_exists}}
\begin{proof}
    We first establish that $\MestP$, defined in \cref{eq:MestP}, converges to the unique (by assumption) population quantity
    \begin{equation}\label{eq:MstarP}
        \MstarP := \argmin_{\theta \in \Theta} \EE P_n \ell_\theta,
    \end{equation}
    and that we have consistency of $\MestE \inprobability \MstarE$, where these two quantities were defined in \cref{eq:MestE} and \cref{eq:MstarE}.
    
    By Theorem 5.7 of \citet{vaart_asymptotic_1998}, since (i) our estimands are unique, (ii) we have uniform convergence, and $\MestP$ is a near minimizer of $\EE P_n \ell_\theta$ (as the penalty and randomized terms are $o_p(1)$), it follows that $\MestE \inprobability \MstarE$ and $\MestP \inprobability \MstarP$.

    Having established consistency, we are ready to show that \cref{cond:taylor_expansion} is met. The following results and conditions are standard (see \citet[Theorem 5.41]{vaart_asymptotic_1998}). By existence of third derivatives, for any $\theta$ we can construct the following Taylor expansion of $P_n \dot\ell_{\theta}$ around $\Mstar$ defined in \cref{eq:Mstar}:
    \begin{align}
        P_n \dot\ell_{\theta}
        &=P_n \dot \ell_{\Mstar} + \sbr{P_n \ddot\ell_{\Mstar}} (\theta - \Mstar) + \tfrac12 (\theta - \Mstar)^\top \sbr{P_n \dddot\ell_{\tilde \theta_n}} (\theta - \Mstar)\label{eq:taylor1}\\
        &=  P_n \dot \ell_{\Mstar} + H_n (\theta - \Mstar) + r_n(\theta),
        \label{eq:taylor_selection}
    \end{align}
    where
    \begin{equation}\label{eq:remainder}
        r_n(\theta) := \rbr{P_n \ddot\ell_{\Mstar} - H_n + \tfrac12 (\theta - \Mstar)^\top \sbr{P_n \dddot\ell_{\tilde \theta_n}}} (\theta - \Mstar)
    \end{equation}
    for some $\tilde \theta_n$ which lies on the line segment between $\theta$ and $\Mstar$.

    We now characterize terms in the above expressions.
    By the law of large numbers, $P_n \ddot\ell_{\Mstar} - H_n = o_p(1)$ where $H_n := \EE P_n \ddot\ell_{\Mstar}$. Lastly, provided that $\dddot \ell_{\tilde \theta_n}(Y_i)$ is dominated by some integrable function $\dddot\ell(Y_i)$ for any $\tilde \theta_n$ in some neighborhood $\Ncal_n$,
    \begin{equation}\label{eq:neighborhood_bound}
       \sup_{\tilde \theta_n \in \Ncal_n} \nbr{P_n \dddot\ell_{\tilde \theta_n}}_F
       \leq \nbr{P_n \dddot\ell}_F = O_p(1)
    \end{equation}
    by the law of large numbers. \cref{eq:neighborhood_bound} will clearly hold if such a $\dddot \ell_{\tilde \theta_n}(Y_i)$ exists for all $\tilde \theta_n \in \Theta$. Alternatively, \cref{eq:neighborhood_bound} will hold with probability tending to one if $\Ncal_n$ is an appropriate neighborhood around $\MstarP$ in the case of a Taylor expansion of $P_n\dot\ell_{\MestP}$, or neighborhood around $\Mstar$ in the case of $P_n\dot\ell_{\Mest}$.

    \paragraph{Analyzing the inference estimator $\Mest$:}

    The first order optimality condition of \cref{eq:MestE} implies that
    \begin{equation*}
        \sqrt{n} \rbr{P_n \dot\ell_{\Mest} - \tfrac1\gamma \hat W_{n}}_E = 0.
    \end{equation*}
    Replacing $P_n \dot\ell_{\Mest}$ with its Taylor expansion in \cref{eq:taylor_selection}, we arrive at the classical results $\sqrt{n}\rbr{\MestE - \MstarE} = O_p(1)$. This turn implies that $\Mest - \Mstar = O_p\rbr{n^{-1/2}}$, and furthermore that
    \begin{equation*}
        r_n\rbr{\Mest} := \rbr{P_n \ddot\ell_{\Mstar} - H_n + \tfrac12 (\Mest - \Mstar)^\top \sbr{P_n \dddot\ell_{\tilde \theta}}} (\Mest - \Mstar) = o_p\rbr{n^{-1/2}}.
    \end{equation*}

    \paragraph{Analyzing the selection estimator $\MestP$:}

    We now turn to analyzing $r_n\rbr{\MestP}$. By the first order optimality condition of \cref{eq:MestP},
    \begin{equation*}
        \sqrt{n} P_n \dot\ell_{\MestP} + \gamma \hat W_n + \eta = 0,
    \end{equation*}
    for some $\eta \in \partial \rho_\lambda\rbr{\MestP}$.
    
    As a first step, show that under a local asymptotic setting, $\sqrt{n} \MestP = O_p(1)$. We replace $P_n \dot\ell_{\MestP}$ with its Taylor expansion around $\MstarP$ in \cref{eq:MstarP}:
    \begin{equation*}
        \sqrt{n} P_n \dot\ell_{\MstarP} + \gamma \hat W_n + \eta +  \sqrt{n} (\MestP - \MstarP) \sbr{P_n \ddot\ell_{\MstarP} + \tfrac12 (\MestP - \MstarP)^\top \sbr{P_n \dddot\ell_{\tilde \theta_n'}}} = 0
    \end{equation*}
    for some $\tilde \theta_n'$ between $\MestP$ and $\MstarP$. Since $\hat W_{n} = O_p(1)$ and $\eta$ is bounded, under typical regularity conditions $\sqrt{n} P_n \dot\ell_{\MstarP} = O_p(1)$ and it follows that $\sqrt{n} (\MestP - \MstarP) = O_p(1)$. Our assumed local alternative setting where $\sqrt{n} \MstarP = O(1)$ implies that $\sqrt{n} \MestP = O_p(1)$.

    We now demonstrate that $r_n\rbr{\MestP} = o_p\rbr{n^{-1/2}}$.
    We replace $P_n \dot\ell_{\MestP}$ with its Taylor expansion in \cref{eq:taylor1}:
    \begin{equation*}
        \sqrt{n} P_n \dot\ell_{\Mstar} + \gamma \hat W_n + \eta +  \sqrt{n} (\MestP - \Mstar) \sbr{P_n \ddot\ell_{\Mstar} + \tfrac12 (\MestP - \Mstar)^\top \sbr{P_n \dddot\ell_{\tilde \theta_n}}} = 0
    \end{equation*}
    for some $\tilde \theta_n$ between $\MestP$ and $\Mstar$. By \cref{ass:existing_limits}[i], we know that $\sqrt{n} \Mstar = O(1)$ and thus that $\sqrt{n} (\MestP - \Mstar) = O_p(1)$.
    This in turn implies that
    \begin{equation*}
        \sqrt{n} r_n\rbr{\MestP} := \sqrt{n} (\MestP - \Mstar) \rbr{P_n \ddot\ell_{\Mstar} - H_n + \tfrac12 (\MestP - \Mstar)^\top \sbr{P_n \dddot\ell_{\tilde \theta_n}}} = o_p\rbr{1}
    \end{equation*}
    as desired, since $P_n \ddot\ell_{\Mstar} - H_n = o_p(1)$ by the law of large numbers and $P_n \dddot\ell_{\tilde \theta_n} = O_p(1)$.

\end{proof}

\subsection{Proof of \cref{prop:is_valid_penalty}}

\begin{proof}
    Recall that for $\rho_\lambda(\theta) := \lambda\nbr{\theta}_1$,  for any $j \in [p]$,
    \begin{equation*}
        [\partial \rho_\lambda(\theta)]_j :=
        \begin{cases}
            \lambda\,\sign(\theta_j) & \text{if } \theta_j \neq 0\\
          [-\lambda, \lambda] & \text{if } \theta_j = 0
        \end{cases}.
    \end{equation*}

    For a fixed $\lambda$, this is clearly bounded and thus satisfies \cref{cond:penalty_union_convex}(i). It remains to show that \cref{cond:penalty_union_convex}(ii) holds, i.e.,
    for any positive definite matrix $H \in \RR^{p \times p}$,
    \begin{equation}\label{eq:L_1_set}
        \cbr{H\theta + \eta : \supp\rbr{\theta} = E, \eta \in \partial \rho_\lambda(\theta)}
    \end{equation}
    has a non-empty interior and can be written as the union of finitely many disjoint convex sets.

    To begin, let $s_E \in \cbr{-1, 1}^{|E|}$ be one of the $2^{|E|}$ possible combinations of signs of $\theta_E$, $\RR_{E, s_E}^p := \cbr{\theta : \supp(\theta) = E, \sign(\theta_E) = s_E}$ be an orthant, and $\Ecal_{E, s_E} := \cbr{\eta : \eta \in \partial \rho_\lambda(\theta), \theta \in \RR_{E, s_E}^p}$ be a collection of subgradients.
    We can then write
    \begin{align*}
        &\cbr{H\theta + \eta : \supp\rbr{\theta} = E, \eta \in \partial \rho_\lambda(\theta)}\\
        &= \bigcup_{s_E \in \cbr{-1, 1}^{|E|}} \cbr{H\theta + \eta : \theta \in \RR_{E, s_E}^p, \eta \in \partial \rho_\lambda(\theta)} \\
        &= \bigcup_{s_E \in \cbr{-1, 1}^{|E|}} \cbr{H\theta : \theta \in \RR_{E, s_E}^p} + \Ecal_{E, s_E},
    \end{align*}
    where the final sum denotes the Minkowski sum, i.e., the pairwise sum of all points in the two sets.

    First, we can see that $\cbr{H\theta : \theta \in \RR_{E, s_E}^p}$ is an affine transformation of a convex set, $\Ecal_{E, s_E}$ is a convex set, and so their Minkowski sum is a convex set. The union encompasses $2^{|E|}$ such convex sets. To show that these sets are disjoint, for sake of contradiction assume that there is a point lying in two such sets. This implies that there exists some $s_E \neq s_E' \in \cbr{-1, 1}^{|E|}$ and $\theta \in \RR^p_{E, s_E}, \theta' \in \RR^p_{E, s_E'}$ such that $H_{E, E} \theta_E + \lambda s_E = H_{E, E} \theta_E' + \lambda s_E'$. Rearranging,
    \begin{equation*}
        H_{E, E} (\theta_E - \theta_E') = \lambda(s_E' - s_E).
    \end{equation*}
    Since $H$ is positive definite, $(\theta_E - \theta_E')^\top H_{E, E} (\theta_E - \theta_E') > 0$. However, $\lambda(\theta_E - \theta_E')^\top (s_E' - s_E) < 0$ by definition of the signs. Thus, we have a contradiction and so \cref{eq:L_1_set} is the union of $2^{|E|}$ disjoint convex sets.

    It remains to show that these sets have non-empty interiors, for which it suffices to show that $\cbr{H\theta : \theta \in \RR_{E, s_E}^p} + \Ecal_{E, s_E}$ contains a non-empty open set.
    In the $E$ coordinates of $\cbr{H\theta : \theta \in \RR_{E, s_E}^p}$, since $\RR_{E, s_E}^p$ contains a non-empty open set, and $H$ is positive definite, $\cbr{(H\theta)_E : \supp\rbr{\theta} = E, \sign(\theta_E) = s_E}$ also contains a non-empty open set. In the $E^c$ coordinates, we can clearly see that $\Ecal_{E, s_E}$ contains a non-empty open set and thus the Minkowski sum $\cbr{(H\theta) : \theta \in \RR_{E, s_E}^p} + \Ecal_{E, s_E}$ does too.
\end{proof}

\section{Supplemental results}

\subsection{An algorithm implementing \cref{thm:valid_inf_grad} for $L_1$-penalized GLMs}\label{sec:app_grad_algo}

Here, we additionally provide the procedure for obtaining valid inference using Algorithm \cref{thm:valid_inf_grad}.

\begin{mybox}[label=algo:valid_inf_grad]{implementing \cref{thm:valid_inf_grad} for $L_1$-penalized GLMs}
\textbf{Input:} Data $\cbr{Y_i, X_i}$, penalty $\lambda$, information split $\gamma$.
\begin{enumerate}
    \item $\tilde\theta_n \leftarrow
    \argmin_{\theta \in \RR^{p}} P_n \ell_{\theta}$.
    \item Sample noise $\hat W_{n} \leftarrow \Norm\rbr{0, \frac{1}{n}\sum_{i=1}^n \hat \alpha\rbr{\tilde\theta_n} \,\ddot b \rbr{X_i^\top \tilde\theta_n} X_i  X_i^\top}$.
    \item Estimate $\MestP$ as in \cref{eq:MestP} and define $E := \supp\rbr{\MestP}$.
    \item Estimate $\MestE$ as in \cref{eq:MestE}.
    \item Compute the modified sandwich variance estimator:
    \begin{equation*}
        \Shat \leftarrow \left( H_{n, E, E}\rbr{\MestE} \right)^{-1} \tilde \Sigma_{n, E, E} \left( H_{n, E, E}\rbr{\MestE} \right)^{-1}
    \end{equation*}
    where
    \begin{equation}\label{eq:modified_sandwich}
        \tilde \Sigma_{n, E, E} \leftarrow \frac{1}{\gamma^2} \Var\rbr{\hat W_{n, E, E}} + \frac1n \sum_{i=1}^n \rbr{Y_i - \dot b(X_{i, E}^\top \MestE)}^2  X_{i, E} X_{i, E}^\top.
    \end{equation}
    \end{enumerate}
\textbf{Output:} A confidence interval centered at $\MestE$,
    \begin{equation*}
        \CI^{\xi, \alpha}_n\rbr{\MestE, \Shat} = \sbr{\xi^\top \MestE \pm n^{-1/2} z_{1-\alpha/2}\rbr{ \xi^\top \Shat \xi}^{1/2}}.
    \end{equation*}
\end{mybox}

Algorithm \ref{algo:valid_inf_grad} is quite similar to Algorithm \ref{algo:valid_inf_outcome}. A minor but key differences lies in the sandwich variance estimator $\Shat$ in Step 5. Algorithm \ref{algo:valid_inf_outcome} computes the ``meat" of the sandwich using squared residuals relative to the noisy outcomes $\cbr{Y_i - \tfrac{1}{\gamma} \tilde W_{n, i}}_{i=1}^n$; this is the classical approach in the M-estimation literature \citep{white_heteroskedasticity-consistent_1980}. However, \cref{thm:valid_inf_grad} does not use noisy outcomes and so there is not an immediately natural estimator $\Shat$ available satisfying \cref{cond:sandwich_est}. As a solution, we use the estimator in \cref{eq:modified_sandwich} which involves squared residuals relative to the observed outcomes $\cbr{Y_i}_{i=1}^n$, and then additionally factor in the variance of the noise $\hat W_n$. The justification is as follows.

In the setting of \cref{thm:valid_inf_outcome}, the canonical sandwich estimator in Algorithm \ref{algo:valid_inf_outcome} takes the form
\begin{equation*}
    \frac1n \sum_{i=1}^n \rbr{Y_i - \frac1\gamma \tilde W_{n, i} - \dot b(X_{i, E}^\top \MestE)}^2  X_{i, E} X_{i, E}^\top.
\end{equation*}
Expanding the quadratic, this equals
\begin{align*}
    \frac1n \sum_{i=1}^n \sbr{\rbr{Y_i - \dot b(X_{i, E}^\top \MestE)}^2 - 2 \frac1\gamma \tilde W_{n, i} Y_i + 2 \frac1\gamma \tilde W_{n, i} \dot b(X_{i, E}^\top \MestE) + \frac{1}{\gamma^2} \tilde W_{n, i}^2 }X_{i, E} X_{i, E}^\top.
\end{align*}
Examining the terms in the quadratic expansion, note that
\begin{equation*}
    \EE\sbr{\tilde W_{n, i} Y_i} = \EE\sbr{\tilde W_{n, i} \dot b(X_{i, E}^\top \MestE)} = 0,
\end{equation*}
since $\tilde W_{n, i}$ is scaled mean zero noise sampled independently from $Y_i$. Furthermore, 
\begin{equation*}
    \frac1n \sum_{i=1}^n X_{i, E} \EE\sbr{\tilde W_{n, i}^2} X_{i, E}^\top = \frac1n \sum_{i=1}^n X_{i, E} \Var\rbr{\tilde W_{n, i}} X_{i, E}^\top = \Var\rbr{\hat W_{n, E, E}},
\end{equation*}
since $\hat W_{n} := n^{-1/2} \sum_{i=1}^n X_i \tilde W_{n, i}$ as specified in \cref{thm:valid_inf_outcome}.
Thus,
\begin{align*}
    &\frac1n \sum_{i=1}^n \rbr{Y_i - \frac1\gamma \tilde W_{n, i} - \dot b(X_{i, E}^\top \MestE)}^2  X_{i, E} X_{i, E}^\top\\
    &=\frac{1}{\gamma^2} \Var\rbr{\hat W_{n, E, E}} + \frac1n \sum_{i=1}^n \rbr{Y_i - \dot b(X_{i, E}^\top \MestE)}^2  X_{i, E} X_{i, E}^\top + o_p(1),
\end{align*}
and so the the sandwich estimators in Algorithms \ref{algo:valid_inf_outcome} and \ref{algo:valid_inf_grad} are equivalent up to an $o_p(1)$ term.

\subsection{Additional simulation results}

While data thinning is not applicable to Bernoulli data, it can be applied to binomial data~\citep{neufeld_data_2024, dharamshi_generalized_2025}. In the case of logistic regression with discrete covariates, it may be possible to group the Bernoulli observations to form binomial observations. Binomial data thinning of these binomial observations thus permits valid inference after the lasso in a logistic regression model.

In \cref{fig:logistic_discrete}, we revisit the logistic regression experiment in \cref{fig:logistic}. However, for each sample size $n$, we duplicate the same unique $50$ feature vectors to allow the observations to be binned. For example, when $n=200$, there are four observations for each of the $50$ unique feature vectors. We see that binomial data thinning and sample splitting perform similarly, while score thinning yields slightly narrower confidence intervals and lower FDR. Meanwhile, the fully conditional approach of \citet{huang_selective_2024} performs poorly at smaller sample sizes.

\begin{figure}[!htb]
    \centering
    \includegraphics[width=1.0\linewidth]{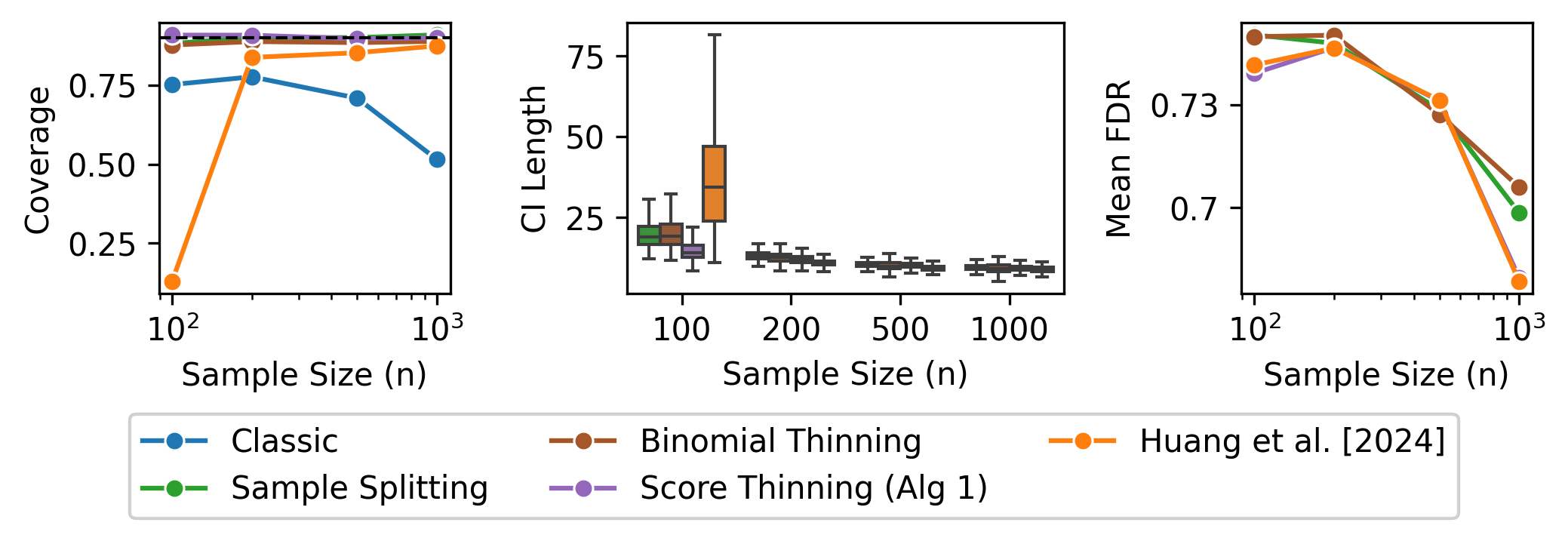}
    \caption{We simulate outcomes from a logistic regression model. There are $50$ unique feature vectors, and so binomial thinning is applicable to groups of size $n / 50$. Inference is conducted on the model selected using $L_1$-penalized logistic regression. Confidence intervals are computed for all coefficients in the selected model. Results are aggregated across $1000$ simulated datasets. Since the classical approach (which reuses the data for selection and inference) fails to attain valid coverage, we omit it from the center and right-hand panels.}
    \label{fig:logistic_discrete}
\end{figure}

\end{document}